\newcommand {\ctn}{\citeasnoun} 
\newcommand {\ctp}{\cite}       
\numberwithin{equation}{section}
\theoremstyle{plain}
\newtheorem{theorem}{Theorem}[section]
\newcommand{\bv}{\boldsymbol{v}}
\begin{document}
\title{\large{\bf{Bayesian Density Estimation via Multiple Sequential Inversions of 2-D Images with Application in Electron Microscopy}}\protect} 
\author{{{{Dalia} {Chakrabarty}$^{1,}$$^{6}$, University of Warwick, {d.chakrabarty@warwick.ac.uk and}}}\\ 
{{{University of Leicester, dc252@le.ac.uk}}}\\
{{{Fabio} {Rigat}{$^{2}$}, Novartis Vaccines and Diagnostics, {fabio.rigat@novartis.com}}}\\
{{{Nare} {Gabrielyan}{$^{3}$}, De Montfort University, {nare.gabrielyan@email.dmu.ac.uk}}}\\
{{{Richard} {Beanland}{$^{4}$}, University of Warwick, {r.beanland@warwick.co.uk}}}\\
{{{Shashi} {Paul}{$^{5}$}, De Montfort University, {spaul@dmu.ac.uk}}}}
\date{}
\maketitle

\mbox{}
\vspace*{1.5in}
\begin{center}
\textbf{Authors' affiliations:}
\end{center}
{\small{$^{1}${Associate Research Fellow in Department of Statistics,
  University of Warwick, Coventry CV4 7AL, U.K., corresponding author}, \\
{$^{6}$}{Lecturer of Statistics, Department of Mathematics, University of Leicester, Leicester LE1 7RH,  U.K.},\\
{$^{2}$}{Research Biostatistics Group Head, Novartis Vaccines
  and Diagnostics and Associate fellow at Department of Statistics,
  University of Warwick},\\
{$^{3}$}{Graduate student at Emerging Technologies Research
  Centre, De Montfort University},\\
{$^{4}$}{Lecturer, Department of Physics, University of Warwick},\\
{$^{5}$}{Head, Emerging Technologies Research Centre, De
  Montfort University}}}

\newpage
\begin{center}
\textbf{Abstract}
\end{center}
We present a new Bayesian methodology to learn the unknown material
density of a given sample by inverting its two-dimensional images that
are taken with a Scanning Electron Microscope. An image results from a
sequence of projections of the convolution of the density function
with the unknown microscopy correction function that we also learn
from the data. We invoke a novel design of experiment, involving
imaging at multiple values of the parameter that controls the
sub-surface depth from which information about the density structure
is carried, to result in the image. Real-life material density
functions are characterised by high density contrasts and typically
are highly discontinuous, implying that they exhibit correlation
structures that do not vary smoothly. In the absence of training data,
modelling such correlation structures of real material density
functions is not possible. So we discretise the material sample
and treat values of the density function at chosen locations inside it
as independent and distribution-free parameters. Resolution of the
available image dictates the discretisation length of the model; three
models pertaining to distinct resolution classes are developed.  We
develop priors on the material density, such that these priors adapt
to the sparsity inherent in the density function. The likelihood is
defined in terms of the distance between the convolution of the
unknown functions and the image data. The posterior probability
density of the unknowns given the data is expressed using the
developed priors on the density and priors on the microscopy
correction function as elicitated from the Microscopy literature. We
achieve posterior samples using an adaptive Metropolis-within-Gibbs
inference scheme. The method is applied to learn the material density
of a 3-D sample of a nano-structure, using real image
data. Illustrations on simulated image data of alloy samples are also
included.

\vspace*{.3in}

\noindent\textsc{Keywords}: {{Bayesian methods},
{Inverse Problems},
{Nonparametric methods},
Priors on sparsity,
{Underdetermined problems}
}




\renewcommand{\baselinestretch}{1.5}


\section{Introduction}
\label{sec:intro}
\noindent
Nondestructive learning of the full three-dimensional material density
function in the bulk of an object, using available two dimensional
images of the object, is an example of a standard inverse problem
\cite{mayer,bereto,dorn_anisotropy,natterer,review_inv}. The image
results from the projection of the three dimensional material density
onto the image plane. However, inverting the image data does not lead
to a unique density function in general; in fact to render the
inversion unique, further measurements need to be invoked. For
instance, the angle at which the object is imaged or viewed is varied
to provide a series of images, thereby expanding available
information. This allows to achieve ditinguishability (or
identifiability) amongst the solutions for the material density.  A
real-life example of such a situation is presented by the pursuit of
material density function using noisy 2-dimensional (2-D) images taken
with electron microscopy techniques \ctp{panaretos}. Such non-invasive
and non-destructive 3-D density modelling of material samples is often
pursued to learn the structure of the material in its depth
\cite**{ndt}, with the ulterior aim of controlling the experimental
conditions under which material samples of desired qualities are
grown.

Formally, the projection of a density function onto a lower
dimensional image space is referred to as the Radon Transform; see
\ctn{radon}, \ctn{radon_xrays}. The inverse of this projection is also
defined but requires the viewing angle as an input and secondly,
involves taking the spatial derivative of the density function,
rendering the computation of the inverse projection numerically
unstable if the image data is not continuous or if the data comprises
limited-angle images \ctp{holder_conti,holder_conti_cite} or if noise
contaminates the data \ctp{lispeed}.
Furthermore, in absence of measurements of the viewing angle, the
implementation of this inverse is not directly
possible, as \ctn{panaretos} suggested. Even when the viewing angle is
in principle measurable, logistical difficulties in imaging at
multiple angles result in limited-angle images.
For example, in laboratory settings, such as when imaging with
Scanning Electron Microscopes (SEMs), the viewing angle is varied by
re-mounting the sample on stubs that are differently inclined each
time. This mounting and remounting of the sample is labour-intensive
and such action leads to the breaking of the vacuum within which
imaging needs to be undertaken, causing long waiting periods between
two consecutive images. When vacuum is restored and the next image is
ready to be taken, it is very difficult to identify the exact
fractional area of the sample surface that was imaged the last time,
in order to scan the beam over that very area. In fact, the
microscopist would prefer to opt for an imaging technique that allows
for imaging without needing to break the vacuum in the chamber at
all. Such logistical details are restrictive in that this allows
imaging at only a small number of viewing angles, which can cause the
3-D material density reconstruction to be numerically unstable. This
problem is all the more acute when the data is discontinuous and
heterogeneous in nature. Indeed, with the help of ingenious imaging
techniques such as compressive sensing \ctp{donoho06}, the requirement
of a large number of image data points is potentially mitigated. If
implemented, compressive imaging would imply averaging the image data
over a chosen few pixels, with respect to a known non-adaptive
kernel. Then the inversion of such compressed data would require the
computation of one more averaging or projection (and therefore one
more integration) than are otherwise relevant. While this is easily
done, the implementation of compressive sensing of electron microscopy
data can be made possible only after detailed instrumentational
additions to the imaging paraphernalia is made. Such instrumentational
additions are outside the scope of our work. We thus invoke a novel
design of imaging experiment that involves imaging at multiple values
of some relevant parameter that is easily varied in a continuous way
over a chosen range, unlike the viewing angle. In this paper, we
present such an imaging strategy that allows for multiple images to be
recorded, at each value of this parameter, when a cuboidal slab of a
material sampe is being imaged with an SEM.

In imaging with an SEM, the projection of the density function is
averaged over a 3-D region inside the material, the volume of which we
consider known; such a volume is indicated in the schematic diagram of
this imaging technique shown in Figure~\ref{fig:epma}.  Within this
region, a pre-fixed fraction of the atomistic interactions between the
material and the incident electron beam stay confined,
\cite{lee,goldstein}. The 2-D SEM images are taken in one of the
different types of radiation that are generated as a result of the
atomistic interactions between the molecules in the material and a
beam of electrons that is made incident on the sample, as part of the
experimental setup that characterises imaging with SEM,
\cite{goldstein,lee,reed}. Images taken with bulk electron microscopy
techniques potentially carry information about the structure of the
material sample under its surface, as distinguished from images
obtained by radiation that is reflected off the surface of the sample,
or is coming from near the surface\footnotemark.

\footnotetext{Since the radiation generated in
  an elemental three-dimensional volume inside the bulk of the
  material sample, is due to the interaction of the electron beam and
  the material density in that volume, it is assumed that the material
  density is proportional to the radiation density generated in an
  elemental 3-D volume. The radiation generated in an elemental volume
  inside the bulk of the material, is projected along the direction of
  the viewing angle chosen by the practitioner to produce the observed
  2-D image.}

In practice, the situation is rendered more complicated by the fact
that for the 2-D images to be formed, it is not just the material
density function, but the convolution of the material density function
with a kernel that is projected onto the image space. The nature of
the modulation introduced by this kernel is then also of interest, in
order to disentangle its effect from that of the unknown density
function that is responsible for generating the measured image. As for
the modulation, both enhancement and depletion of the generated
radiation is considered to occur in general. It is to be noted that
this kernel is not the measure of blurring of a point object in the
imaging technique, i.e. it is not the point spread function (PSF) of
the observed image, but is characteristic of the material at hand. The
kernel really represents the unknown microscopy correction function
that convolves with the material density function--this convolution
being projected to form the image data, which gets blurred owing to
specifics of the imaging apparataus. Learning the de-blurred image
from the PSF-convolved image data is an independent pursuit, referred
to as Blind Deconvolution \ctp{qui2008,hallqui2007,bookblind}. Here
our aim is to learn the material density and the microscopy correction
function using such noisy image data.

The convolution of the unknown kernel with the unknown density,
is projected first onto the 2-D surface of the system, and the
resulting projection is then successively marginalised over the 2 spatial
coordinates that span the plane of the system surface.
Thus, three successive projections result in the image data. 
Here we deal with the case of an image resulting from a composition of
a sequence of three independent projections of the convolution of the
material density function and the unknown kernel or microscopy
function. So only a sequence of inversions of the image allow us to
learn this density. Such multiple inversions are seldom addressed in
the literature, and here, we develop a Bayesian method that allows for
such successive inversions of the observed noisy, PSF-convolved 2-D
images, in order to learn the kernel as well as the density
function. The simultaneous pursuit of both the unknown density and the
kernel, is less often addressed than are reported attempts at density
reconstruction, under an assumed model for the kernel,
\cite**{goldstein,corr_epm,heinrich,pichou_85}.

We focus here on the learning of a density function that is not
necessarily convex, is either sparse or dense, is often multimodal -
with the modes characterised by individualised, discontinuous
substructure and abrupt bounds, resulting in the material density
function being marked by isolated islands of overdensity and sharp
density contrasts; we shall see this below in examples of
reconstructed density functions presented in
Section~\ref{sec:illustrations} and Section~\ref{sec:real}. Neither
the isolated modes of the material density function in real-life
material samples nor the adjacent bands of material over-density, can
be satisfactorily modelled with a mixture model. Again, modelling of
such a real-life trivariate density function with a high-dimensional
Gaussian Process (GP) is not possible in lieu of training data, given
that the covariance function of the invoked GP will then have to be
non-stationary \ctp{neal98} and exhibit non-smooth spatial
variation. Here by ``training data'' is implied data comprising a set
of known values of the material density, at chosen points inside the
sample; such known values of the density function are unknown.
Even if the covariance function were varying smoothly over the 3-D
spatial domain its parametrisation is $ad hoc$ in lieu of training
data but with an abruptly evolving covariance function--as in the
problem of inverting SEM image data--such parametrisation becomes
impossible \ctp{paciorek}, especially when there is no training data
available.
At the same time, the blind implementation of the
inverse Radon Transform would render the solution unstable given that
the distribution of the image data in real-life systems is typically,
highly discontinuous.




In the following section, (Section~\ref{sec:application}) we describe
the experimental setup delineating the problem of inversion of the 2-D
images taken with Electron Microscopy imaging techniques, with the aim
of learning the sub-surface material density and the kernel. In
addition, we present the novel design of imaging experiment that
achieves multiple images. In Section~\ref{sec:novelty} we discuss the
outstanding difficulties of multiple and successive inversions and
qualitatively explain the relevance of the advanced solution to this
problem. This includes a discussion of the integral representation of
the multiple projections (Section~\ref{sec:exptl}), the outline of the
Bayesian inferential scheme adopted here (Section~\ref{sec:met}) and a
section on the treatment of the low-rank component of the unknown
material density (section~\ref{sec:lowrank}).  The details of the
microscopy image data that affect the modelling at hand are presented
in the subsequent section. This includes a description of the 3 models
developed to deal with the three classes of image resolution typically
encountered in electron microscopy (Section~\ref{sec:cases_new}) and
measurement uncertainties of the collected image data
(Section~\ref{sec:data}). Two models on the kernel, as motivated by
small and high levels of information available on the kernel given the
material at hand are discussed in Section~\ref{sec:correction}; priors
on the parameters of the respective models is discussed here as
well. Section~\ref{sec:sparse} is devoted to the development of priors
on the unknown density function such that the priors are adaptive to
the sparsity in the density. In Section~\ref{sec:model}, the
discretised form of the successive projections of the convolution of
the unknown functions is given for the three different models.
Inference is discussed in
Section~\ref{sec:inference}. Section~\ref{sec:posterior} discusses
with salient aspects of the uniqueness of the solutions. Application
of the method to the analysis of real Scanning Electron Microscope
image data is included in Section~\ref{sec:real} while results from
the inversion of simulated image data are presented in
Section~\ref{sec:illustrations}. Relevant aspects of the methodology
and possible other real-life applications in the physical sciences are
discussed in Section~\ref{sec:discussions}.

\section{Application to Electron Microscopy Image Data}
\label{sec:application}
\noindent
For our application, the system i.e. the slab of material, is modelled as a
rectangular cuboid such that the surface at which the electron beam is
incident is the $Z$=0 plane. Thus, the surface of the system is
spanned by the orthogonal $X$ and $Y$-axes, each of which
is also orthogonal to the $Z$-axis that spans the depth of the slab. In our
problem, the unknown functions are the material density $\rho(x,y,z)$
and the kernel $\eta(x,y,z)$.

Here we learn the unknown functions by inverting the image data
obtained with a Scanning Electron Microscope. Though the learning of
the convolution $\rho\ast\eta$ of $\rho(x,y,z)$ and $\eta(x,y,z)$, is
in principle ill-posed, we suggest a novel design of imaging
experiment that allows for an expansion of the available information
leading to $\rho\ast\eta$ being learnt uniquely when noise in the data
is small (Section~\ref{sec:posterior}). This is achieved by imaging
each part of the material sample at $N_{{\textrm{eng}}}$ different values of 
a parameter $E$ such that images taken at different values $\epsilon$ of
this parameter $E$, carry information about the density function from
different depths under the sample surface. Such is possible if $E$
represents the energy of the electrons in the incident electron beam
that is used to take the SEM image of the system; since the
sub-surface penetration depth of the electron beam increases with
electron energy, images taken with beams of different energies
inherently bear information about the structure of the material at
different depths.

The imaging instrument does not image the whole sample all at the same
time. Rather, the imaging technique that is relevant to this problem
is characterised by imaging different parts of the sample,
successively in time. At each of these discrete time points, the
$i$-th part, ($i=1,\ldots,N_{{\textrm{data}}}$) of the sample is viewed by the
imaging instrument, along the viewing vector ${\bv}_i$ to record the
image datum in the $i$-th pixel on the imaging screen. Thus, every
pixel of image data represents information about a part of the
sample. The $i$-th viewing vector corresponds to the $i$-th point of
incidence of the electron beam on the surface of the material
sample. The image datum recorded in the $i$-th pixel then results from
this viewing, and harbours information about the sample structure
inside the $i$-th interaction-volume, which is the region that bounds
atomistic interactions between the incident electron beam and
molecules of the material (see Figure~\ref{fig:epma}). The point of
incidence of the $i$-th viewing vector, i.e. the centre of the $i$-th
interaction volume is marked in the diagram. The volume of any such
3-D interaction-volume is known from microscopy theory \ctp{kanaya}
and is a function of the energy $E$ of the beam electrons. In fact,
motivated by the microscopy literature, \ctp{kanaya}, we model the
shape of the $i$-th interaction-volume as hemispherical, centred at
the incidence point of $\bv_i$, with the radius of this hemisphere
modelled as $\propto E^{1.67}$. We image each part of the sample at
$N_{{\textrm{eng}}}$ different values of $E$, such that the $k$-th value of $E$
is $\epsilon_k$, $k=1,\ldots,N_{{\textrm{eng}}}$. To summarise, the data comprise
a $N_{{\textrm{eng}}}$ 2-D images, each image being a square spatial
array of ${N_{{\textrm{data}}}}$ number of pixels such that at the $i$-th pixel, for the
$k$-th value of $E$, the image data is $\tilde{I}_i^{(k)}$ where
$i=1,\ldots,N_{{\textrm{data}}}$. Consideration of this full set of images will
then suggest the sub-surface density function of the sample, in a
fully discretised model. Convolution $\rho\ast\eta$ of $\rho(x,y,z)$
and $\eta(x,y,z)$ is projected onto the system surface, i.e. the $Z$=0
plane and this projection is then further projected onto one of the
$X$ or $Y$ axes, with the resulting projection being projected once
again, to the central point of the interaction-volume created by the
$i$-th incidence of the beam of energy $\epsilon_k$, to give rise to
the image datum in the $i$-th pixel in the $k$-th image.

Expanding information by imaging at multiple values of $E$ is
preferred to imaging at multiple viewing angles for logistical reason
as discussed above in Section~\ref{sec:intro}. Further, the shape of
the interaction-volume is rendered asymmetric about the line of
incidence of the beam when the beam is tilted to the vertical
\ctp[Figure 3.20 on page 92 of]{polymer}, where the asymmetry is
dependent on the tilt angle and the material. Then it is no longer
possible to have confidence in any parametric model of the geometry of
the interaction-volume as given in microscopy theory. Knowledge of the
symmetry of the interaction-volume is of crucial importance to any
attempt in inverting SEM image data with the aim of learning the
material density function. So in this application, varying the tilt
angle will have to be accompanied by approximations in the modelling
of the projection of $\rho\ast\eta$. We wish to avoid this and
therefore resort to imaging at varying values of $E$.

Physically speaking, the unknown density $\rho(x,y,z)$ would be the
material density representing amount or mass of material per unit 3-D
volume. The measured image datum could be the radiation in Back
Scattered Electrons or X-rays, in each 2-D pixel, as measured by a
Scanning Electron Microscope.

\begin{figure}[!t]
{$\begin{array}{c c}                                                          
   \includegraphics[height=.2\textheight]{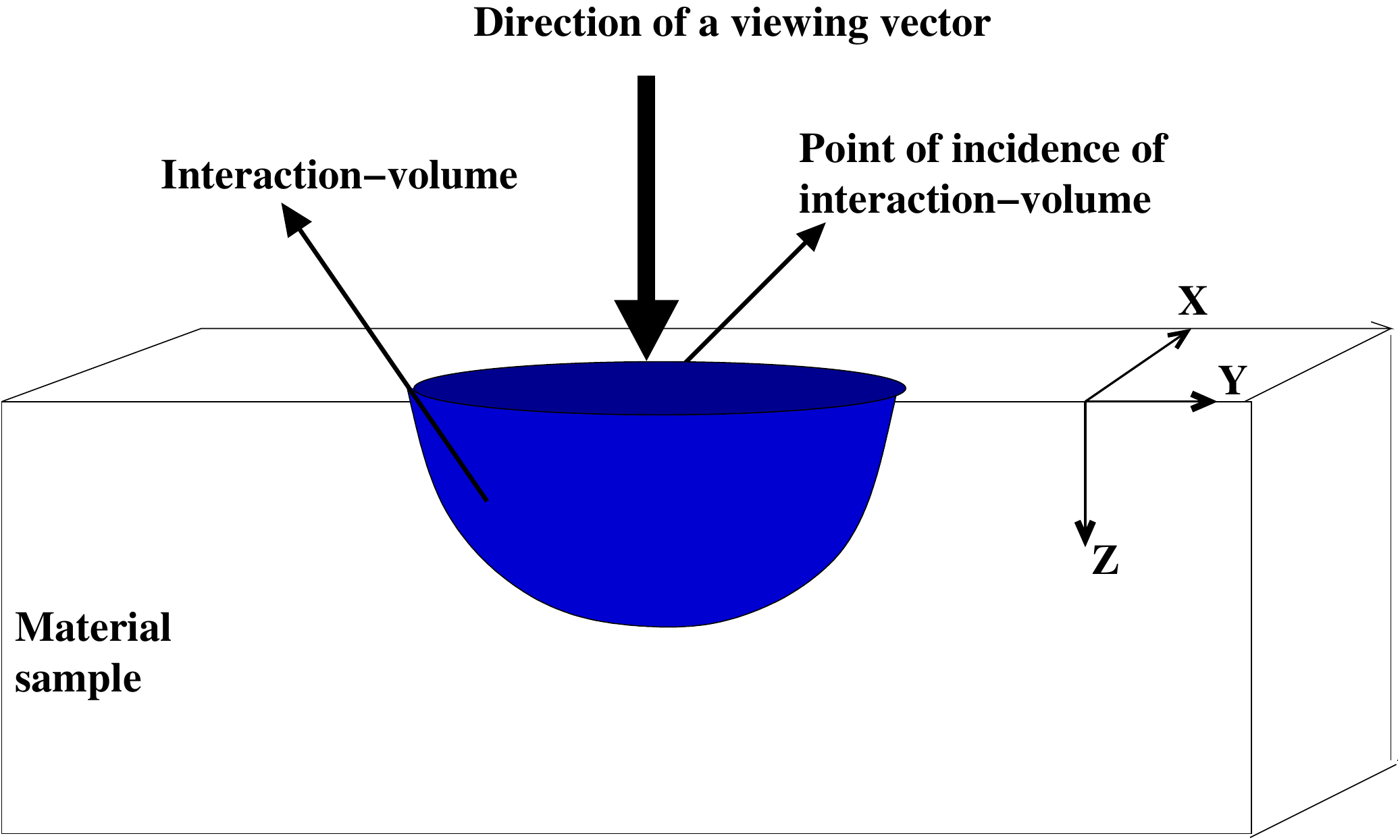}
  \includegraphics[height=.22\textheight]{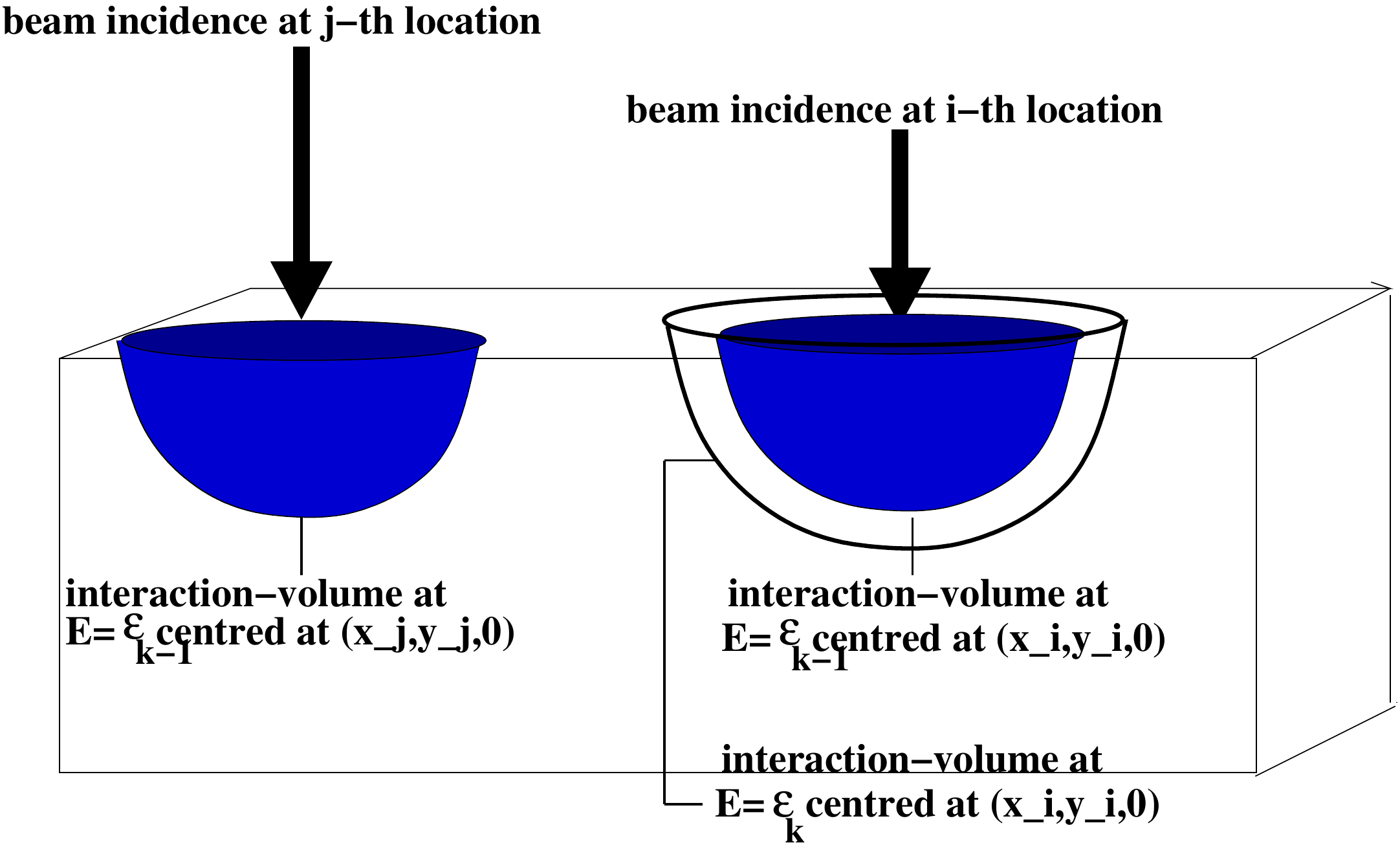}  
\end{array}$}
\caption{{\small{{\it Left:} schematic diagram of the imaging
      experiment relevant to the application of inverting image data
      taken with a Scanning Electron Microscope, to learn sub-surface
      material density and the kernel. The interaction-volume created
      by the incidence of the electron beam (shown in thick arrow) is
      modelled as a hemisphere (shown in blue) centred at the point of
      incidence of this beam with radius given as a function of the
      energy $E$ of the beam. At each of the $N_{{\textrm{data}}}$ incidences of
      the beam, $N_{{\textrm{eng}}}$ images are taken with beams that are
      distinguished by the energy $E$ of the electrons in them. The
      $X$, $Y$ and $Z$-axes characterising the 3-dimensional grid used
      in the modelling are marked. The $i$-th incidence of the beam is
      on the point $(x_i,y_i,0)$ on the sample surface. {\it Right:}
      The $j$-th and $i$-th beam incidences are depicted. At the
      $i$-th incidence, the interaction-volumes resulting from
      interactions of the material with beams of energy $E=\epsilon_k$
      (larger interaction-volume in outline only) and
      $E=\epsilon_{k-1}$ (smaller interaction-volume in blue) are
      schematically depicted.}}}
\label{fig:epma}
\end{figure}

\section{Modelling}
\label{sec:novelty}
\noindent
The general inverse problem is $I = {\cal P}(\rho) + \varepsilon$
where the data $I:{\mathbb
  R}^m\longrightarrow{\cal D}\subseteq{\mathbb R}$, while the unknown function
$\rho:{\mathbb R}^n\longrightarrow{\cal H}\subseteq{\mathbb
  R}$, with $m\leq n$. In particular, 3-D modelling involves the case of $n=3$,
$m=2$. The case of
$n > m$ is fundamentally an ill-posed problem
\ctp{tricomi,tarantola}. Here $\varepsilon$ is the measurement noise,
the distribution of which, we assume known.  

In our application, the data is represented as the projection of the
convolution of the unknown functions as:
\begin{equation}
{\tilde{I}}={\cal C}[\rho\ast\eta] + \varepsilon, \nonumber
\end{equation}
where the projection operator ${\cal C}$ is a contractive projection from
the space that $\rho\ast\eta$ lives in - which is $\subseteq{\mathbb
  R}^3$ - onto the image space ${\cal D}$. ${\cal C}$ itself is a
  composition of 3 independent projections in general,
\begin{equation}
{\cal C}=P_1\circ P_2\circ P_3, \nonumber
\end{equation} 
where $P_1$ is the projection onto the $Z$=0 plane, followed by $P_2$
- the projection onto the $Y$=0 axis, followed by $P_3$ - projection
onto the centre of a known material-dependent three dimensional region
inside the system, namely the {\it interaction volume}. These
3 projections are commutable, resulting in an effective contractive
projection of $\rho\ast\eta$ onto the centre of this interaction
volume. Thus, the learning of $\rho\ast\eta$ requires multiple (three)
inversions of the image data. In this sense, this is a harder than
usual inversion problem. The interaction volume and its centre - at
which point the electron beam is incident - are shown in
Figure~\ref{fig:epma}.

It is possible to reduce the learning of $\rho\ast\eta$ to a
least-squares problem in the low-noise limit, rendering the inverse
learning of $\rho\ast\eta$ unique by invoking the Moore Penrose
inverse of the matrix ${\bf C}$ that is the matrix representation of
the ${\cal C}$ operator (Section~\ref{sec:posterior}). The learning of
$\rho({x,y,z})$ and $\eta({x,y,z})$ individually, from the uniquely
learnt $\rho\ast\eta$ is still an ill-posed problem. In a
non-Bayesian framework, a comparison of the number of unknowns with
the number of measured parameters is relevant; imaging at
$N_{{\textrm{eng}}}$ number of values of $E$ suggests that $N_{{\textrm{data}}}\times
N_{{\textrm{eng}}}$ parameters are known while at most $N_{{\textrm{data}}}\times N_{{\textrm{eng}}} +
N_{{\textrm{eng}}}$ are unknown. Thus, for the typical values of $N_{{\textrm{eng}}}$ and
$N_{{\textrm{data}}}$ used in applications, ratio of known to unknown parameters
is $\geq 0.99$, using the aforementioned design of experiment (see
Section~\ref{sec:quant}). The unknown $N_{{\textrm{eng}}}$ parameters still
renders the individual learning of $\rho({x,y,z})$ and $\eta({x,y,z})$
non-unique. Such considerations are however relevant only in the
absence of Bayesian spatial regularisation
\ctp{andrewreview,cotter2010}, and as we will soon see, the lack of
smooth variation in the spatial correlation structure underlying
typically discontinuous real-life material density functions, render
the undertaking of such regularisation risky.

However in the Bayesian framework we seek the posterior probability of
the unknowns given the image data. Thus, in this approach there is no
need to directly invert the sequential projection operator ${\cal C}$;
the variance of the posterior probability density of the unknowns
given the data is crucially controlled by the strength of the priors
on the unknowns \ctp{gouveia,bayesinv}.
We develop the priors on the unknowns using as much of information as
is possible.  Priors on the kernel can be weak or strong depending on
information available in the literature relevant to the imaged
system. Thus, in lieu of strong priors that can be elicited from the
literature, a distribution-free model for $\eta({x,y,z})$ is
motivated, characterised by weak priors on the shape of this unknown
function. On the contrary, if the shape is better known in the
literature for the material at hand, the case is illustrated by
considering a parametric model for $\eta({x,y,z})$, in which priors
are imposed on the parameters governing this chosen shape. Also, the
material density function can be sparse or dense. To help with this, we
need to develop a prior structure on $\rho({x,y,z})$ that adapts to
the sparsity of the density in its native space. It is in this
context, that such a problem of multiple inversions is well addressed
in the Bayesian framework.

\subsection{Defining a general voxel}
\noindent
Lastly, we realise that the data at hand is intrinsically discrete,
owing to the imaging mechanism. Then,
\begin{itemize}
\item for a given beam energy, collected image data is characterised
  by a resolution that depends on the size of the electron beam that
  is used to take the image, as well as the details of the particular
  instrument that is employed for the measurements. The SEM cannot
  ``see'' variation in structure at lengths smaller than the beam
  size. In practice, resolution $\omega$ is typically less than the
  beam size; $\omega$ is given by the microscopist. Then only one
  image datum is available over an interval of width $\omega$ along
  each of the $X$ and $Y$-axes, i.e. only one datum is available from
  a square of edge $\omega$, located at the beam pointing. This
  implies that using such data, no variation in the material density
  can be estimated within any square (lying on the $X-Y$ plane) of
  side $\omega$, located at a beam pointing,
  i.e. $\rho(x,y,z)=\rho(x+\delta_1,y+\delta_2,z)$, where
  $\delta_1,\delta_2\in[0,\omega)$, $\forall\:x,y,z$. The reason for not being able to {\it predict} the density over lengths smaller than $\omega$, in typical real-life material samples, is discussed below.
\item image data are recorded at discrete (chosen) values of the
  beam energy $E$ such that for a given beam pointing, a beam at a
  given value $\epsilon_k$ of $E$ carries information about the
  material density from within the sub-surface depth of
  $h^{(k)}$. Then 2 beams at successive values $\epsilon_{k-1}$ and
  $\epsilon_k$ of $E$ allow for estimation of the material density
  within the depth interval $[h^{(k-1)}, h^{(k)})$ that is bounded by
    the beam penetration depths. Here $k=1,2,\ldots,N_{{\textrm{eng}}}$,
    $h^{(0)}$=0. This implies that for a given beam pointing, over any
    such interval, no variation in the material density can be learnt
    from the data, i.e. $\rho(x,y,z)=\rho(x,y,z+\delta_3)$ where
    $\delta_3\in[h^{(k-1)}, h^{(k)})$, $\forall\:x,y,z$. Again, the
      reason for inability to predict the density at any $z$ is
      explained below.
\end{itemize}
As is evident in these 2 bulletted points above, we are attempting to
learn the density at discrete points identified by the resolution in
the data. The inability to predict the density at any point inside the
sample could be possible in alternative models, for example when a
Gaussian Process (GP) prior is used to model the trivariate density
function that is typically marked by disjoint support in ${\mathbb
  R}^3$ and/or by sharp density contrasts (seen in a real density as
in Figure~\ref{fig:real_den} and those learnt from simulated image
data, as in Figure~\ref{fig:colour}). While the covariance structure
of such a GP would be non-stationary (as motivated in the introductory
section), the highly discontinuous nature of real-life density
functions would compel this non-stationary covariance function to not
vary smoothly. Albeit difficult, the parametrisation of such
covariance kernels can in principle be modelled using a training
data--except for the unavailability of such training data (comprising
known values of density at chosen points inside the given 3-D
sample). As training data is not at hand, parametrisation of the
non-stationary covariance structure is not possible, leading us to
resort to a fully discretised model in which we learn the density
function at points inside the sample, as determined by the data
resolution. In fact, the learnt density function could then be used as
training data in a GP-based modelling scheme to help learn the
covariance structure.  Such an attempt is retained for future
consideration.

Thus, prediction would involve (Bayesian) spatial regularisation which
demands modelling of the abruptly varying spatial correlation
structure that underlies the highly discontinuous, trivariate density
function of real life material samples (see
Figure~\ref{fig:real_den}). The spatial variation in the correlation
structure of the image data cannot be used as a proxy for that of the
material density function, since the former is not a pointer to the
latter, given that compression of the density information within a 3-D
interaction-volume generates an image datum. Only with training data
comprising known material density at chosen points, can we model the
correlation structure of the density function. In lieu of
training data, as in the current case, modelling of such correlation
is likely to fail.

Hence we learn the density function at the grid points of a 3-D
Cartesian grid set up in the space of $X$, $Y$ and $Z$. No variation
in the material density inside the $ik$-th grid-cell can be learnt,
where this grid-cell is defined as the ``$ik$-th voxel'' which is the
cuboid given by
\begin{enumerate}
\item[--]square cross-sectional area (on a $Z=$constant plane) of size $\omega^2$, with edges parallel to the $X$ and $Y$ axes, located at the $i$-th beam incidence,
\item[--]depth along the $Z$-axis ranging from $h^{(k-1)}\leq z< h^{(k)}$,
\end{enumerate}
with $i=1,2,\ldots,N_{{\textrm{data}}}$, $k=1,2,\ldots,N_{{\textrm{eng}}}$, $h^{(0)}=0$. Then
one pair of opposite vertices of the $ik$-th voxel are respectively at points
$(x_i,y_i,h^{(k-1)})$ and $(x_i,y_i,h^{(k)})$, where the $i$-th beam incidence is at the point $(x_i,y_i,0)$ and
\begin{equation}
{x_i:= i\:{\textrm {modulo}}\:(\sqrt{N_{{\textrm{data}}}})\quad
y_i:={\textrm{int}}(i/\sqrt{N_{{\textrm{data}}}})+1.}
\label{eqn:modulo}.  
\end{equation} 
See Figure~\ref{fig:ikth} for a schematic depiction of a general
voxel.  Here $\sqrt{N_{{\textrm{data}}}}$ is the number of intervals of width
$\omega$ along the $X$-axis, as well as along the $Y$-axis. Thus, on
the part of the $X-Y$ plane that is imaged, there are $N_{{\textrm{data}}}$
squares of size $\omega$. Formally, the density function that we can
learn using such discrete image data is represented as
\begin{equation}
\xi^{(k)}_{i} := \rho(x,y,z)\quad\mbox{for}\quad x\in[x_i,x_i+\omega),y\in[y_i,y_i+\omega), z\in[h^{(k-1)},h^{(k)}),
\end{equation}

This representation suggests that a discretised version of the sought
density function is what we can learn using the available image
data. We do not invoke any distribution for the unknown parameters\\
$\xi^{(1)}_1,
\xi^{(2)}_1,\ldots,\xi^{(N_{{\textrm{eng}}})}_1,\xi^{(1)}_2,\ldots,\xi^{(N_{{\textrm{eng}}})}_2,\ldots,\xi^{(1)}_{N_{{\textrm{data}}}},\ldots,\xi^{(N_{{\textrm{eng}}})}_{N_{{\textrm{data}}}}$. It
is to be noted that the number of parameters that we seek is {\it deterministically}
driven by the image data that we work with--the discretisation of the
data drives the discretisation of the space of $X$, $Y$ and $Z$ and
the density in each of the resulting voxels is sought. Thus, when the
image has a resolution $\omega$ and there are $N_{{\textrm{eng}}}$
such image data sets recorded at a value of $E$,
the number of sought density parameters is $N_{{\textrm{data}}}\times N_{{\textrm{eng}}}$
where $N_{{\textrm{data}}}$ is the number of squares of size $\omega$ that
comprise part of the surface of the material slab that is imaged. Here
the number of parameters $N_{{\textrm{data}}}\times N_{{\textrm{eng}}}$ is typically large;
in the application to real data that we present later, this number is
2250. We set up a generic model that allows for the learning of such
large, data-driven number of distribution-free density parameters in a
Bayesian framework.

\begin{figure}[!t]
\vspace*{-3in}
     \begin{center}
  {  \includegraphics[width=10cm]{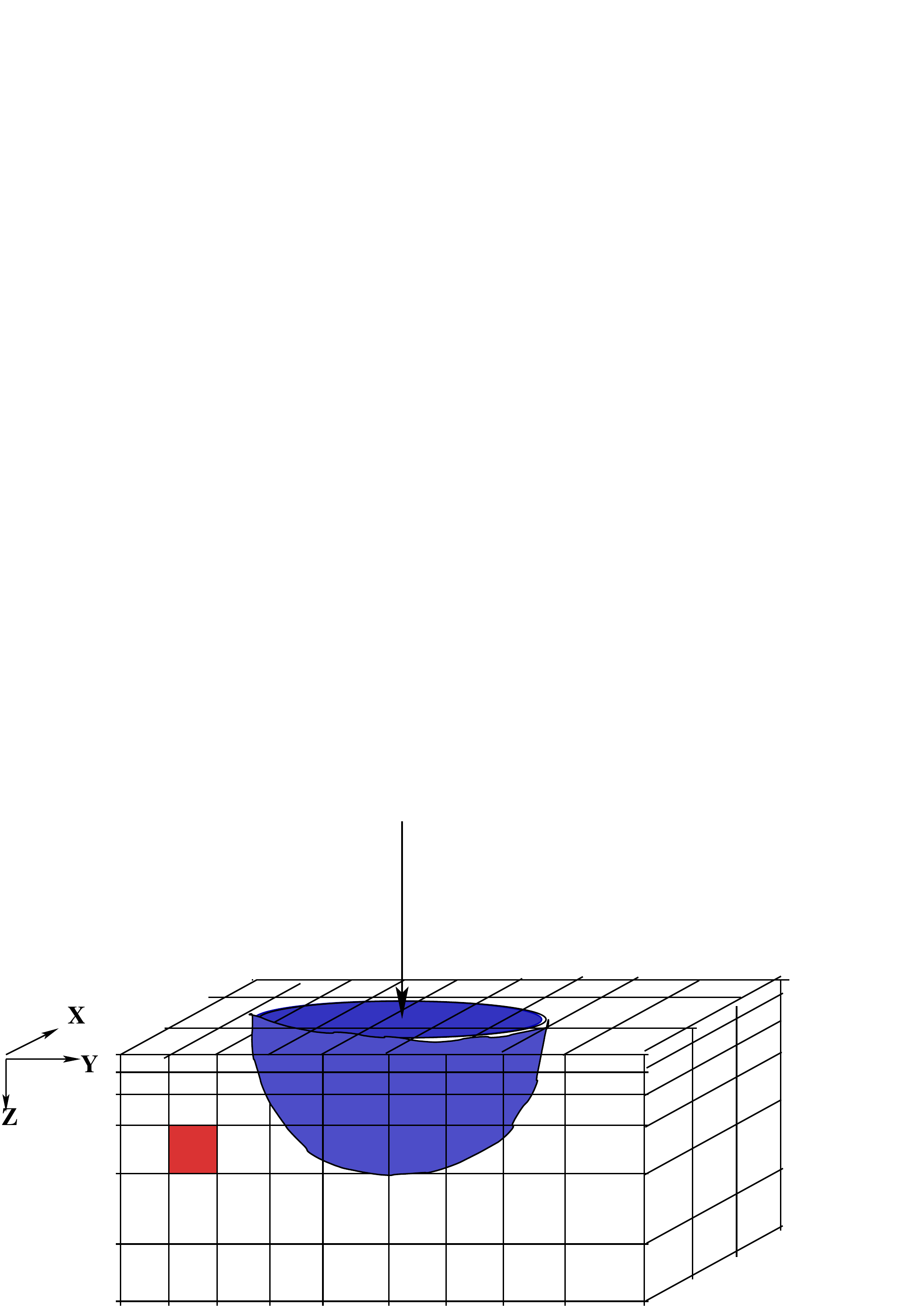}}
     \end{center}
\caption{\small{Figure to bring out the general ($ik$-th) voxel, as
    distinguished from the general ($ik$-th) interaction-volume. For
    the value $\epsilon_k$ of the imaging parameter $E$, the $i$-th
    viewing vector $\bv_i$ impinges the $Z$=0 surface at the material
    slab at point $(x_i,y_i,0)$, creating the $ik$-th
    interaction-volume (in blue) that is modelled as a hemisphere of
    radius $R0^{(k)}$. The maximal depth that this interaction-volume
    extends to, is $h^{(k)}$. Thus, under the assumption of
    hemi-spherical shape, $R0^{(k)}=h^{(k)}$. Here,
    $i=1,\ldots,N_{{\textrm{data}}}$ and $k=1,\ldots,N_{{\textrm{eng}}}$. In the figure,
    the beam is incident on the 14-th square on the surface, where the origin is at the marked vertex O. Also, the depicted interaction-volume goes down to the 4-th $Z$-bin, i.e. has a radius=$R0^{(4)}=h^{(4)}$. Then the depicted interaction-volume is the $ik$-th one, with $i=14$ and
    $k$=4. The example voxel the outermost face of which on the $X=0$
    plane is shown in red, lies between depths $h^{(3)}$ and
    $h^{(4)}$, and has $i=2$; this is then the $ik$-th voxel, with
    $i$=2, $k$=4. The grid-cells on the $Z$=constant planes are
    squares of side $\omega$, the resolution length of the imaging
    instrumentation. The gridding along the $Z$-axis is
    non-uniform. Material density inside the $ik$-th voxel is the
    constant $\xi_i^{(k)}$ and for all $i=1,\ldots,N_{{\textrm{data}}}$,
    correction function in this voxel is $\eta^{(k)}$,
    $k=1,\ldots,N_{{\textrm{eng}}}$.}}
\label{fig:ikth}
\end{figure}


\subsection{Defining a general interaction-volume}
\label{sec:exptl}
\noindent
Under the consideration of a hemi-spherical shape of the
interaction-volume, the $ik$-th interaction-volume is completely
specified by pinning its centre to the $i$-th beam pointing
at $(x_i,y_i,0)$ on the $Z$=0 plane and fixing its radius to
$R0^{(k)}\propto{\epsilon_k}^{1.67}$, \ctp{kanaya} where the constant of
proportionality is known from microscopy theory\footnotemark and $x_i$
and $y_i$ are defined in Equation~\ref{eqn:modulo}. In this
hemispherical geometry, the maximal depth that the $ik$-th interaction
volume goes down to is $h^{(k)}=R0^{(k)}$.

\footnotetext{
In the context of the application to microscopy, atomic theory studies
undertaken by \citeasnoun{kanaya} suggest that the maximal depth to
which an electron beam can penetrate inside a given material sample is  
$h^{(k)}$ (measured in $\mu$m), for a beam of electrons of energy $E=\epsilon_k$ (measured in kV), inside material of mass density of
$d$ (measured in gm cm$^{-3}$), atomic number ${\cal Z}$ and
atomic weight $A$ (per gm per mole), is
\begin{equation}
\label{eqn:kanaya}
h^{(k)} = \displaystyle{\frac{0.0276 A \epsilon_k^{1.67}}{d {\cal Z}^{0.89}}}\quad k\in{\mathbb Z}_{+}.
\end{equation}
Here ${\cal Z}$ is an integer valued atomic number of the material,
${\cal Z} > 0$ while $A$ and $d$ are positive-definite real valued
constants. As $E$ increases, the depth and radial extent of the
interaction-volume increases.}

The measured image datum in the $i$-th pixel, created at
$E=\epsilon_k$, is ${\tilde I}_i^{(k)}$. This results when the
convolution $(\rho\ast\eta)_i^{(k)}$, of the unknown density and
kernel in the voxels that lie inside the $ik$-th
interaction-volume, is sequentially projected (as mentioned
in Section~\ref{sec:novelty}), onto the centre of
the $ik$-th interaction-volume, i.e. onto the point $(x_i,y_i,0)$. 
This projection is referred to as ${\cal C}(\rho\ast\eta)_i^{(k)}$. 
The integral representation of such a projection suggests 3
integrals, each over the three spatial coordinates that define the
$ik$-th interaction-volume, according to
\begin{equation}
\label{eqn:general}
{\cal C}(\rho\ast\eta)_i^{(k)} = \displaystyle{\frac{\int_{R=0}^{R0^{(k)}}\int_{\theta=0}^{\theta_{max}} RdR d\theta\int_{z=0}^{z=z_{max}(R,\theta)}\rho({\bf s})\ast\eta({\bf s})dz}
{ \int_{R=0}^{R_{max}}\int_{\theta=0}^{\theta_{max}} RdR d\theta }}  
\end{equation}
for $i=1,\ldots,N_{{\textrm{data}}}$. Here the vector ${\bf s}$
represents value of displacement from the point of incidence
$(x_i,y_i,0)$, to a general point $(x,y,z)$, inside the
interaction-volume, i.e.
\begin{eqnarray}
{\bf s}:= (x-x_i, y-y_i, z)^T &=& (R\cos\theta, R\sin\theta, z)^T\nonumber \\
R = \sqrt{(x-x_i)^2+(y-y_i)^2} &&\quad \tan\theta = \displaystyle{\frac{y-y_i}{x-x_i}}. \nonumber \\ 
\end{eqnarray}




\subsection{Cases classed by image resolution}
\label{sec:cases_new}
\noindent
We realise that the thickness of the electron beam imposes an
upper limit on the image resolution, i.e. on the smallest length
$\omega$ over which structure can be learnt in the available image
data. For example, the resolution is finer when the SEM image is taken
in Back Scattered Electrons ($\omega\lesssim$ 0.01$\mu$m) than in
X-rays ($\omega\sim$ 1$\mu$m). The comparison between the cross-sectional
areas of a voxel and of an interaction-volume, on the $Z$=constant plane 
is determined by
\begin{enumerate}
\item[--] $\omega$ since the area of the voxel at any $z$ is $\omega^2$.
\item[--] the atomic-number parameter ${\cal Z}$ of the material at
  hand (see Equation~\ref{eqn:kanaya}) since the radius $R0^{(k)}$ of the
  hemi-spherical interaction-volume is a monotonically decreasing
  function of ${\cal Z}$, at $E=\epsilon_k$.
\end{enumerate} 
Then we can think of 3 different
resolution-driven models such that
\begin{enumerate}
\item[1st~model:] voxel cross-sectional area on the $Z$=0 plane exceeds that of interaction-volumes attained at all $E$, i.e. \\
$\pi(R0^{(k)})^2\leq \omega^2,\:\forall\:k=1,\ldots,N_{{\textrm{eng}}}$.
\item[2nd~model:] voxel cross-sectional area on the $Z$=0 plane exceeds that of interaction-volumes at some values of $E$ but is lower than that of interaction-volumes attained at higher $E$ values, i.e.
$\pi(R0^{(k)})^2\leq \omega^2,\:\forall\:k=1,\ldots,k_{in}$ and
$\pi(R0^{(k)})^2 > \omega^2,\:\forall\:k=k_{in}+1,\ldots,N_{{\textrm{eng}}}$. 
\item[3rd~model:] voxel cross-sectional area on the $Z$=0 plane is lower than that of interaction-volumes attained at all $E$, i.e.
$\pi(R0^{(k)})^2 > \omega^2,\:\forall\:k=1,\ldots,N_{{\textrm{eng}}}$.
\end{enumerate}
The first two models are realised for coarse resolution of the imaging
technique, i.e. for large $\omega$. This is contrasted with the
3rd~model, which pertains to fine resolution, i.e. low $\omega$. Since
$R0^{(k)}$ is a monotonically decreasing function of ${\cal Z}$
(Equation~\ref{eqn:kanaya}), the 1st~model is feasible for
``high-${\cal Z}$ materials'', while the 2nd~model is of relevance when
the material at hand is a ``low-${\cal Z}$ material''. To make this
more formal, we introduce two classes of materials as follows. For fixed
values of all parameters but ${\cal Z}$,
\begin{eqnarray}
\label{eqn:highzlowz}
{\textrm{if}}\quad\omega^2 \geq \pi [R0^{(N_{{\textrm{eng}}})}]^2\vert {\cal Z}, & & \quad{\textrm{``high-${\cal Z}$'' material}}\\
{\textrm{if}}\quad\omega^2 < \pi [R0^{(N_{{\textrm{eng}}})}]^2\vert {\cal Z}, & & \quad{\textrm{``
low-${\cal Z}$'' material}}.\nonumber  
\end{eqnarray}

The computation of the sequential projections involve multiple
integrals (see Equation~\ref{eqn:general}). The required number of
such integrals can be reduced by invoking symmetries in the material
density in any interaction-volume.
Such symmetries become avaliable, as the resolution in the imaging
technique becomes coarser. For example, for the 1st model, when the
resolution is the coarsest, the minimum length $\omega$ learnt in the
data exceeds the diameter of the lateral cross-section of the largest
interaction-volume achieved at the highest value of $E$, i.e for
$E=\epsilon_{N_{{\textrm{eng}}}}$. (Here ``lateral cross section'' implies the
cross-section on the $Z$=0 plane). Then for the coarsest resolution,
$2R0^{(N_{{\textrm{eng}}})} \leq \omega$ (see Figure~\ref{fig:3cases}) and so,
$\displaystyle{\pi[R0^{(N_{{\textrm{eng}}})}]^2} \leq \omega^2$ which means the
lateral cross-section of the $ik$-th interaction-volume is contained
inside that of the $ik$-th voxel. Now in our discretised
representation, the material density inside any voxel is a
constant. Therefore it follows that when resolution is the coarsest,
material density at any depth inside an interaction-volume, is a
constant independent of the angular coordinate $\theta$ (see
Equation~\ref{eqn:general}). Thus, the projection of $\rho\ast\eta$
onto the centre of the interaction-volume, does not involve
integration over this angular coordinate.

However, such will not be possible for the 3rd model at finer imaging
resolution. When the resolution is the finest possible with an SEM,
$\omega$ is the smallest. In this case, the lateral cross-section of
many voxels fill up the cross-sectional area of a given interaction
volume. Thus, the density is varying within the
interaction-volume. Thus, the projection of $\rho\ast\eta$ cannot
forgo integrating over $\theta$.

The last case that we consider falls in between these
extremes of resolution; this is the 2nd model. Isotropy is valid
$\forall\:k=1,\ldots,k_{in}$ but not for $k=k_{in}+1,\ldots,N_{{\textrm{eng}}}$;
(see middle panel in Figure~\ref{fig:3cases}).

\begin{figure}[!t]
\vspace*{-3in}
     \begin{center}
  {$\begin{array}{c c c}
       \includegraphics[width=7cm]{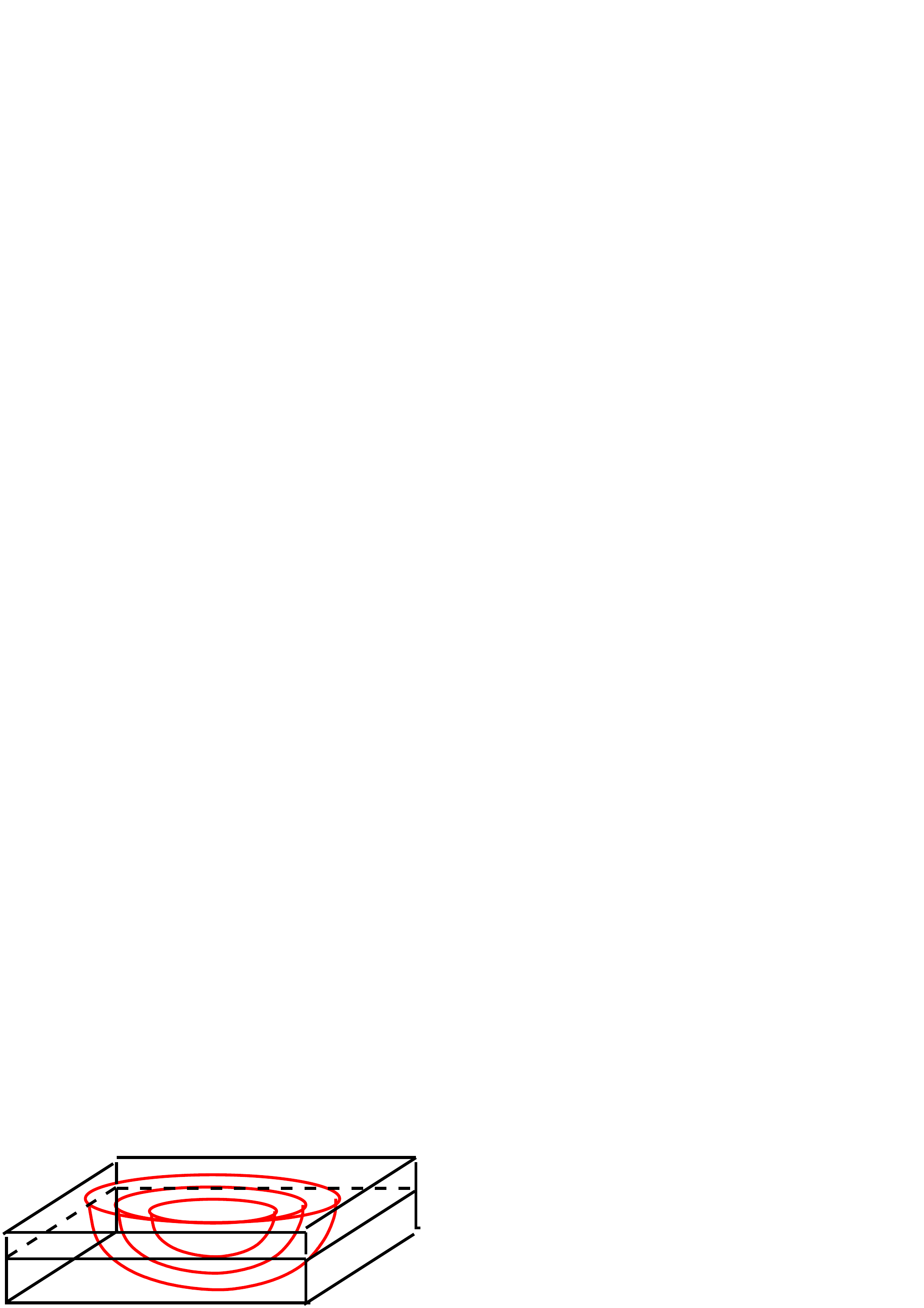}
       \includegraphics[width=7cm]{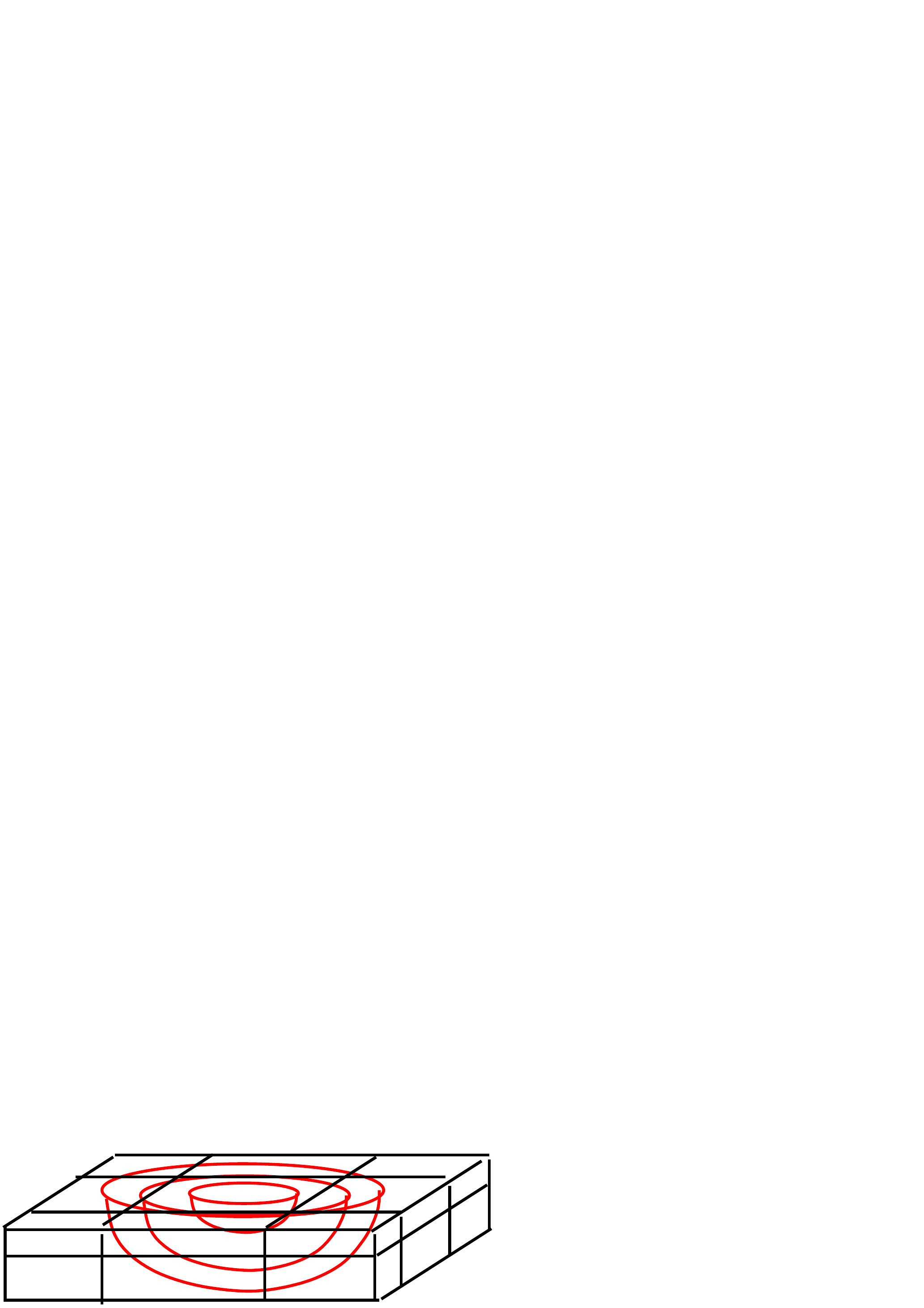}
       \includegraphics[width=7cm]{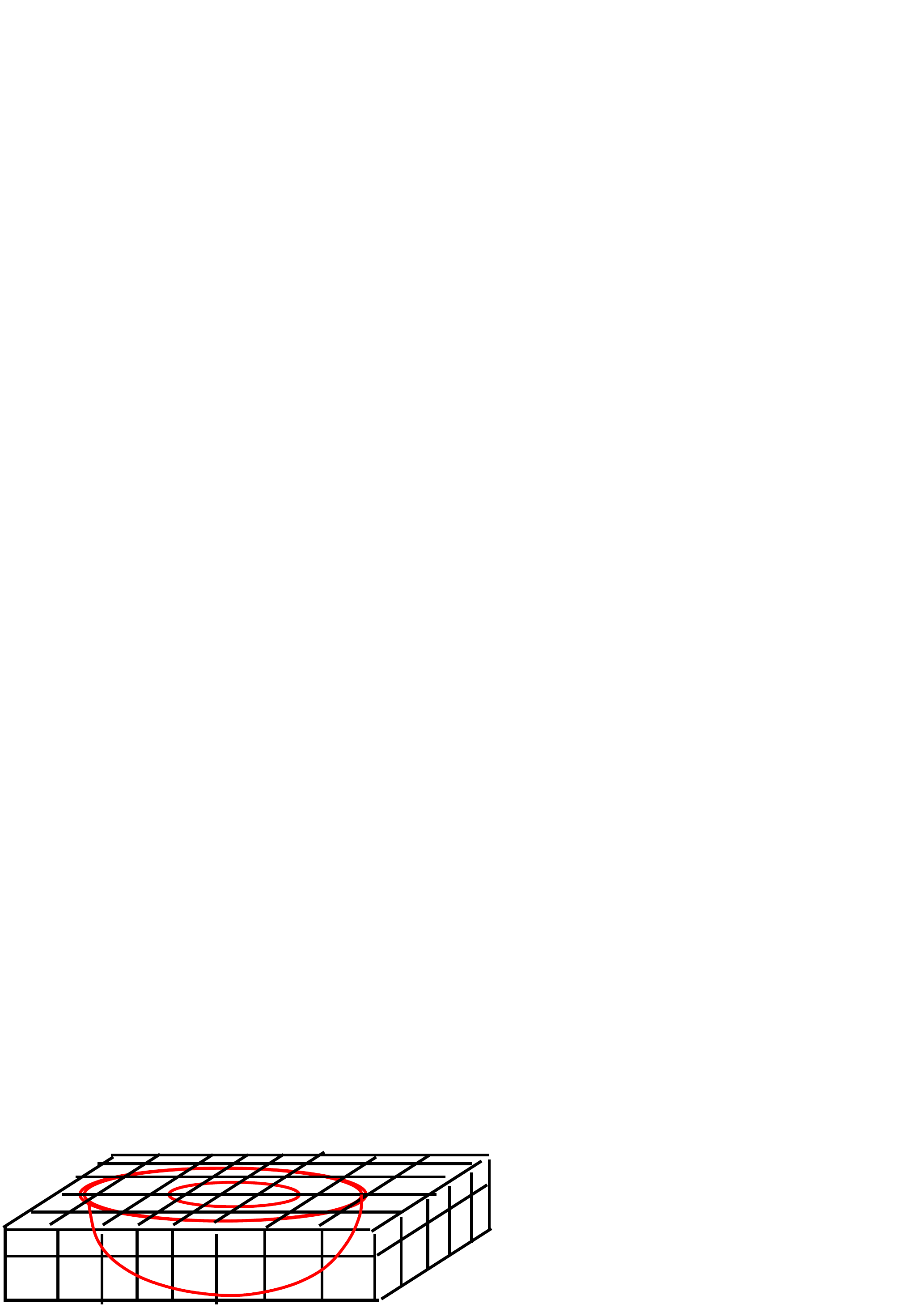}
  \end{array}$}
     \end{center}
\caption{\small{A schematic diagram to depict the three models that we
    work with. When the instrumental resolution is the coarsest out of
    the three cases, i.e. $\omega$ is the largest, (for high-${\cal
      Z}$ materials), the size of the $ik$-th interaction-volumes 
    (depicted in red) is such that their cross-sectional
    areas on a $Z$=constant plane fall short of the area of a voxel
    (outlined in black), on this plane. Here $k=1,\ldots,N_{{\textrm{eng}}}$
    though in the representation on the left, a value of 2 has been
    used for $N_{{\textrm{eng}}}$. Thus, the projection ${\cal
      C}(\rho\ast\eta)_i^{(k)}$ onto the centre of an
    interaction-volume is done by considering an isotropic density
    function, since density inside a voxel is a constant. {\it Middle
      panel:} (for low-${\cal Z}$ materials) the cross-sectional area
    $\pi(R0^{(k)})^2$ of the $ik$-th interaction-volume and $\omega^2$
    of any voxel, on the $Z$=0 plane are such that
    $\pi(R0^{(k)})^2\leq\omega^2$ for $k=1,\ldots,k_{in}$. On the
    other hand, for $k= k_{in}+1,\ldots,N_{{\textrm{eng}}}$,
    $\pi(R0^{(k)})^2>\omega^2$ and therefore for such $k$, the $ik$-th
    interaction-volume spills into the neighbouring voxels on the
    $Z$=0 plane. Thus, the contribution to projection ${\cal
      C}(\rho\ast\eta)_i^{(k)}$ from the $ik$-th
    interaction-volume includes contribution from these neighbouring
    voxels as well. Such contribution is modelled using a
    nearest-neighbour weighted averaging. {\it Left:} for the finest
    instrumental resolution, (obtained for example with Back Scattered
    Electron images taken with SEM), for all $k=1,\ldots,N_{{\textrm{eng}}}$ in
    general, the interaction volume exceeds a voxel in cross-sectional
    area, on the $Z$=0 surface.}}
\label{fig:3cases}
\end{figure}

\subsection{Noise in discrete image data}
\label{sec:data}
\noindent
As discussed above in Section~\ref{sec:intro}, noise in the data can
crucially influence how well-conditioned the inverse problem is.
Image data is invariably subjected to noise, though for images taken
with Scanning Electron Microscopes (SEM), the noise is small. Also,
imaging with SEM being a controlled experiment, this noise can be
reduced by the microscopist by increasing the length of time taken to
scan the sample with the electron beam that is made incident on the sample in this imaging procedure. As illustrations of the noise in real
SEM data, \ctn{scanlength} suggest that in a typical 20 second scan of
the SEM, the signal to noise is about 20:1, which can be improved to
120:1 for a longer acquisition time of 10
minutes\footnotemark. Importantly, this noise could arise due to to
variations in the beam configurations, detector irregularities,
pressure fluctuations, etc. In summary, such noise is considered to be
due to random variations in parameters that are intrinsically
continuous, motivating a Gaussian noise distribution. Thus, the noise
in the 2-D image data, created by the $i$-th beam pointing 
for $E=\epsilon_k$, (i.e. the noise in ${\tilde{I}}_i^{(k)}$) is
modelled as a random normal variable, drawn from the normal with
standard deviation $\sigma_i^{(k)}\lesssim$ 0.05$\tilde{I}_i^{(k)}$,
$k=1,\ldots,N_{{\textrm{eng}}},\:i=1,\ldots,N_{{\textrm{data}}}$. In the illustration with
real data, a scan speed of 50 s was used, which implies a noise of
less than 5$\%$.  \footnotetext{In modern SEMs, noise reduction is
  alternatively handled using pixel averaging ("Stereoscan 430
  Operator Manual", Leica Cambridge Ltd. Cambridge, England, 1994).}

\subsection{Identifying low-rank and spatially-varying components of density}
\label{sec:lowrank}
\noindent
It is known in the literature that the general, under-determined
deconvolution problem is solvable only if the unknown density is
intrinsically, ``sufficiently'' sparse \cite**{donohotanner,ma}. Here
we advance a methodology, within the frame of a designed experiment,
to learn the density - sparse or dense. In the following subsection,
we will see that priors on the sparsity that we develop here, bring in relatively more information into the models when the density structure is sparse. 

With this in mind, the density is
recognised to be made of 
a constant $\rho_0$ (the low-rank component in the limiting sense),
and the spatially varying component $\rho_1(x,y,z)$ that may be
sparse or dense in ${\mathbb R}^3$. We view the constant part of the density
as $\rho_0=\rho_0\delta(x-x_i, y-y_i, z)$ where $\delta(\cdot,\cdot,\cdot)$ is the Dirac delta function on ${\mathbb R}^3$ \ctp{chakrabortydiracdelta}, centred at the centre of the $ik$-th interaction volume, $\forall\:k=1,2,\ldots,N_{{\textrm{eng}}}$. Then in our problem, the contribution of the constant part of the
density to the projection onto the centre of the $ik$-th interaction-volume is
\begin{eqnarray}
{\cal C}(\rho_0\ast{\eta}({x,y,z})_i^{(k)})&\equiv&
\rho_0{\cal C}(\delta(x-x_i,y-y_i,z)\ast{\eta}(x,y,z))_i^{(k)}) \nonumber \\
&=& I0^{(k)},
\end{eqnarray}
a constant independent of the beam pointing location $i$, if
$\eta(x,y,z)$ is restricted to be a function of the depth coordinate
$Z$ only. As is discussed in Section~\ref{sec:correction}, this is
indeed what we adopt in the model for the kernel.

Then, $I0^{(k)}$ depends only on the known morphological details of
the interaction-volume for a given value of $E$,
$\forall\:i=1,2,\ldots,N_{{\textrm{data}}}$. Thus, $\{{\tilde
  I}_i^{(k)}\}_{i=1}^{N_{{\textrm{data}}}}= \{{\tilde I1}_i^{(k)} +
I0^{(k)}\}_{i=1}^{N_{{\textrm{data}}}}$, where ${\tilde I1}^{(k)}_i$ is the
spatially-varying component of the image data. The identification of the
constant component of the density is easily performed as due to the
constant component of the measurable. 

In our inversion exercise, it is the $\{{\tilde
  I1}_i^{(k)}\}_{i=1}^{N_{{\textrm{data}}}}$ field that is actually implemented
as data, after $I0^{(k)} := \inf\{{\tilde
  I}_i^{(k)}\}_{i=1}^{N_{{\textrm{data}}}}$ is subtracted from $\{{\tilde
  I}_i^{(k)}\}_{i=1}^{N_{{\textrm{data}}}}$, for each $k=1,\ldots,N_{{\textrm{eng}}}$.
Hereafter, when we refer to the data, the spatially-varying part of
the data will be implied; it is this part of the data that will
hereafter be referred to as
$\{{\tilde{I}}_i^{(k)}\}_{i=1}^{N_{{\textrm{data}}}}$, at each value $\epsilon_k$
of $E$, $k=1,\ldots,N_{{\textrm{eng}}}$. Its inversion will yield a spatially
varying sparse/dense density, that we will from now, be referred to as
$\rho(x,y,z)$ that in general lies in a non-convex subset of ${\mathbb
  R}_{\geq 0}$. Thus, we see that in this model, it is possible for
$\rho(x,y,z)$ to be 0. The construction of the full density, inclusive
of the low-rank and spatially-varying parts, is straightforward once
the latter is learnt.

\subsection{Basics of algorithm leading to learning of unknowns}
\label{sec:met}
\noindent
The basic scheme of the learning of the unknown functions is as
follows. First, we perform multiple projections of the convolution of
the unknown functions in the forward problem, onto the incidence point
of the $ik$-th interaction-volume, $\forall\:i,k$. The likelihood is
defined as a function of the Euclidean distance between the
spatially-averaged projection ${\cal C}(\rho\ast\eta)_i^{(k)}$, and the image datum
${\tilde I}_i^{(k)}$.  We choose to work with a Gaussian likelihood
(Equation~\ref{eqn:mainlikeli}). Since the imaging at any value of $E$
along any of the viewing vectors is done independent of all other
values of $E$ and all other viewing vectors, we assume
$\{{\tilde{I}}_i^{(k)}\}$ to be $iid$ conditionally on the values of
the model parameters.

We develop adaptive priors on sparsity on the density and present
strong to weak priors on the kernel, to cover the range of high to low
information about the kernel that may be available. In concert with
these, the likelihood leads to the posterior probability of the
unknowns, given all the image data. The posterior is sampled from
using adaptive Metropolis-Hastings within Gibbs. The unknown functions
are updated using respective proposal densities. Upon convergence, we
present the learnt parameters with 95$\%$-highest probability density
credible regions. At the same time, discussions about the
identifiability of the solutions for the 2 unknown functions and of
uniqueness of the solutions are considered in detail.

\section{Kernel function}
\label{sec:correction}
\noindent
Identifiability between the solutions for the unknown material density
and kernel is aided by the availability of measurement of the kernel at the
surface of the material sample and the model assumption that the
kernel function depends only on the depth coordinate $Z$ and is
independent of $X$ and $Y$. Thus, the correction function is
$\eta(z)$. In the discretised space of $X$, $Y$ and $Z$, the kernel is
then defined as
\begin{equation}
\eta^{(k)}:=\eta(z) \quad\mbox{for}\quad z\in[h^{(k-1)},h^{(k)}), \quad{k=1,\ldots,N_{{\textrm{eng}}}}.
\end{equation}
Then $\eta{(0)}$ is the value of the kernel function on the material surface,
i.e. at $Z=0$ and this is available from microscopy theory
\ctp{corr_epm}.

On occasions when the shape of the kernel function is known for the
material at hand, only the parameters of this shape need to be learnt
from the data. Given the measured value $\eta(0)$, i.e. the measured
$\eta^{(1)}$, the total number of kernel parameters is then 2. In this
case, the total number of parameters that we attempt learning from the
data is $N_{{\textrm{data}}}\times N_{{\textrm{eng}}} +2$, i.e. the 2 kernel shape
parameters and $N_{{\textrm{data}}}\times N_{{\textrm{eng}}}$ number of material density
parameters. We refer to such a model for the kernel as parametric. For
other materials, the shape of the kernel function may be unknown. In
that case, a distribution-free model of the $N_{{\textrm{eng}}}-1$ parameters of
the vectorised kernel function are learnt. In this case, the total
number of parameters that we learn is $N_{{\textrm{data}}}\times N_{{\textrm{eng}}}
+N_{{\textrm{eng}}}-1$.

We fall back on elicitation from the literature on electron microscopy
to obtain the priors on the kernel parameters. In microscopy practice,
the measured image data collected along an angle $\chi$ to the
vertical is given by ${\displaystyle{Q\int_{0}^{\infty} \Psi(\rho z)
    \exp[-f(\chi)\rho z]d(\rho z)}}$, where $Q\in{\mathbb R}_{> 0}$
and $f(\chi)\propto 1/\sin(\chi)$ with a proportionality constant that
is not known apriori but needs to be estimated using system-specific
Monte Carlo simulations or approximations based on curve-fitting
techniques; see \ctn{goldstein}. $\Psi(\rho z)$ is the distribution of
the variable $\rho z$ and is again not known apriori but the
suggestion in the microscopy literature has been that this can be
estimated using Monte Carlo simulations of the system or from
atomistic models. However, these simulations or model-based
calculations are material-specific and their viability in
inhomogeneous material samples is questionable. Given the lack of
modularity in the modelling of the relevant system parameters within
the conventional approach, and the existence in the microscopy
literature of multiple models that are distinguished by the underlying
approximations \cite{goldstein,corr_epm}, it is meaningful to seek to
learn the correction function.

Our construct differs from this formulation in that we construct an
infinitesimally small volume element of depth $\delta z$ inside the
material, at the point $(x,y,z)$. In the limit $\delta
z\longrightarrow$0, the density of the material inside this
infinitesimally small volume is a constant, namely
$\rho(x,y,z)$. Thus, the image datum generated from within this
infinitesimally small volume - via convolution of the material density
and the kernel - is $\rho(x,y,z)\eta(z)\delta z$. Thus, over this
infinitesimal volume, our formulation will tie in with the representation
in microscopic theory, if we set $\eta(z)\propto\Psi(\rho
z)\exp[-f(\chi)\rho z]$. It then follows that $\eta(z)$ is motivated
to have the same shape as $\Psi(\rho z)\exp(-cz)$, where $c$ is a known,
material dependent constant\footnotemark.  
When information about
this shape is invoked, the model for $\eta^{(k)}$ is referred to as
parametric; however, keeping general application contexts in
mind, a less informative prior for
the unknown kernel is also developed.
\footnotetext{$c$ depends
  upon whether the considered image is in Back Scattered Electrons
  (BSE) or X-rays. In he former case $c$ represents the BSE coefficient, often
  estimated with Monte-Carlo simulations and in the latter,
  it is the linear attenuation coefficient).}

\subsection{Parametric model for correction function}
\label{sec:secondphi}
\noindent
Information is sometimes available on the shape of the function that
represents the kernel. For example, we could follow the aforementioned 
suggestion of approximating the form of $\eta(z)$ as
$\Psi(\rho z)\exp(-cz)$, multiplied by the scale-factor $Q$, as long as $c$ is known for the material and imaging technique at hand. In fact, for
the values of beam energy $E$ that we work at, for most materials
$c\lesssim10^{-2}$, so that $\exp(-cz)\approx$ 1 ({{
    http://physics.nist.gov/PhysRefData/XrayMassCoef/tab3.html}},
Goldstein et. al 2003). Thus, we approximate the shape of $\eta(z)$ to
resemble that of $\Psi(\rho z)$, as given in microscopy
literature. This shape is akin to that of the folded normal
distribution \ctp{folded}, so that we invoke the folded normal density
function to write
\begin{equation}
\label{eqn:firstphi2}
\eta(z) \equiv \displaystyle{Q\left[
    \exp\left(-\frac{(z-\eta_0)^2}{2s^2}\right) +
    \exp\left(-\frac{(z+\eta_0)^2}{2s^2}\right)\right] }.
\end{equation}
Thus, in this model, $\eta(z)$ is deterministically known if the
parameters $Q$, $\eta_0$ and $s$ are, where, $\eta_0$ and $s$ are the
mean and dispersion that define the above folded normal
distribution. Out of these three parameters, only two are independent
since $\eta(z)$ is known on the surface, i.e. $\eta^{(1)}$
is known (see Section~\ref{sec:correction}). Thus, by setting
$\eta^{(1)}=2Q\exp[-\eta_0^2/2s^2]$, we relate $\eta_0$ deterministically
to $s$ and $Q$. In this model of the correction function, we put
folded normal priors on $Q$ and $s$. Thus,
this model of the correction function is parametric in the sense that
$\eta(z)$ is parametrised, and priors are put on its unknown
parameters. 

\subsection{Distribution-free model for correction function}
\label{sec:firstphi} 
\noindent
In this model for the correction function, we choose folded normal
priors for $\eta^{(k)}$, i.e. $\pi(\eta^{(k)})={\cal
  N}_F(m_\eta^{(k)},s)$ with $m_\eta^{(k)}=\displaystyle{Q\left[\exp\left(-\frac{(h^{(k)}-\eta_0)^2}{2s^2}\right) + \exp\left(-\frac{(h^{(k)}+\eta_0)^2}{2s^2}\right)\right]}$. The folded normal priors on $\eta^{(\cdot)}$ 
\ctp{folded} underlie the
fact that $\eta^{(k)}\geq 0,\:\forall\:k=1,\ldots,N_{{\textrm{eng}}}$ and that
there is a non-zero probability for $\eta^{(k)}$ to be zero. Thus,
folded normal and truncated normal priors for $\eta^{(k)}$ would be
relevant but gamma priors would not be. Also, microscopy theory allows
for the kernel at the surface to be known deterministically, i.e.
$\eta^{(1)}$ is known (Section~\ref{sec:correction}). Then setting
$\pi(\eta^{(1)})=$1, we relate $Q, \eta_0$ and $s$. This
relation is used to compute $s$, while uniform priors are assigned to
the hyper-parameters $\eta_0$, $Q$. We refer to this as the
``distribution-free model of $\eta(z)$''.

We illustrate the effects of both modelling strategies in simulation
studies that are presented in Section~\ref{sec:illustrations}.

The correction function is normalised by
${\widehat{\eta}}^{(1)}/\eta^{(1)}$, where ${\widehat{\eta}}^{(1)}$ is
the estimated unscaled correction function in the first $Z$-bin,
i.e. for $z\in[0,h^{(1)})$ and $\eta^{(1)}$ is the
  measured value of the correction function at the surface of the
  material, given in microscopy literature.  It is to be noted that
  the available knowledge of $\eta^{(1)}$ allows for the identifiability
  of the amplitudes of $\rho(x,y,z)$ and $\eta(z)$.

\section{Priors on sparsity of the unknown density}
\label{sec:sparse}
\noindent
In this section we present the adaptive priors on the sparsity in the
density, developed by examining the geometrical aspects of the
problem.

As we are trying to develop priors on the sparsity, we begin by
identifying the voxels in which density is zero,
i.e. $\xi_i^{(k)}=0$. This identification will be made possible by
invoking the nature of the projection operator ${\cal
  C(\cdot)}$. For example, we realise that it is possible for
the measured image datum ${\tilde I}_i^{(k)}$ collected from the
$ik$-th interaction-volume to be non-zero, even when density in the
$ik$-th voxel is zero, owing to contributions to ${\tilde I}_i^{(k)}$
from neighbouring voxels that are included within the $ik$-th
interaction-volume. Such understanding is used to identify the voxels
in which density is zero. For a voxel in which density is non-zero, we
subsequently learn the value of the density. The constraints that lead
to the identification of voxels with null density can then be
introduced into the model via the prior structure in the following ways.
\begin{enumerate}
\item We could check if ${\cal
  C}({\rho}\ast{\eta})_i^{(k)}$ = ${\cal
  C}({\rho}\ast{\eta})_{-im}^{(k)}$, (where ${\cal
  C}(\cdot)_{-im}^{(k)} :=$ projection onto the centre of the
  $ik$-th interaction-volume without including density from the
  $mk$-th voxel, i.e. without including ${\xi}_m^{(k)}$). If so, then
  ${\xi}_m^{(k)}=0$. However, this check would demand the computation
  of ${\cal C}({\rho}\ast{\eta})_i^{(k)}$ over a given
  interaction-volume, as many times as there are voxels that lie fully
  or partially inside this interaction-volume. Such computation is
  avoided owing to its being computationally
  intensive. Instead, we opt for a probabilistic suggestion for when
  density in a voxel is zero. Above, $i=1,2,\ldots,N_{{\textrm{data}}}$,
  $k=1,2,\ldots,N_{{\textrm{eng}}}$.

\item We expect that the projection of $\rho\ast\eta$ onto the centre $(x_i,y_i,0)$ of the interaction-volume achieved at $E=\epsilon_k$ will in general exceed the projection onto the same central point of a smaller interaction volume (at $E=\epsilon_{k-1}$). However, if the density of the $ik$-th voxel is zero or very low,
the contributions from the interaction-volume generated at $E=\epsilon_k$  may not exceed that from the same generated at $E=\epsilon_{k-1}$. Thus, it might be true that 
\begin{equation}
\label{eqn:constraint1}
{\cal C}({\rho}\ast{\eta})_i^{(k)} \leq {\cal C}({\rho}\ast{\eta})_i^{(k-1)} \Longrightarrow \xi_i^{(k)}=0 
  \end{equation} 
$\forall\:i,k$. This statement is true with some probability. An
alternate representation of the statement~\ref{eqn:constraint1} is
achieved as follows. We define the random variable $\tau_i^{(k)}$,
with $0< \tau_i^{(k)}\leq 1$, such that
\begin{eqnarray}
\tau_i^{(k)} &:=& \displaystyle{\frac{{\cal C}(\rho\ast\eta)_i^{(k)}}{{\cal C}(\rho\ast\eta)_i^{(k-1)}}},\:\:
{\textrm{if}}\quad {\cal C}(\rho\ast\eta)_i^{(k)} \leq {\cal C}(\rho\ast\eta)_i^{(k-1)}\quad\mbox{and}\quad {\cal C}(\rho\ast\eta)_i^{(k-1)}\neq 0 \\
\tau_i^{(k)} &:=& 1 \quad {\textrm{otherwise}} \nonumber 
\end{eqnarray}
Then the statement~\ref{eqn:constraint1} is the same as the statement:
``it might be true that $\tau_i^{(k)} \leq 1\Longrightarrow\xi_i^{(k)}
=0$ or $\xi_i^{(k)}$ is very low'', with some probability
$\nu(\tau_i^{(k)})$. In fact, as the projection from the bigger
interaction-volume is in general in excess of that from a smaller
interaction-volume, we understand that closer ${\cal
  C}(\rho\ast\eta)_i^{(k)}$ is to ${\cal C}(\rho\ast\eta)_i^{(k-1)}$,
higher is the probability that $\xi_i^{(k)}$ is close to 0. In other
words, the smaller is $\tau_i^{(k)}$, higher is the probability
$\nu(\tau_i^{(k)})$ that $\xi_i^{(k)}$ is close to 0, where
  \begin{equation}
  \label{eqn:tau}
  \nu(\tau_i^{(k)})=p^{\tau_i^{(k)}}(1-p)^{1-\tau_i^{(k)}}, 
  \end{equation}
  with the hyper-parameter $p$ controlling the non-linearity of
  response of the function $\nu(\tau_i^{(k)})$ to increase in
  $\tau_i^{(k)}$. The advantage of the chosen form of
  $\nu(\cdot)$ is that it is monotonic and its response to increasing
  value of its argument is controlled by a single parameter, namely
  $p$. We assign $p$ a a hyper prior that is uniform over the
  experimentally-determined range of [0.6, 0.99], to ensure that
  $\nu(\tau_i^{(k)})$ is flatter for lower $p$ and steeper for higher $p$, (as
  $\tau_i^{(k)}$ moves across the range $(0,1]$). The prior on the
density parameter $\xi_i^{(k)}$ is then defined as
  \begin{equation}
  \label{eqn:prior_spec}
  \pi_0(\xi_i^{(k)})= \displaystyle{{\exp\left[-\left(\xi_i^{(k)}\nu(\tau_i^{(k)})\right)^2\right]}}.
  \end{equation}
Thus, $\pi_0(\xi_i^{(k)})\in$[0,1] $\forall\:i,k$. Any normalisation
constant on the prior can be subsumed into the normalisation of the
posterior of the unknowns given the data; had we used a normalisation
of $\sqrt{4\pi/p^2}$, for all $i$, at those $k$ for which ${\cal
  C}(\rho\ast\eta)_i^{(k)} > {\cal C}(\rho\ast\eta)_i^{(k-1)}$ or
${\cal C}(\rho\ast\eta)_i^{(k-1)}=0$, the prior on $\xi_i^{(k)}$ would
have reduced to being a normal ${\cal N}(0,2/p^2)$. This is because,
for such $\xi_i^{(k)}$, $\tau_i^{(k)}=1$ so that
$\nu(\tau_i^{(k)})=p$. However, for $\tau_i^{(k)} < 1$ the prior on
$\xi_i^{(k)}$ is non-normal as the dispersion is itself dependent on
$\xi_i^{(k)}$. This prior then adapts to the sparsity of the material
density distribution. We do not aim to estimate the degree of sparsity
in the material density in our work but aim to develop a prior
$\pi_0(\xi_i^{(k)})$ $\forall\:i,k$ so that
$\{\xi_i^{(k)}\}_{k=1}^{N_{{\textrm{eng}}}}$ sampled from this prior
represents the material density distribution at the given $i$, however
sparse the vector
$(\xi_i^{(1)},\xi_i^{(2)},\ldots,\xi_i^{(N_{{\textrm{eng}}})})^T$
is. Evidently, we do not use a mixture prior but opt for no mixing as
in \ctn{greenshteinpark}. Indeed, in our prior structure, the term
$\left(\xi_i^{(k)}\nu(\tau_i^{(k)}\right)^2$ could have been replaced
by $\left\vert\xi_i^{(k)}\nu(\tau_i^{(k)}\right\vert$, (as in
parametric Laplace priors suggested by \ctn{parkcasella}, \ctn{hans},
\ctn{johnstonesilverman}), i.e. by $\xi_i^{(k)}\nu(\tau_i^{(k)})$
since $\xi_i^{(k)}$ is non-negative, but as far as sparsity tracking
is concerned--which is our focus in developing this prior here--the
prior in Equaion~\ref{eqn:prior_spec} suffices.
  \end{enumerate}
That such a prior probability density sensitively adapts to the
sparsity in the material density distribution, is brought about in the
results of 2 simulation studies shown in
Figure~\ref{fig:prior_example}. In these studies, the density
parameter values in the $ik$-th voxel are simulated from 2 simplistic
toy models that differ from each other in the degree of sparsity of
the true material density distribution: $\xi_i^{(k)}=u_1^{10}/u_2$, and
$\xi_i^{(k)}=u_3^{10}$ respectively, (where $u_1, u_2, u_3$ are
uniformly distributed random numbers in $[0,1]$), at a chosen $i$ and
energy indices $k=1,2,\ldots,10$. In the simulations we specify the
beam penetration depth $h^{(k)}\propto \epsilon_k^{1.67}$ as suggested
by \ctn{kanaya}; as any interaction-volume is hemispherical, its
radius $R0^{(k)}=h^{(k)}$.  The kernel parameters $\eta^{(k)}$ are
generated from a quadratic function of $h^{(k)}$ with noise added. In
the simulations, the material is imaged at resolution $\omega$ such
that $\pi[R0^{(10)}]^2 \leq \omega^2$, i.e. the ``1st model'' is relevant (see Section~\ref{sec:cases_new}). This allows for simplification
of the computation of ${\cal C}(\rho\ast\eta)_i^{(k)}$ according to
Equation~\ref{eqn:general}. Then at
this $i$, for $k=1,2,\ldots,10$, $\xi_i^{(k)}$ are plotted in
Figure~\ref{fig:prior_example} against $k$, as is the logarithm of the
prior $\pi_0(\xi_i^{(k)})$ computed according to
Equation~\ref{eqn:prior_spec}, with $p$ held as a random number,
uniform in [0.6,0.99]. Logarithm of the priors are also plotted as a
function of the material density parameter. We see from
the figure that the prior developed here tracks the sparsity of the
vector $(\xi_i^{(1)},\xi_i^{(2)},\ldots,\xi_i^{(N_{{\textrm{eng}}})})^T$ well.
%
\begin{figure}[!h]
     \begin{center}
  {
       \includegraphics[width=14cm]{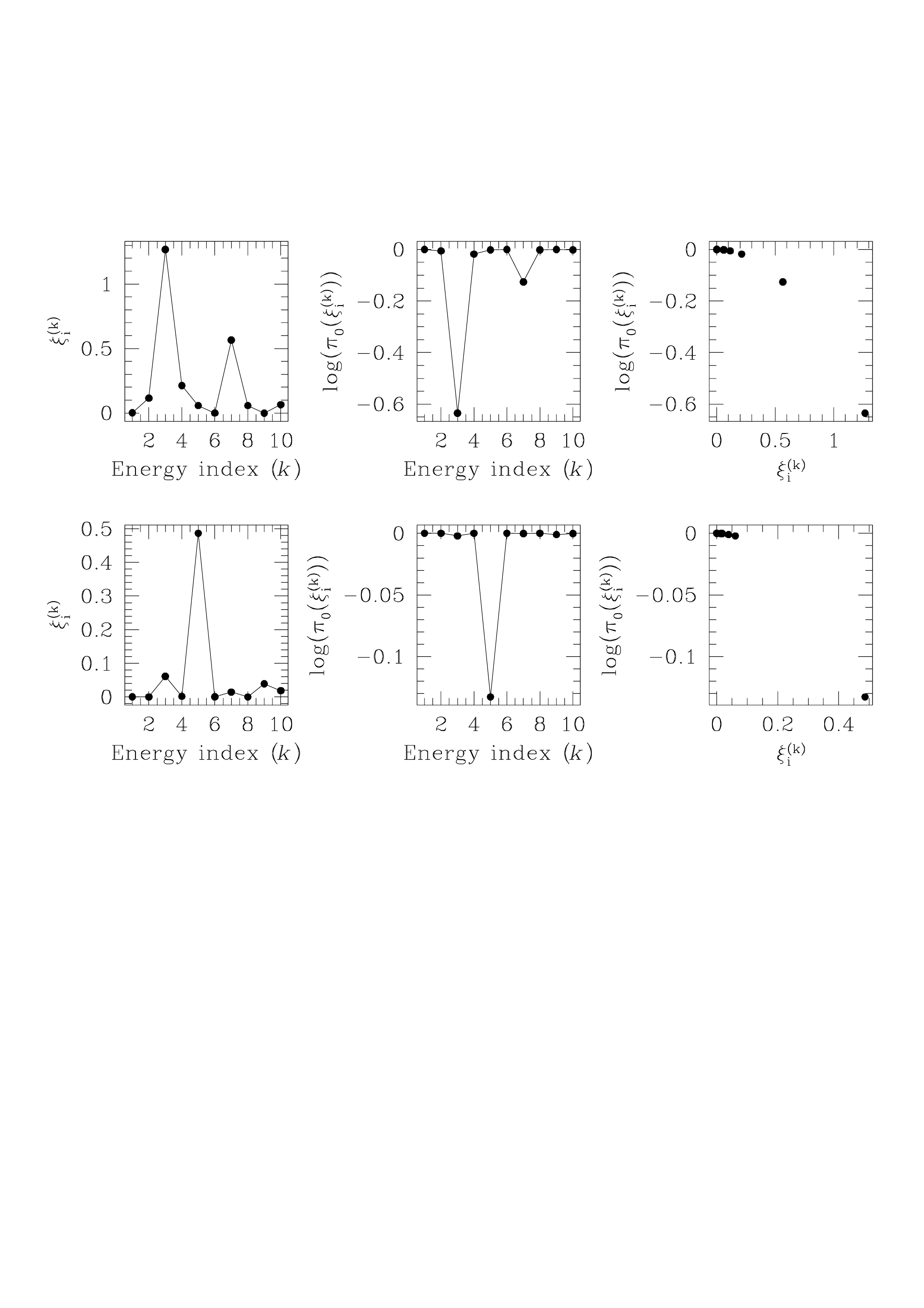}
  }
     \end{center}
\vspace*{-3in}
\caption{\small{{\it Top:} in the left panel black filled circles
    depict values of simulated material density parameters
    $\xi_i^{(k)}=u_1^{10}/u_2$, $u_1,u_2\sim{\cal U}[0,1]$, at an
    arbitrary beam position index $i$, as a function of the energy
    index $k$, for $k=1,2,\ldots,10$. Log of the prior
    $\pi_0(\xi_i^{(k)})$, as given in Equation~\ref{eqn:prior_spec},
    is shown in the middle panel as a function of $k$ for $p\sim{\cal
      U}[0.6,0.99]$. The log prior is plotted against the true values
    of $\xi_i^{(k)}$ in black filled circles in the right
    panel. 
    {\it Bottom:} As
    in the top panels, except that this simulation is of a sparser
    material density distribution with density parameters generated as
    $\xi_i^{(k)}=u_1^{10}$. 
    }}
\label{fig:prior_example}
\end{figure}

\section{Three models for image inversion}
\label{sec:model}
\noindent
In this section we develop the projection of the convolution of the unknown density and kernel onto the centre of the $ik$-th interaction volume, for each of the 3 models discussed in Section~\ref{sec:cases_new}, motivated by the 3 different classes of image data classified by their resolution.    
The difference between the size of a voxel and that of an
interaction-volume determines the difficulty in the inversion of
to the image data. 

As explained in Section~\ref{sec:cases_new}, we attempt to identify
means of dimensionality reduction, i.e. reducing the number of
integrals involved in the sequential projection of $\rho\ast\eta$,
(see Equation~\ref{eqn:general}). We do this by identifying isotropy
in the distribution of the material density within the
interaction-volume when possible, leading to elimination of the
requirement of averaging over the angular coordinate. 
  
\subsection{1st and 2nd-models - Low resolution}
\label{sec:lowresol}
\noindent
This class of image resolutions 
($\omega\sim 1\mu m$), pertain to the case when the system
is imaged by an SEM in X-rays.

\subsubsection{High-${\cal Z}$ Systems}
\label{sec:highz}
\noindent
In the 1st~model, for the high-${\cal Z}$ materials, the cross-sectional area of an
interaction-volume attained at any $E$ will fit wholly inside that of a voxel, where the cross-section is on a $Z$-constant plane.
Then at a given depth, the density inside this interaction-volume is
that inside the voxel, i.e. is a constant. Thus, dimensionality
reduction is most easily achieved in this case with the density
function isotropic inside an interaction-volume, bearing
no dependence on the angular coordinate $\theta$ of
Equation~\ref{eqn:general}. Then when we revisit Equation~\ref{eqn:general}
in its discretised form, the discrete convolution $\rho\ast\eta$ within the $k$-th $Z$-bin and the $i$-th beam pointing gives $(\rho\ast\eta)_i^{(k)} = \displaystyle{\sum_{m=1}^k \xi_i^{(m)}\eta^{(k-m)}}$, so that the projection onto the centre of the $ik$-th interaction-volume is discretised to give
in Equation~\ref{eqn:general}, gives
the following.
\begin{equation}
\label{eqn:highz_chain}
\displaystyle{{\cal C}(\rho\ast\eta)_i^{(k)}} =
\displaystyle{
\frac{1}{(R0^{(k)})^2}}
\displaystyle{
{\LARGE{\sum_{q=0}^{k}}}
\left[\frac{(R0^{(q)})^2 -(R0^{(q-1)})^2}{2} 
\left\{\sum_{t=0}^{q}\left(\left(h^{(t)} - h^{(t-1)}\right)
\displaystyle{\sum_{m=1}^t \xi^{(m)}_{i} {\eta}^{(t-m)}}\right)\right\}
                     \right]}. 
\end{equation}

\subsubsection{Low-${\cal Z}$ Systems}
\label{sec:lowz}
\noindent
In the 2nd model, for low-${\cal Z}$ materials, for any $i\in
\{1,\ldots,N_{{\textrm{data}}}\}$, and $k > k_{in}$, the cross-sectional area of the
$ik$-th interaction-volume on the $Z$=0 plane will spill out of the
$i1$-th voxel into the neighbouring voxels. Then only for at
$k=1,2,\ldots,k_{in}$, isotropy within the $ik$-th interaction volume
holds good. In general, the projection ${\cal
  C}(\rho\ast\eta)_i^{(k)}$ onto the centre of the $ik$-th
interaction-volume includes contributions from all those voxels that
lie wholly as well as partly inside this interaction volume. This
projection is then computed by first learning the weighted average of
the contributions from all the relevant voxels and then distributing
this learnt weighted average over the identified voxels in the
proportion of the weights which in turn are in proportion of the voxel
volume that lies within the $ik$-th interaction-volume. Thus, for the
$i' k'$-th voxel that is a neighbour of the $ik'$-th voxel that lies
wholly inside the $ik$-th interaction-volume, let this proportion
be $w^{(k')}_{i'\vert i}$, where $k'\leq k$.

In general, at a given $Z$, any bulk voxel has 8
neighbouring voxels and when the voxel lies at the corner or edge of
the sample, number of nearest
neighbours is less than 8. Then, at a given $Z$, there will be contribution from at most 9 voxels
towards ${\cal
  C}(\rho\ast\eta)_i^{(k)}$. At $Z=z\in[h^{(k'-1)},h^{(k')}]$, for any $i$, let the maximum number of nearest neighbours be $i_{max}\vert i,k$ so that $i_{max}\vert i,k \leq 9$. The notation for this number bears its dependence on both $i$ and $k$. 
We define
${\bar\xi}_i^{(k')}$ as the weighted average of the densities in the
$ik'$-th voxel and its nearest neighbours that are fully or partially
included within the $ik$-th interaction-volume. Here $k' \leq k$,
$k=1,\ldots,N_{{\textrm{eng}}}$, $i=1,\ldots,N_{{\textrm{data}}}$. Thus, \\
\begin{equation}
\label{eqn:chi_lolo}
{\bar\xi}_i^{(k')}:=
  \displaystyle{\sum_{i'=1}^{i_{max}\vert i,k'}\xi_{i'\vert i}^{(k')}
    w^{(k')}_{i'\vert i}},
\end{equation}
where the $i'$-th neighbour of the $ik'$-th voxel at the same depth,
harbours the density $\xi_{i'\vert i}^{(k')}$ and there is a maximum
of $i_{max}\vert i,k'$ such neighbours. The effect of this averaging
over the nearest neighbours at this depth, is equivalent to averaging
over the angular coordinate $\theta$ and results in the angular
averaged density ${\bar\xi}_i^{(k')}$ at this $Z$, which by
definition, is isotropic, i.e. independent of the angular
coordinate. Then the projection onto the centre of the $ik$-th
interaction-volume, for $k > k_{in}$, is computed as in
Equation~\ref{eqn:highz_chain} with the material density term
$\xi_i^{(\cdot)}$ on the RHS of this equation replaced by the
isotropic angular averaged density ${\bar\xi}_i^{(\cdot)}$. However,
for $k \leq k_{in}$, the projection is computed as in
Equation~\ref{eqn:highz_chain}.


\subsection{3$^{rd}$ model - Fine resolution}
\label{sec:fine}
\noindent
In certain imaging techniques, such as imaging in Back Scattered
Electrons by an SEM or FESEM, the resolution $\omega \ll
R0_i^{(N_{{\textrm{eng}}})}$, $\forall\: i = 1,\ldots, N_{{\textrm{data}}}$. In this case,
at a given $Z$, material density from multiple voxels contribute to
${\cal C}(\rho\ast\eta)_i^{(k)}$ and the three integrals involved in
the computation of this projection, as mentioned in
Equation~\ref{eqn:general}, cannot be avoided. Knowing the shape of
the interaction-volume, it is possible to identify voxels that
live partly or wholly inside the $ik$-th interaction-volume as well as
compute the fractional or full volume of each such voxel inside the $ik$-th
interaction volume.

For this model, the projection equation is written in terms of the
coordinates $(x,y,z)$ of a point instead of the polar coordinate
representation of this point, where the point in question lies inside
the $ik$-th interaction-volume that is centred at $(x_i,y_i,0)$. Then
inside the $ik$-th interaction-volume, at a given $x$ and $y$,
$z\in\left[0,\sqrt{\left(R0^{(k)}\right)^2 - (x-x_i)^2 -(y-y_i)^2}\right]$. For \\
\begin{equation}
x-x_i\in[(u-1)\omega, u\omega] \quad
  u=\displaystyle{-(int)\left(\frac{R0^{(k)}}{\omega}\right)+1,
    -(int)\left(\frac{R0^{(k)}}{\omega}\right)+2,\ldots,(int)\left(\frac{R0^{(k)}}{\omega}\right)},\nonumber
\end{equation}
  the index $p_u(k)$ of the $Y$-bin of voxels lying fully inside the
  $ik$-th interaction volume, with respect to the centre of this interaction-volume, are
\begin{equation}
  p_u(k)=-q_u(k),-q_u(k)+1,\ldots,0,1,2,\ldots,q_u(k)-1,q_u(k), \nonumber
\end{equation} where
\begin{equation}  
q_u(k):=\displaystyle{(int)\left(\frac{\sqrt{(R0^{(k)})^2 - u^2\omega^2}}{\omega}\right)}.\nonumber
\end{equation}
Then using the definition of the beam-pointing index in terms of the $X$-bin and $Y$-bin indices of voxels (see Equation~\ref{eqn:modulo}), we get the beam-pointing index $\varrho_u(i,k)$ of voxels lying wholly inside the $ik$-th interaction-volume, for a given $u$ is 
\begin{equation}
\varrho_u(i,k) = i-q_u(k)\sqrt{N_{{\textrm{data}}}}+u,\:
i-\left(q_u(k)-1\right)\sqrt{N_{{\textrm{data}}}}+u,\:
\ldots, \:
i-\left(q_u(k)-2q_u(k)\right)\sqrt{N_{{\textrm{data}}}}+u, \nonumber
\end{equation} 
i.e. for a given $u$, $\varrho_u(i,k)=i+p_u(k)\sqrt{N_{{\textrm{data}}}}+u$.

  The depth coordinate of voxels with beam-pointing index $\varrho_u(i,k)$
  lying inside the $ik$-th interaction-volume are
  $z\in\displaystyle{\left[0, {\sqrt{(R0^{(k)})^2 - (p_u(k))^2
          \omega^2 - u^2\omega^2}}\right]}$ so that the energy index of voxels lying fully
  inside at $Y$-bin $p_u(k)$ and $x-x_i\in[(u-1)\omega,u\omega)$ are $\in[1,t_{max}(u)]$ where $t_{max}(u)\in{\mathbb Z}_{> 0}$ such that
$t_{max}(u)=\max\{1,2,\ldots,N_{{\textrm{eng}}}\}$ that satisfies
  \begin{equation}
  \displaystyle{h^{(t_{max}(u))}}\leq {\sqrt{(R0^{(k)})^2 - (p_u(k))^2
          \omega^2 - u^2\omega^2}}.\nonumber
\end{equation} 
At this $Y$-bin index $p_u(k)$, there will also exist a voxel lying
partly inside the $ik$-th interaction-volume, at the
$(t_{max}(u)+1)$-th $Z$-bin, between depths $h^{t_{max}(u)}$
and ${\sqrt{(R0^{(k)})^2 - (p_u(k))^2 \omega^2 - u^2\omega^2}}$. In addition, the
projection ${\cal C}(\rho\star\eta)_i^{(k)}$ will include
contributions from voxels at the edge of this interaction-volume,
lying partly inside it; the beam-pointing indices of such voxels will
be $i-\left(q_u(k)+1\right)\sqrt{N_{{\textrm{data}}}}+u$ and $i+\left(q_u(k)+1\right)\sqrt{N_{{\textrm{data}}}}+u$ for
$x-x_i\in[(u-1)\omega,u\omega]$ with $u$ and $q(u)$ defined as
above. Lastly, parts of voxels at beam-pointing indices
$i-\displaystyle{(int)\left(\frac{R0^{(k)}}{\omega}\right)}-1$ and
$i+\displaystyle{(int)\left(\frac{R0^{(k)}}{\omega}\right)}+1$ will also
be contained inside the $ik$-th interaction-volume. These voxels at the edges extend into the 1st $Z$-bin. We can compute the fraction $r_{a}^{(b)}(i,k)$ of the volume of the $ab$-th voxel contained partly within the $ik$-th interaction-volume by tracking the geometry of the system. Then using the discretised version of Equation~\ref{eqn:general}, we write,\\
\hspace*{.8cm}
$\displaystyle{\omega^{-2}(R0^{(k)})^2 {\cal C}(\rho\ast\eta)_i^{(k)}} =$
\begin{eqnarray}
\label{eqn:lowres2} 
&& \displaystyle{\sum_{u=-(int)(R0^{(k)}/\omega)}^{(int)(R0^{(k)}/\omega)}\:\:
\sum_{p_u(k)=-q_u(k)}^{q_u(k)} 
\sum_{t=1}^{t_{max}(u)}
\left[
\left(h^{(t)}-h^{(t-1)}\right)
                           \sum_{m=1}^t {\xi}_{\varrho_u(i,k)}^{(m)} {\eta}^{(t-m)}\right]} + \nonumber \\
&&
              \displaystyle{
\sum_{u=-(int)(R0^{(k)}/\omega)}^{(int)(R0^{(k)}/\omega)}\:\:
\sum_{p_u(k)=-q_u(k)}^{q_u(k)} 
\left[
\left(\sqrt{(R0^{(k)})^2-((q_u(k))^2+u^2)\omega^2} - h^{(t_{max}(u))}\right)
     \sum_{m=1}^{t_{max}(u)+1}  \chi_{\varrho_u}^{(m)}(i,k){\eta}^{(t_{max}(u)+1-m)}\right]} + \nonumber \\
&&              \displaystyle{
\sum_{\ell(i,k)}
\left[
\left(h^{(1)}\right)
      {r}_{\ell}^{(1)}(i,k) {\xi}_{\ell(i,k)}^{(1)} {\eta}^{(0)}\right]} 
\end{eqnarray}
where\: $\ell(i,k)=i-\displaystyle{(int)\left(\frac{R0^{(k)}}{\omega}\right)}-1, i+\displaystyle{(int)\left(\frac{R0^{(k)}}{\omega}\right)}+1, i-\left(q_u(k)+1\right)\sqrt{N_{{\textrm{data}}}}+u, i+\left(q_u(k)+1\right)\sqrt{N_{{\textrm{data}}}}+u$, for $u=\displaystyle{-(int)\left(\frac{R0^{(k)}}{\omega}\right)}+1,\displaystyle{-(int)\left(\frac{R0^{(k)}}{\omega}\right)}+2,\ldots,\displaystyle{(int)\left(\frac{R0^{(k)}}{\omega}\right)}$,\\
$\chi_{\varrho_u}^{(m)}(i,k):=r_{\varrho_u}^{(m)}(i,k){\xi}_{\varrho_u(i,k)}^{(m)}$, \\
and $\eta^{(1)}$ is the measured value of the kernel on the system surface (see Section~\ref{sec:correction}).


\section{Inference}
\label{sec:inference}
\noindent
In this work, we learn the unknown material density and kernel
parameters using the mismatch between the data $\{{\tilde
  I}_i^{(k)}\}^{k=N_{{\textrm{eng}}};\:i=N_{{\textrm{data}}}}_{k=1;\:i=1}$ and
$\{{\cal
  C}(\rho\ast\eta)_i^{(k)}\}^{k=N_{{\textrm{eng}}};\:i=N_{{\textrm{data}}}}_{k=1;\:i=1}$, in
terms of which, the likelihood is defined.  The material density and
kernel are convolved, and this convolution is sequentially projected
onto the centre of the the $ik$-th interaction volume, in the model
(out of the 3 models, depending on the resolution of the image data
at hand). Thus, if for the given material, the available data is such
that $\omega^2 \geq \pi[R0^{(N_{{\textrm{eng}}})}]^2$, then we use
Equation~\ref{eqn:highz_chain} to implement ${\cal
  C}(\rho\ast\eta)_i^{(k)}$.  If $\omega^2 \geq \pi[R0^{(k_{in})}]^2$
but $\omega^2 < \pi[R0^{(k_{in}+1)}]^2$ , Equation~\ref{eqn:highz_chain}
is relevant while for $\omega^2 \leq \pi[R0^{(k)}]^2$
the nearest neighbour averaging is invoked; see Section~\ref{sec:lowz}.

We choose to work with a Gaussian likelihood:
\begin{eqnarray}
\label{eqn:mainlikeli}
\\
&&{\cal L}\left(\xi_1^{(1)},\ldots,\xi_1^{(N_{{\textrm{eng}}})},\ldots,\xi_{N_{{\textrm{data}}}}^{(1)},\ldots,\xi_{N_{{\textrm{data}}}}^{(N_{{\textrm{eng}}})},\eta^{(1)},\ldots,\eta^{(N_{{\textrm{eng}}})}\vert {\tilde I}_1^{(1)}, {\tilde I}_1^{(2)},\ldots,{\tilde I}_1^{(N_{{\textrm{eng}}})},{\tilde I}_2^{(1)},\ldots,{\tilde I}_2^{(N_{{\textrm{eng}}})},\ldots,{\tilde I}_{N_{{\textrm{data}}}}^{(N_{{\textrm{eng}}})}\right) = \nonumber \\
&&\displaystyle{\prod_{k=1}^{N_{{\textrm{eng}}}}\prod_{i=1}^{N_{{\textrm{data}}}}\frac{1}{\sqrt{2\pi}\sigma_i^{(k)}}\exp\left[-\frac{\left({\cal
  C}(\rho\ast\eta)_i^{(k)} - {\tilde I}_i^{(k)}\right)^2}{2\left(\sigma_i^{(k)}\right)^2}\right]}, 
\end{eqnarray}
where the noise in the image datum ${\tilde{I}}_i^{(k)}$ is
$\sigma_i^{(k)}$; it is discussed in Section~\ref{sec:data}.

Towards the learning of the unknown functions, the joint posterior
probability density of the unknown parameters, given the image data,
is defined using Bayes rule as
\begin{eqnarray}
\label{eqn:posterior}
\\
&&\displaystyle{\pi\left(\xi_1^{(1)},\ldots,\xi_1^{(N_{{\textrm{eng}}})},\ldots,\xi_{N_{{\textrm{data}}}}^{(1)},\ldots,\xi_{N_{{\textrm{data}}}}^{(N_{{\textrm{eng}}})},\eta^{(1)},\ldots,\eta^{(N_{{\textrm{eng}}})}\vert {\tilde I}_1^{(1)},\ldots,{\tilde I}_{N_{{\textrm{data}}}}^{(N_{{\textrm{eng}}})}\right)} \propto \nonumber \\
&&\displaystyle{{\cal L}\left(\xi_1^{(1)},\ldots,\xi_1^{(N_{{\textrm{eng}}})},\ldots,\xi_{N_{{\textrm{data}}}}^{(1)},\ldots,\xi_{N_{{\textrm{data}}}}^{(N_{{\textrm{eng}}})},\eta^{(1)},\ldots,\eta^{(N_{{\textrm{eng}}})}\vert {\tilde I}_1^{(1)},\ldots,{\tilde I}_1^{(N_{{\textrm{eng}}})},\ldots,{\tilde I}_2^{(N_{{\textrm{eng}}})},\ldots,{\tilde I}_{N_{{\textrm{data}}}}^{(N_{{\textrm{eng}}})}\right)\times} \nonumber \\
&& \displaystyle{\pi_0\left(\xi_1^{(1)},\xi_1^{(2)},\ldots,\xi_{N_{{\textrm{data}}}}^{(1)},\ldots,\xi_{N_{{\textrm{data}}}}^{(1)},\ldots,\xi_{N_{{\textrm{data}}}}^{(N_{{\textrm{eng}}})}\right)
\nu_0\left(\eta^{(1)},\ldots,\eta^{(N_{{\textrm{eng}}})}\right)} \nonumber 
\end{eqnarray}
where $\pi_0(\xi_1^{(1)},\ldots,\xi_{N_{{\textrm{data}}}}^{(N_{{\textrm{eng}}})})$ is the
adaptive prior probability on the sparsity of the density function, as
discussed in Section~\ref{sec:sparse}. Also,
$\nu_0(\eta^{(1)},\ldots,\eta^{(N_{{\textrm{eng}}})})$ is the prior on the
kernel, discussed in Section~\ref{sec:correction}.

Once the posterior probability density of the material density function and
kernel, given the image data is defined, we use the adaptive
Metropolis within Gibbs \ctp{haario} to generate
posterior samples.

At the $n$-th iteration, $n=1,\ldots,N_{max}$, $\xi_i^{(k)}$ is
proposed from a folded normal density\footnotemark. This choice of the
proposal density is motivated by a non-zero probability for
$\xi_i^{(k)}$ to be zero. The latter constraint rules out a gamma or
beta density that $\xi_i^{(k)}$ is proposed from but truncated and
folded normal densities are acceptable
$k=1\ldots,N_{{\textrm{eng}}},\:i=1,\ldots,N_{{\textrm{data}}}$. Of these we choose the
easily computable folded normal proposal density \ctp{folded}. The
proposed density in the $n$-th iteration, in the $ik$-th voxel is
\begin{equation}
\label{eqn:qdefn}
{\tilde\xi}_i^{(k)}\vert_n \sim {\cal N}_F(\mu_i^{(k)}\vert_n, \varsigma_i^{(k)}\vert_n) 
\end{equation}
while the current density in this voxel at the $n$-th iteration is
defined as ${\xi}_i^{(k)}\vert_{n}$. \footnotetext{The distribution
  ${\cal N}_F(a, b)$ is the folded normal distribution with mean
  $a\in{\mathbb R},\: a > 0$ and standard deviation $b\in{\mathbb
    R},\:b>0$ \ctp{folded}}. We choose the mean and variance of this
proposal density to be
\begin{eqnarray}
\\
\mu_i^{(k)}\vert_n &=& \xi_i^{(k)}\vert_{n-1},\quad \forall\:n=1,\ldots,N_{max} \nonumber \\
\left(\varsigma_i^{(k)}\vert_n\right)^2  &=&  
\begin{cases}
\displaystyle{\frac{\sum_{p=n_0}^{n-1}\left(\xi_i^{(k)}\vert_p\right)^2}{n-n_0} -
\left[\frac{\sum_{p=n_0}^{n-1}\left(\xi_i^{(k)}\vert_p\right)}{n-n_0}\right]^2}\quad \textrm{if}\quad n\geq n_0 \nonumber \\
T\xi_i^{(k)}\vert_0 \quad{\textrm{if}}\quad n<n_0  \nonumber 
\end{cases}
\end{eqnarray}
The random variable $T$ is considered to be uniformly distributed,
i.e. $T\sim U(0,1]$.  Thus, for $n \geq n_0$,
  the proposal density is adaptive, \cite**{haario}. We choose
  $n_0=10^4$ and $N_{max}$ is of the order of 8$\times 10^{5}$.

We choose $\xi_i^{(k)}\vert_0$ by assigning
  constant density to the voxels that constitute the $ik$-th
  interaction-volume,
  $k=1\ldots,N_{{\textrm{eng}}},\:i=1,\ldots,N_{{\textrm{data}}}$. 

When a distribution-free model for the kernel is used, in
the $n$-th iteration, $\eta^{(k)}$ is proposed from exponential proposal
density with a constant rate parameter $s_1$. 
When the parametric model for the kernel is used,
$\eta(z)$ is calculated as given in Equation~\ref{eqn:firstphi2},
conditional on the values of 2 the parameters $Q$ and $\eta_0$. The
proposed parameters at the $n$-th iteration are ${\tilde Q}_n$ and
$({\tilde\eta_0})_n$. ${\tilde Q}_n$ and $({\tilde\eta_0})_n$ are each
proposed from independent exponential proposal densities with constant
rate parameters.

Inference is
performed by sampling from the high dimensional posterior
(Equation~\ref{eqn:posterior}) using Metropolis-within-Gibbs block
update, \cite**{gilks96,chibgreenberg95}. Let the state at the $n$
iteration be
\begin{equation}
\varepsilon_n=(\xi_1^{(1)}\vert_n,\ldots,\xi_1^{(N_{{\textrm{eng}}})}\vert_n,\ldots,\xi_2^{(N_{{\textrm{eng}}})}\vert_n, \ldots, \xi_{N_{{\textrm{data}}}}^{(N_{{\textrm{eng}}})}\vert_n,\eta^{(1)}\vert_n,\ldots,\eta^{(N_{{\textrm{eng}}})}\vert_n)^T.
\end{equation}
For the implementation of the block Metropolis-Hastings, we partition the
state vector $\varepsilon_n$ as: 
\begin{equation}
\varepsilon_n^T = ((\varepsilon_n^{(\xi)})^T, (\varepsilon_n^{(\eta)})^T),\nonumber
\end{equation} 
where
\begin{eqnarray}
\varepsilon_n^{(\xi)} &=& (\xi_1^{(1)}\vert_n,\ldots,\xi_1^{(N_{{\textrm{eng}}})}\vert_n,\ldots,\xi_2^{(N_{{\textrm{eng}}})}\vert_n, \ldots, \xi_{N_{{\textrm{data}}}}^{(N_{{\textrm{eng}}})}\vert_n)^T,\nonumber \\
\varepsilon_n^{(\eta)} &=& (\eta^{(1)}\vert_n,\ldots,\eta^{(N_{{\textrm{eng}}})}\vert_n)^T.
\end{eqnarray}
Here $n=1,\ldots,N_{burn_{in}},\ldots,N_{max}$. We typically use
$N_{max} >$ 8$\times 10^5$ and $N_{burn_{in}}$=1$\times 10^5$. Then,
the state $\varepsilon_{n+1}$ is given by the successive updating of
the two blocks: ${\varepsilon}_{n+1}^{(\xi)}$ {{and}}
${\varepsilon}_{n+1}^{(\eta)}$.


\section{Posterior probability measure in small noise limit}
\label{sec:posterior}
\noindent
Here we check for uniqueness in the learnt $\rho\ast\eta$. 
With the aim of investigating the posterior probability density in the
small noise limit, we recall the chosen priors for the material density
(Section~\ref{sec:sparse}), kernel parameters
(Section~\ref{sec:correction}), and the likelihood 
function (Equation~\ref{eqn:mainlikeli}).
\begin{theorem}
\label{theorem:unique}
In the limit of small noise, $\sigma_i^{(k)}\longrightarrow 0$, the
joint posterior probability of the density and
kernel, given the image data, for all beam-pointing vectors
($i=1,\ldots,N_{{\textrm{data}}}$) and all $\epsilon_k,\:k=1,\ldots,N_{{\textrm{eng}}}$,
reduces to a product of $N_{{\textrm{data}}}\times N_{{\textrm{eng}}}$ Dirac measures, with
the $ik$-th measure centred at the solution to the equation ${\tilde
  I}_i^{(k)} = {\cal C}(\rho\ast\eta)_i^{(k)}$,
\label{theorem:main}
\end{theorem}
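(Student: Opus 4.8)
\emph{Proof plan.} The plan is to exploit the product form of the Gaussian likelihood (\ref{eqn:mainlikeli}) together with the elementary fact that a Gaussian density with vanishing bandwidth is an approximate identity, and then to carry the resulting weak limit back to the parameter space through the matrix representation ${\bf C}$ of the operator ${\cal C}$. First I would write the posterior (\ref{eqn:posterior}) as ${\cal L}\cdot\pi_0\cdot\nu_0$ and observe that ${\cal L}$ depends on the unknowns only through the vector $\mathbf m=(m_i^{(k)})$ of projected convolutions $m_i^{(k)}:={\cal C}(\rho\ast\eta)_i^{(k)}$, namely
\begin{equation}
{\cal L}=\prod_{k=1}^{N_{{\textrm{eng}}}}\prod_{i=1}^{N_{{\textrm{data}}}} g_{\sigma_i^{(k)}}\!\left(m_i^{(k)}-{\tilde I}_i^{(k)}\right),\qquad g_\sigma(t):=\frac{1}{\sqrt{2\pi}\,\sigma}\exp\!\left(-\frac{t^2}{2\sigma^2}\right). \nonumber
\end{equation}
Since $\int_{\mathbb R}g_\sigma(t)\,dt=1$ for every $\sigma>0$ and $\int_{\mathbb R}g_\sigma(t-a)f(t)\,dt\to f(a)$ as $\sigma\to0$ for every bounded continuous $f$, applying this coordinatewise shows that the measure $\prod_{i,k}g_{\sigma_i^{(k)}}(m_i^{(k)}-{\tilde I}_i^{(k)})\,d\mathbf m$ on $\mathbb R^{N_{{\textrm{data}}}N_{{\textrm{eng}}}}$ converges weakly, as all $\sigma_i^{(k)}\to0$, to the product $\prod_{i,k}\delta_{{\tilde I}_i^{(k)}}$ of $N_{{\textrm{data}}}N_{{\textrm{eng}}}$ Dirac measures.

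Next I would transport this to the parameter space. The discretised projections (e.g.\ Equation~\ref{eqn:highz_chain}) exhibit $m_i^{(k)}$ as linear in the voxelised convolution values $(\rho\ast\eta)_i^{(k)}=\sum_{\ell=1}^{k}\xi_i^{(\ell)}\eta^{(k-\ell)}$, so that $\mathbf m={\bf C}\,\mathbf v$, where $\mathbf v$ collects those values and ${\bf C}$ is the (square) matrix of ${\cal C}$. Since ${\bf C}$ has full rank --- as already used in Section~\ref{sec:posterior} to invoke its Moore--Penrose inverse --- the system ${\tilde I}_i^{(k)}={\cal C}(\rho\ast\eta)_i^{(k)}$, $i=1,\ldots,N_{{\textrm{data}}}$, $k=1,\ldots,N_{{\textrm{eng}}}$, has the unique solution $\mathbf v={\bf C}^{+}\tilde{\mathbf I}$, with $\tilde{\mathbf I}$ the vector of data $\{{\tilde I}_i^{(k)}\}$. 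The priors are bounded, continuous and strictly positive there: $\pi_0\in(0,1]$ everywhere by Equation~\ref{eqn:prior_spec}, and the folded-normal $\nu_0$ of Section~\ref{sec:correction} is positive on the admissible orthant, so they contribute only a finite, strictly positive factor to the normalising constant and do not move the point of concentration. Hence for every bounded continuous test function $\phi$ of the projected convolutions,
\begin{equation}
\frac{\displaystyle\int\phi(\mathbf m)\,{\cal L}\,\pi_0\,\nu_0}{\displaystyle\int{\cal L}\,\pi_0\,\nu_0}\ \longrightarrow\ \phi(\tilde{\mathbf I})\qquad\text{as }\sigma_i^{(k)}\to0\ \ \forall\,i,k, \nonumber
\end{equation}
i.e.\ the posterior pushed forward to the coordinates $\{(\rho\ast\eta)_i^{(k)}\}$ concentrates at $\mathbf v={\bf C}^{+}\tilde{\mathbf I}$, the solution of ${\tilde I}_i^{(k)}={\cal C}(\rho\ast\eta)_i^{(k)}$, which is the claimed product of $N_{{\textrm{data}}}N_{{\textrm{eng}}}$ Dirac measures.

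The main obstacle is the bookkeeping in this second step. One must verify that the normalising integral $\int{\cal L}\,\pi_0\,\nu_0$ stays bounded away from $0$ and $\infty$ as the bandwidths vanish --- this rests on the integrability of the folded-normal priors in the directions of $(\boldsymbol\xi,\boldsymbol\eta)$ not pinned down by the data, together with the Gaussian tails in the pinned directions --- and one must keep track of the Jacobian of $(\boldsymbol\xi,\boldsymbol\eta)\mapsto\mathbf m$, which is where the bilinearity of the convolution in $(\boldsymbol\xi,\boldsymbol\eta)$ enters and must be shown non-degenerate on a neighbourhood of the solution so that the weak limit genuinely transports. It is also worth recording that the ``product of Dirac measures'' is naturally a statement in the projected-convolution (equivalently $\mathbf v$-) coordinates: in the raw $(\boldsymbol\xi,\boldsymbol\eta)$-coordinates the limiting posterior is $\pi_0\nu_0$ restricted and renormalised to the solution set of the individually underdetermined system $\rho\ast\eta={\bf C}^{+}\tilde{\mathbf I}$, which is consistent with the identifiability discussion that follows. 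Everywhere else the argument is a standard mollifier/Laplace concentration and uses no new ideas.
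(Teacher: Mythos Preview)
Your core argument---that the Gaussian likelihood factors become approximate identities as $\sigma_i^{(k)}\to0$ while the priors contribute only bounded positive factors, so the posterior collapses to a product of Dirac measures at the solutions of ${\tilde I}_i^{(k)}={\cal C}(\rho\ast\eta)_i^{(k)}$---is exactly the paper's approach, carried out with more care about weak convergence and normalisation than the paper's brief computation. The paper's proof stops at the Dirac product in the $m_i^{(k)}$-coordinates; your second ``transport'' step and the discussion of Jacobians and identifiability in $(\bxi,\boeta)$-coordinates are additional (and match what the paper relegates to the remarks (i)--(v) \emph{after} the proof), though note one slip there: Section~\ref{sec:posterior} does not claim ${\bf C}$ has full rank---it argues ${\bf C}$ is an orthogonal projection (idempotent, symmetric), and the Moore--Penrose inverse exists for any matrix, so uniqueness of the least-squares solution is in the minimum-norm sense rather than via invertibility.
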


\begin{proof}
Logarithm of the posterior probability of the discretised
distribution-free model is
\begin{eqnarray}
\label{eqn:posteriorjalaton}
&&\log\pi\left(\xi_1^{(1)},\ldots,\xi_1^{(N_{{\textrm{eng}}})},\ldots,\xi_{N_{{\textrm{data}}}}^{(1)},\ldots,\xi_{N_{{\textrm{data}}}}^{(N_{{\textrm{eng}}})},\eta^{(1)},\ldots,\eta^{(N_{{\textrm{eng}}})}\vert {\tilde I}_1^{(1)},\ldots,{\tilde I}_{N_{{\textrm{data}}}}^{(N_{{\textrm{eng}}})}\right)
=  \nonumber \\
&&\displaystyle{
\sum_{i=1}^{N_{{\textrm{data}}}}\sum_{k=1}^{N_{{\textrm{eng}}}}
\left[-\log{\sigma_i^{(k)}}
- \left(\frac{({\tilde I}_i^{(k)} -{\cal C}(\rho\ast\eta)_i^{(k)})^2}{2(\sigma_i^{(k)})^2} 
 \right)\right]}
- \displaystyle{\sum_{k=1}^{N_{{\textrm{eng}}}}\left[\frac{(\eta^{(k)}+\eta_0^{(k)})^2}{2N(s^{(k)})^2}\right ]} 
- \sum_{i=1}^{N_{{\textrm{data}}}}\sum_{k=1}^{N_{{\textrm{eng}}}}\left[
{\left(\xi_i^{(k)}\nu(\tau_i^{(k)})\right)^2}\right] + A, \nonumber \\ 
&& 
\end{eqnarray}
where $A\in{\mathbb R}$ is a finite constant. Thus, 
\begin{eqnarray}
\label{noiselimit}
\displaystyle{
\lim_{\sigma_i^{(k)}\longrightarrow 0} \pi(\xi_1^{(1)},\ldots,\xi_{N_{{\textrm{data}}}}^{(N_{{\textrm{eng}}})},\eta^{(1)},\ldots,\eta^{(m)}
\vert{\tilde I}_1^{(1)},\ldots,{\tilde I}_{N_{{\textrm{data}}}}^{(N_{{\textrm{eng}}})})} \propto && \nonumber \\
\displaystyle{\lim_{\sigma_i^{(k)}\to 0}\left[\prod\limits_{i=1}^{N_{{\textrm{data}}}}
                                              \prod\limits_{k=1}^{N_{{\textrm{eng}}}} 
\frac{1}{\sigma_i^{(k)}}\exp\left(-\frac{({\tilde I}_i^{(k)} -{\cal C}(\rho\ast\eta)_i^{(k)})^2}{2(\sigma_i^{(k)})^2} 
 \right)\right]}. && 
\end{eqnarray}
The right hand side of this equation is the product of Dirac delta functions
centred at ${\tilde I}_i^{(k)} = {\cal C}(\rho\ast\eta)_i^{(k)}$,
for $i=1,\ldots,N_{{\textrm{data}}},\:k=1,\ldots,N_{{\textrm{eng}}}$. Thus, the
posterior probability density reduces to a product of Dirac measures
for each $i,k$, with each measure centred on the solution of the equation
${\tilde  I}_i^{(k)} = {\cal C}(\rho\ast\eta)_i^{(k)}$. 
\end{proof}

On the basis of Theorem~\ref{theorem:unique}, we arrive at the following 
important results.
\begin{enumerate}
\item[(i)] $(\rho\ast\eta)_i^{(k)}$ is the least-squares solution to
  ${\tilde I}_i^{(k)} = {\cal C}(\rho\ast\eta)_i^{(k)}$.
\item[(ii)] {The three contractive projections that act successively to project
  $\rho\ast\eta$ onto the space of functions defined over the point of
  incidence of any interaction-volume, are commutable. Thus, for
  example, the result is invariant to whether the projection onto the
  $Z$=0 plane happens first and then the projection onto the $Y$-axis
  happens or whether first the projection onto the $Y$=0 plane is
  performed, followed by the projection onto the $Z$=0 axis. Each of
  these projections is also orthogonal and is therefore represented by
  a projection matrix ${\bf P}_i$, $i=1,2,3$, that is idempotent
  (${\bf P}_i={\bf P}_i^2$) and symmetric. Then, the composition of
  these projections, i.e. the ${\cal C}$ operator, which in the matrix
  representation is ${\bf C}$, is a product of three idempotent and
  symmetric matrices that commute with each other. This implies that
  ${\bf C}$ is idempotent and symmetric, implying that ${\cal C}$ is
  an orthogonal projection.}
\item[(iii)] In the small noise limit, $(\rho\ast\eta)_i^{(k)}$ is the
  unique solution to the least-squares problem ${\tilde
    I}_i^{(k)}={\cal C}(\rho\ast\eta)_i^{(k)}$. The justification
  behind this results is as follows. In the matrix representation,
  ${\bf C}$ is a square matrix of dimensionality ${N_{{\textrm{eng}}}\times
    N_{{\textrm{eng}}}}$, where the image data is a $N_{{\textrm{eng}}}\times N_{{\textrm{data}}}$
  dimensional matrix and $\rho\ast\eta$ is also represented by a
  $N_{{\textrm{eng}}}\times N_{{\textrm{data}}}$ dimensional matrix. ${\cal C}$ being a
  composition of three commutable orthogonal projections, the matrix
  ${\bf C}^{(N_{{\textrm{eng}}}\times N_{{\textrm{eng}}})}$ is idempotent and
  symmetric. Then the Moore-Penrose pseudo-inverse of ${\bf C}$ exists
  as ${\bf C}^{+}$, and is unique for given ${\bf C}$. Then
  $(\rho\ast\eta)_i^{(k)}$ is the unique solution to the least squares
  problem ${\tilde{I}}_i^{(k)} = {\bf C}(\rho\ast\eta)_i^{(k)}$
  achieved in the low noise limit, such that $(\rho\ast\eta)_i^{(k)} =
  {\bf C}^{+} {\tilde{I}}_i^{(k)}$.
\item[(iv)] {The establishment of the uniqueness of the learnt
  $\rho\ast\eta$ in the small noise limit is due to our
  designing of the imaging experiment to ensure that the
  dimensionality of the image space coincides with that of the space
  of the unknown parameters.}
\item[(v)] In the presence of noise in the data, the learnt
  $\rho\ast\eta$ is no longer unique. The variation in this learnt
  function is then given by the condition number $\kappa({\bf C})$ of
  ${\bf C}$. Now, ${\kappa}({\bf C})=\parallel {\bf C}\parallel
  \parallel {\bf C}^{+}\parallel$, where $\parallel\cdot\parallel$
  here refers to the 2-norm. For orthogonal projection matrices, norm
  is 1, i.e. $\kappa({\bf C})$=1. Therefore, fractional deviation of
  uniqueness in learnt value of $\rho\ast\eta$ is the same as the
  noise in the image data, which is at most 5$\%$ (see
  Section~\ref{sec:data}).
\end{enumerate}

\subsection{Quantification of deviation from uniqueness of learnt functions}
\label{sec:quant}
\noindent
While $\rho\ast\eta$ is learnt uniquely in the small noise limit, the
learning of $\rho(x,y,z)$ and $\eta(z)$ from this unique convolution is 
classically, an
ill-posed problem since then we know $N_{{\textrm{data}}}\times N_{{\textrm{eng}}}$
parameters but $(N_{{\textrm{data}}}\times N_{{\textrm{eng}}}) + N_{{\textrm{eng}}}$ parameters are
unknown in the implementation of the distribution-free model of $\eta(z)$;
again $(N_{{\textrm{data}}}\times N_{{\textrm{eng}}}) + 2$ are unknown in case the
parametric model of $\eta(z)$ is used. Then the ratios of the
known to unknown parameters in these two cases are as high as $r_{nonpram}
\approx 0.990$ and $r_{param} \approx 0.998$ respectively, for typical values of $N_{{\textrm{data}}}$=100, $N_{{\textrm{eng}}}$=10. 

In a Bayesian framework, the problem is addressed via the priors on
the unknown parameters presented above. The prior on the material
density parameters is adaptive to the sparsity of the density in its
native space. Thus, this prior should expectedly perform equally well
when the density function is intrinsically dense as well as for
density functions marked by high degrees of sparsity. This is evident
in the quality of comparison of the true densities and the densities
learnt from simulated image data sets obtained by sampling from two
density functions that are equivalent in all respect except for their
inherent sparsity, as shown in Figure~\ref{fig:colour}. At the same
time, the parametric model for the kernel is more informed than the
distribution-free model. We find corroboration of improved inference
on the unknowns when the parametric, rather than distribution-free
model for the kernel parameters is implemented, in learning with
simulated data, (see Figures~\ref{fig:1d} and \ref{fig:sem}).

\section{Inversion of simulated microscopy image data}
\label{sec:illustrations}
We describe the application of the inverse methodology described above
to simulated images, as a test of the method, i.e. compare the true
material density and true kernel parameters (or correction
function) with the estimated density and kernel respectively. Here,
the ``true'' density and kernel are the chosen density and kernel
functions using which the simulated image data are constructed.

The discussed examples include inversion of simulated image data 
\begin{itemize}
\item of an example of a high-${\cal Z}$ material, namely Copper-Tungsten alloys, the true density function of which is
\begin{enumerate}
\item[--]dense (Sample~I-CuW), 
\item[--]sparse (Sample~II-CuW), such that the density structure is characterised by isolated modes with sharp edges. 
\end{enumerate}
\item of a low-${\cal Z}$ material, namely a Ni-Al alloy which is 
\begin{enumerate}
\item[--]dense (Sample~I-NiAl), 
\item[--]sparse (Sample~II-NiAl).
\end{enumerate}
\end{itemize}

The (simulated) images are produced, in some cases at 18 beam energy
values and in other cases at 10 different values of $E$, corresponding
to $\epsilon_k=n+k$, in real physical units of energy, namely
kiloVolts (or kV), $k=1,\ldots,N_{{\textrm{eng}}}$, $n\in{\mathbb R}_{\geq
  0}$. We work with $N_{{\textrm{data}}}$=225, i.e. there are 15 pixels along the
$X$ and 15 along the $Y$-axis corresponding to 15 beam pointings along
each of these axes. Then at any $k$, the image data in 225 pixels, are
arranged in a 15$\times$15 square array. The image data in each pixel
is computed from a chosen density and chosen correction function by
allowing the sequential projection operator ${\cal C}$ to operate upon
$\rho\ast\eta$ where the material density parameters are chosen as
\begin{eqnarray}
\xi_i^{(k)} = \displaystyle{\Upsilon\frac{A_i}{\displaystyle{\left[\epsilon^2 +
                                              \frac{x_i^2}{B_i^2} +
                                              \frac{y_i^2}{B_i^2} +
                                              \frac{(h^{(k)})^2}{B_i^2 (1 - Q_i^2)}
                                         \right]}}
                             }&&\quad{\textrm{where}} 
\label{eqn:trueden}
\end{eqnarray}
$\Upsilon = ({\textrm{int}})(N_{{\textrm{eng}}} U_1),\quad U_1\sim{\cal U}[0,1]\quad{\textrm{for the sparse Sample~II-CuW and Sample~II-NiAl}}$,\\ 
$\Upsilon = 1 \quad{\textrm{for the dense Sample~I-CuW and Sample I-NiAl}}$,\\
$A_i = U_2,\quad U_2\sim{\cal U}[0,1]$,\\ 
$B_i = U_3, \quad U_3\sim{\cal U}[0,n\omega]\quad n\in{\mathbb{Z}}^{+}$,\\
$Q_i = U_4,\quad U_4\sim{\cal U}[0,1], \forall\:i=1,\ldots,N_{{\textrm{data}}}$.\\
Here ${\cal U}[0,\cdot]$ is the uniform distribution over the range
$[0,\cdot]$.

The true correction function is chosen to emulate a folded normal
distribution with a mean of $\gamma$ and dispersion $d_s$.
\begin{equation}
\eta^{(k)} = \displaystyle{\exp\left[\frac{(h^{(k)} - \gamma)^2}{2d_s^2}\right] +
                        \exp\left[\frac{(h^{(k)} + \gamma)^2}{2d_s^2}\right] }
\end{equation}

We recall from Section~\ref{sec:model}, that for materials at a high
value of the atomic number ${\cal Z}$, when imaged at low resolutions,
the unknown functions are learnt as per the prescription of the 1st
model. See Equation~\ref{eqn:highzlowz} for definition of ``high'' and
``low'' ${\cal Z}$ materials. In this connection we mention that for
the low resolution imaging techniques (such as imaging in X-rays), the
smallest resolved length in the observed images is $\omega$=1.33$\mu$m
while for the illustration of the high resolution technique (such as
imaging with Back Scattered Electrons), a much finer resolution of
$\omega$=9 nm, is considered. Materials for which ${\cal Z}$ is low,
when imaged at low resolutions, are modelled using the 2nd model. Any
material that is imaged at high resolution, i.e. for small values of
$\omega$, will be modelled using the 3rd-model. The true density and
correction function are then used in the model appropriate equation
(Equation~\ref{eqn:highz_chain} for the 3rd model and
Equation~\ref{eqn:highz_chain} for the other 2 models) to generate the
sequential projections of the convolution $\rho\ast\eta$ to give us
the simulated image data
$\{\tilde{I_i^{(k)}}\}_{i=1;\:k=1}^{N_{{\textrm{data}}},\:N_{{\textrm{eng}}}}$. The simulated
images are produced with noise of 5$\%$ to 3$\%$,
i.e. $\sigma_i^{(k)}$=0.05 to 0.03 times ${\tilde I}_i^{(k)}$. 

We start MCMC chains with a seed density $\xi_{i,seed}^{(k)}$ in the
$ik^{th}$ voxel defined as $\xi_{i,seed}^{(k)}={\tilde
  I}_i^{(k)}/\Phi_{seed}^{(k)}$, where the starting correction
function $\{\eta_{seed}^{(k)}\}_{k=1}^{N_{{\textrm{eng}}}}$ helps define
$\Phi_{seed}^{(k)} =
\sum_{j=1}^{k}\eta_{seed}^{(j)}(h^{(j)}-h^{(j-1)})$.  The initial seed
for the correction function is chosen to be motivated by the forms
suggested in the literature; in fact, we choose the initial correction
function to be described by a half-normal distribution, with a
standard deviation of 5$\mu$m and mean of 5$\mu$m. We use
adaptive Metropolis within Gibbs for our inference, as discussed above
in Section~\ref{sec:inference}.

\subsection{Low resolution, high-${\cal Z}$ system}
\noindent
These illustrations are performed with intrinsic densities that are
sparse (Sample~II-CuW) and dense (Sample~I-CuW). The idea behind the
modelling in this case is that the cross-sectional area of the surface
of the interaction-volume at any $E$, is less than the cross-sectional
area of the voxel on this $Z$=0 surface, rendering the inversion the
simplest out of the three illustrations we discuss. An example
resolution of $\omega$=1.33 $\mu$m of such low-resolution imaging
techniques, allows for 15 beam pointings over an interval of 20
$\mu$m, from $x$=-10$\mu$m to $x$=10$\mu$m at given $y$, at spatial
intervals where $\omega$. Again, at a given $x$, there are 15 beam
pointings, at steps of $\omega$, from $y$=-10$\mu$m to
$y$=10$\mu$m. $N_{{\textrm{eng}}}$ was chosen as 18. We invert the image data
thus generated, both when we consider a distribution-free model for
$\eta(z)$ as well as the parametric model.

\begin{figure*}
     \begin{center}
  {$\begin{array}{c c}
       \includegraphics[width=6cm]{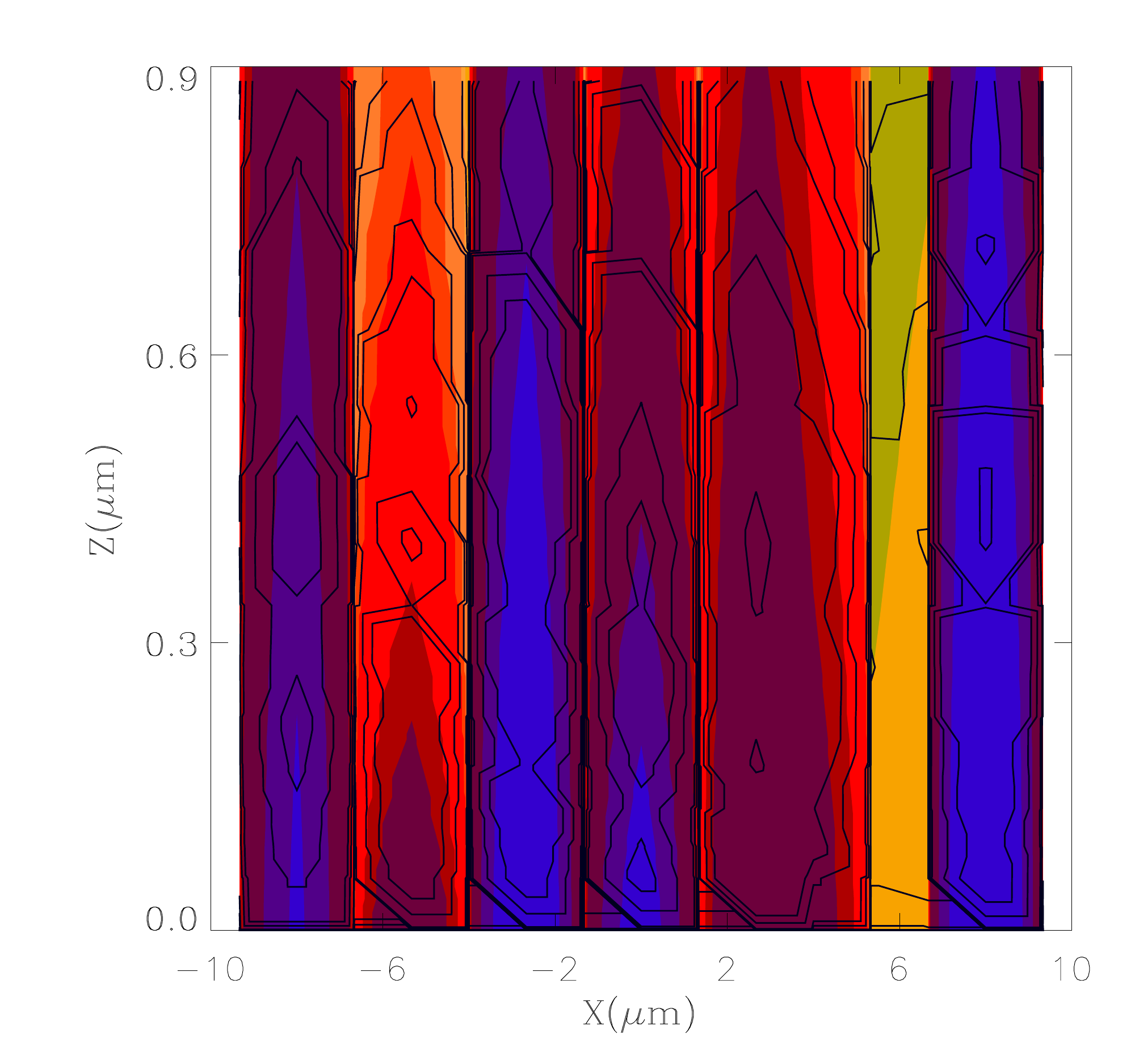}
       \includegraphics[width=6cm]{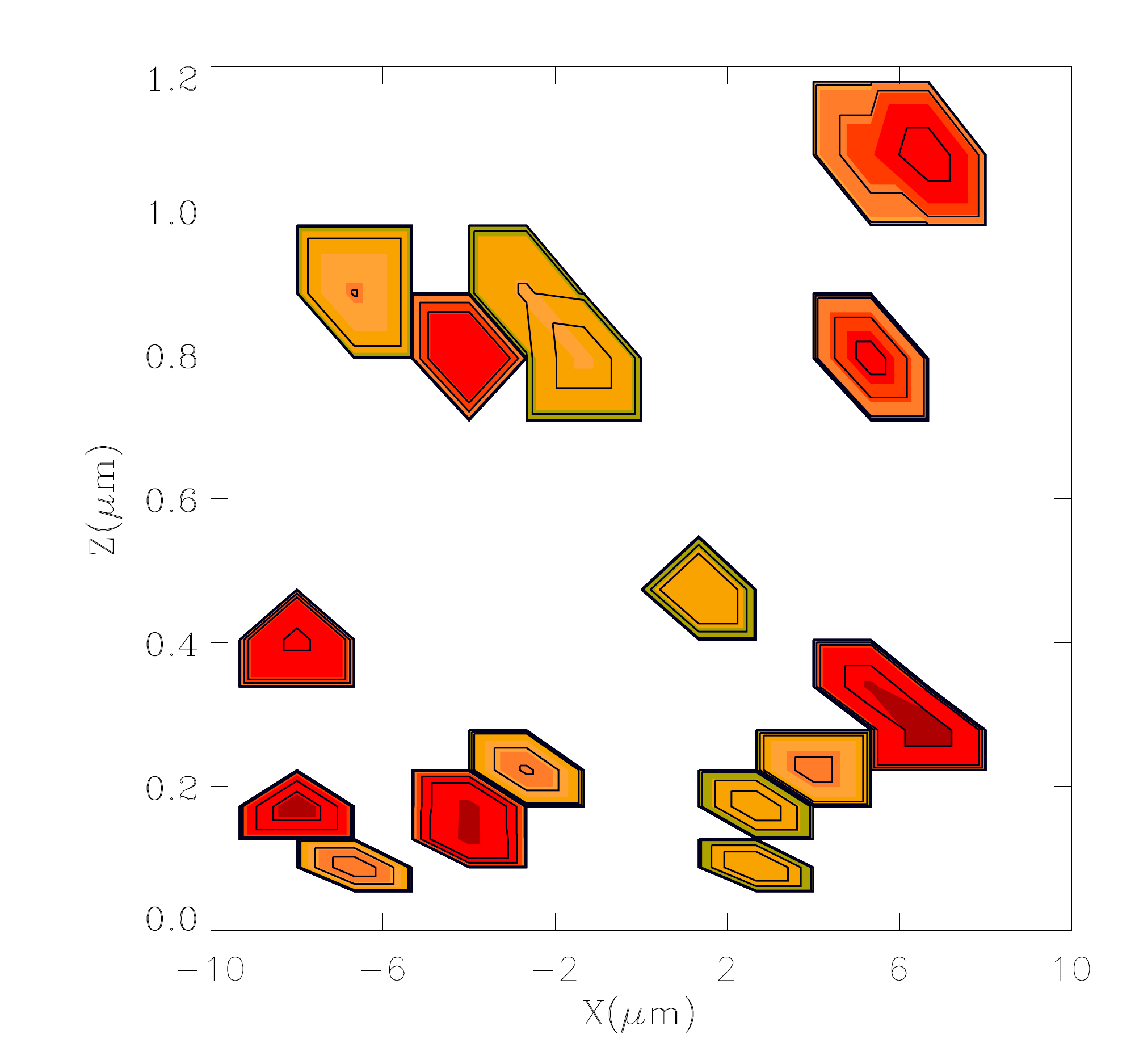}
  \end{array}$}
     \end{center}
\caption{Estimated material density parameters over-plotted in the
  $X-Z$ plane, $\forall y$, in solid contour lines, over the true
  density (in filled coloured contours), given simulated images of
  Copper-Tungsten alloy - Sample~I-CuW (left) and Sample~II-CuW
  (right). The simulated image data corresponds to a low-resolution
  imaging technique (such as when the system is imaged by an SEM in
  X-rays), at different beam energy values. The material density
  estimate at the medial level of the posterior probability is
  plotted. The parametric model for the correction function was
  used for these runs. }
\label{fig:colour}
\end{figure*}

Figure~\ref{fig:colour} represents the estimated density functions for
the two simulated samples, represented as contour plots in the $X-Z$
plane, $\forall y$, superimposed over the respective true densities
which are shown as filled contour plots, when the parametric
model for $\eta(z)$ is used. We set $\sigma_i^{(k)}=0.05{\tilde
  I}_i^{(k)}$. The panel on the left describes a true density function
that is dense in its native space while on the right, the true density
function is sparse.

Figure~\ref{fig:1d} represents results of implementation of the
distribution-free models for the kernel, using
$\sigma_i^{(k)}=0.03{\tilde I}_i^{(k)}$. This figure includes the plot
of ${\cal C}(\rho\ast\eta)_i^{(k)}$ as a function of the pixel location
along the $X$ axis, over all $k=1,\ldots,N_{{\textrm{eng}}}$. 
This data is over-plotted on the image data ${\tilde
  I}_i^{(k)}$, plotted against $\lambda_i$ for all $k$. Similarly, the
estimated ${\hat\eta}(z)$ and true correction function are also
compared. Trace plots for the chains are also included.

It is to be noted that the definition of the likelihood in terms of
the mismatch between data and the projection of the learnt density and
kernel parameters, i.e. ${\cal C}({\hat\rho}\ast{\hat\eta})_i^{(k)}$,
compels ${\cal C}({\hat\rho}\ast{\hat\eta})_{\cdot}^{(\cdot)}$ and
${\tilde{I}}_{\cdot}^{(\cdot)}$ to coincide, even if ${\hat\rho}({\bf
  x})$ is poorly estimated, though such a comparison serves as a
zeroth order check on the involved coding.

\begin{figure*}[!h]
\begin{center}{
\includegraphics[width=13cm]{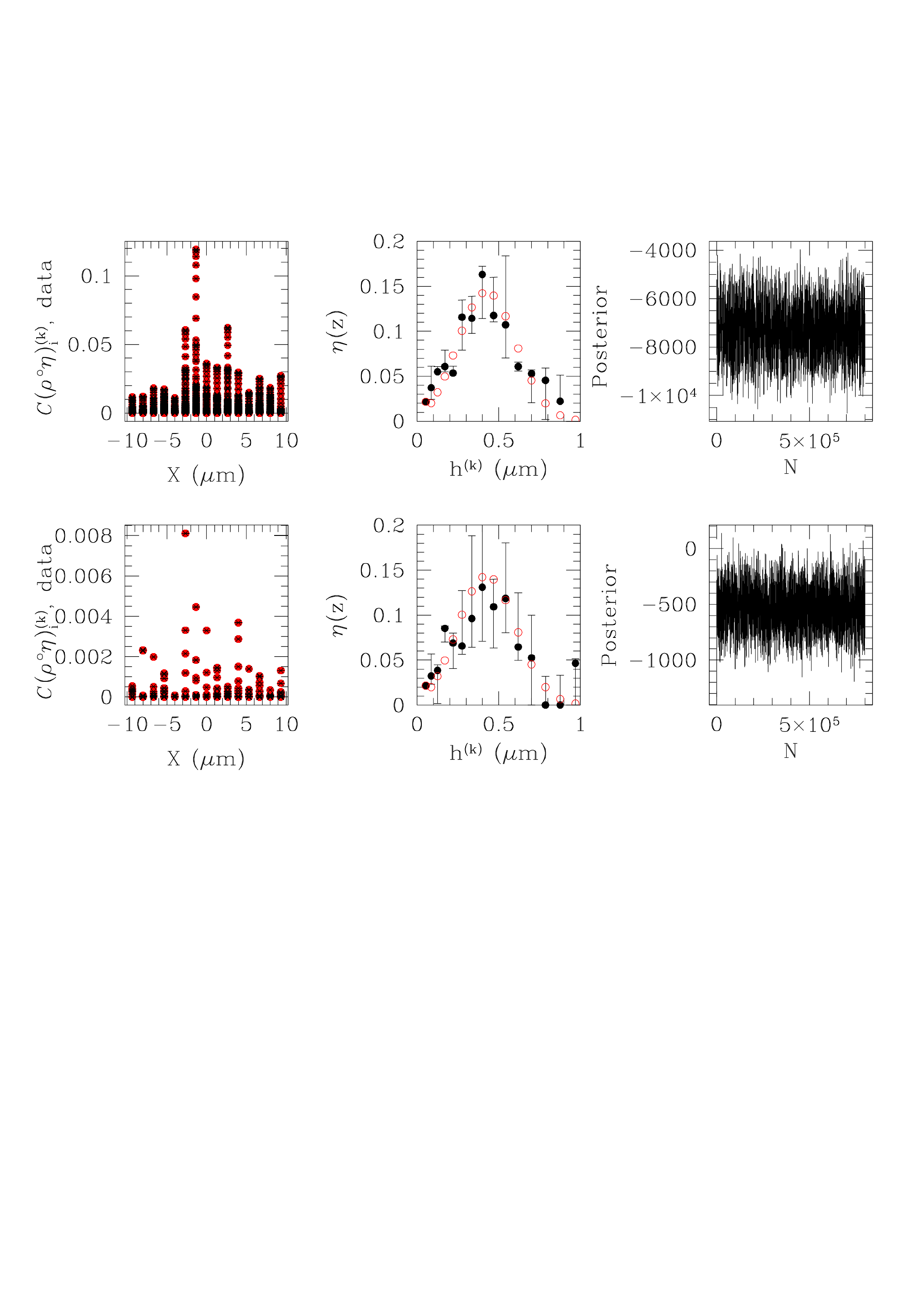}}
\end{center}
\vspace*{-3in}
\caption{{\it Left:} - projection ${\cal
    C}({\hat\rho}\ast{\hat\eta})_i^{(k)}$ plotted in black, in model
  units, as a function of beam location on the $X$-axis, (where the
  beam pointing index $i=1,\ldots, 225$), over the image data (in
  red), for all beam energy values $\epsilon_k=2+k$ (in kiloVolts),
  $k=1,\ldots,18$, for Sample~I-CuW (upper panel) and Sample~II-CuW
  with the relatively sparser material density (lower). {\it Middle:}
  estimated correction function, in black, superimposed on the true
  $\eta^{(\cdot)}$ for the two samples (in red). The error bars in the
  above plots, as well as in all plots that follow, correspond to
  95$\%$ highest probability density credible regions, while the
  medial level of the posterior is marked by a symbol (filled
  circle). The trace of the joint posterior probability density of the
  unknown parameters, given the image data is shown in the right
  panels. The distribution-free model for the correction function was
  implemented. }
\label{fig:1d}
\end{figure*}

\subsection{Low resolution, low-${\cal Z}$ system}
\noindent
Distinguished from the last section, in this section we deal with the
case of a low atomic number material imaged at coarse or low imaging
resolution. In this case, the surface cross-sectional area of an
interaction-volume exceeds that of a voxel only for $E=\epsilon_{k} \leq
\epsilon_{k_{in}}$, for $k=1,\ldots,k_{in},k_{in+1},\ldots,N_{{\textrm{eng}}}$.
The illustration of this case is discussed here for Sample~II-NiAl,
that is an alloy of Nickel, Aluminium, Tallium, Rhenium; see
\ctn{dsouza}. 
The average atomic weight of our
simulated sample is such that $k_{in}$=13 but the extent of interaction-volumes attained at $E=\epsilon_{k}$ (=$k$+2 kiloVolts) for $k=14,\ldots,18$, is in excess of the image resolution 1.33 $\mu$m.
All other details are as in the previous illustration.

Figure~\ref{fig:Al} 
depicts the learnt density structure; visualisation to this effect is
provided for the Ni-Al alloy sample in terms of the plots of
$\xi_i^{(k)}$ against $h^{(k)}$, at chosen values of $X$ and $Y$
inside the sample. Since there are 225 voxels used in these exercises,
the depiction of the density at each of these voxels cannot be
accommodated in the paper, but Figure~\ref{fig:Al} shows the density
structure at values of beam pointing index $i$ corresponding to 15
values of $Y$, at one
value of the $X$ (=6.67$\mu$m).

\begin{figure*}
\centerline{
\includegraphics[width=0.99\hsize]{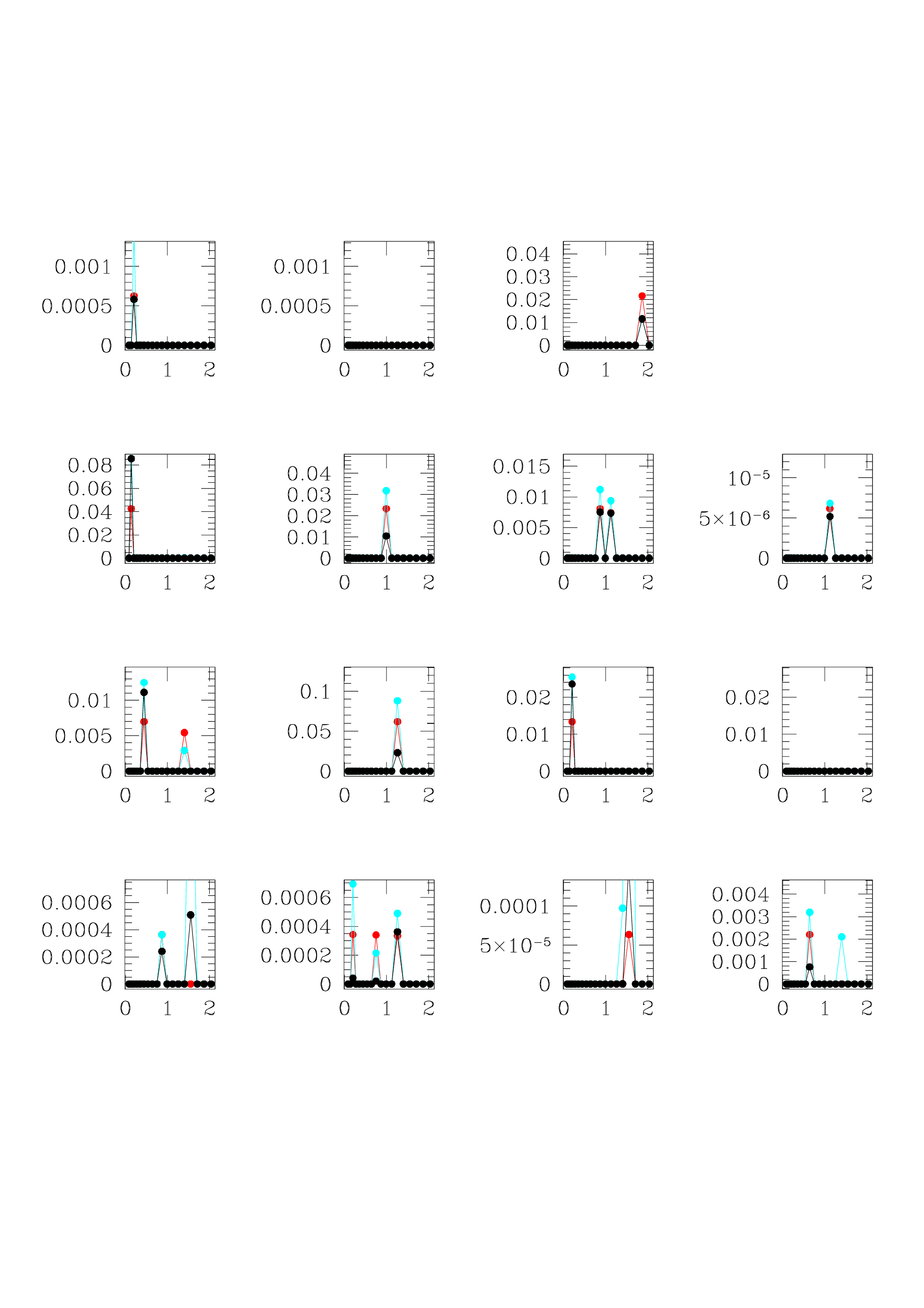}}
\vspace*{-2in}
\caption{For the simulated (sparse and low-${\cal Z}$)
  Sample~II-NiAl, the estimated density parameters $\xi_i^{(k)}$
  plotted in model units as a function of $h^{(k)}$ (in $\mu$m), at
  $X=6.67$$\mu$m and
  distinct values of the beam pointing locations along the $Y$-axis, in
  the range of -10$\mu$m to 10$\mu$m, at intervals of 1.33$\mu$m where
  the panel in the lower left-hand corner is at the smallest value of $Y$ and that in the upper right-hand most panel is at the highest value of $Y$. The true variation of the learnt material density with
  depth is shown in red in each panel. This is compared to the material density
  estimated at the lower and upper bounds of the 95$\%$ HPD credible
  region, as plotted in cyan and black respectively (in model
  units). The parametric model for $\eta^{(\cdot)}$ was used. }
\label{fig:Al}
\end{figure*}


\subsection{High resolution}
\noindent
The challenge posed by the high resolution
($\omega\lesssim$10 nm=0.01 $\mu$m) of imaging, to our modelling is
logistical. This logistical problem lies in fact that for
$\omega\lesssim$10 nm the run-time involved in the reconstruction of
the density over the length scales of $\sim$1 $\mu$m, is prohibitive;
for an image resolution of $\omega$=10 nm, there are 100 voxels
included over a length interval of 1 $\mu$m, while at example lower
resoutions relevant to the aforementioned illustrations there would be
${\mbox{int}}(100/15)$. 

In light of this problem, for a high resolution of $\omega$=2 nm that
we work with, we choose 20 voxels (each of cross-sectional size 2 nm),
across the $X$ range of -20 nm to 20 nm. A similar range in $Y$ is
considered. The simulated system is imaged at 10 energy values
$\epsilon_k=k+1.5$kV, $k=1,\ldots,10$. The atomic number parameter
${\cal Z}$ of the used material is such that the radius
$R0^{(1)}=h^{(1)}$ of the interaction-volume attained at $k=1$ is
about 21 nm, so that all the studied voxels live inside this and all
other (larger) interaction-volumes. It is possible to run multiple
parallel chains with data from distinct parts of the observed image,
in order to cover this available image. That way, the material density
function of the whole sample will be learnt.

The plots of the learnt density for the intrinsically dense and sparse
true density functions are shown as a function of depth $Z$, for all
$X$ and $Y$, in Fig~\ref{fig:sem}, which display results of runs done
with the distribution-free as well as parametric models for the
correction function (using $\sigma_i^{(k)}=0.05{\tilde I}_i^{(k)}$,
$i=1,\ldots,N_{{\textrm{data}}}$, $k=1,\ldots,N_{{\textrm{eng}}}$). The multiscale structure
of the heterogeneity in these simulated density functions is brought
out by presenting the logarithms of the estimated densities parameters
alongside the logarithms of the chosen or true density parameters.

\begin{figure*}[!h]
\vspace*{-4in}
\begin{center}
{\includegraphics[width=14cm]{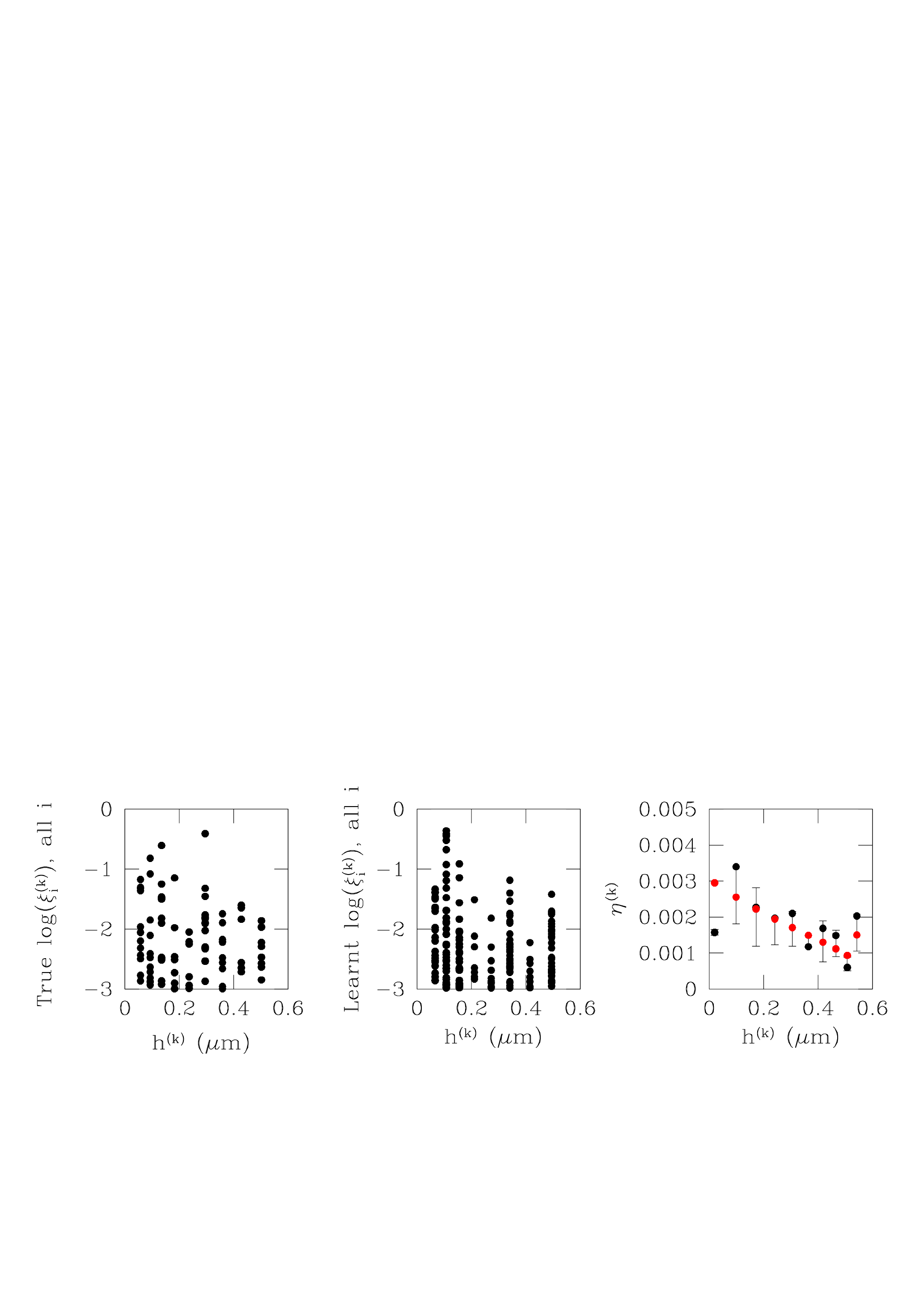}}
\end{center}
\vspace*{-4.5in}
\begin{center}
{\includegraphics[width=14cm]{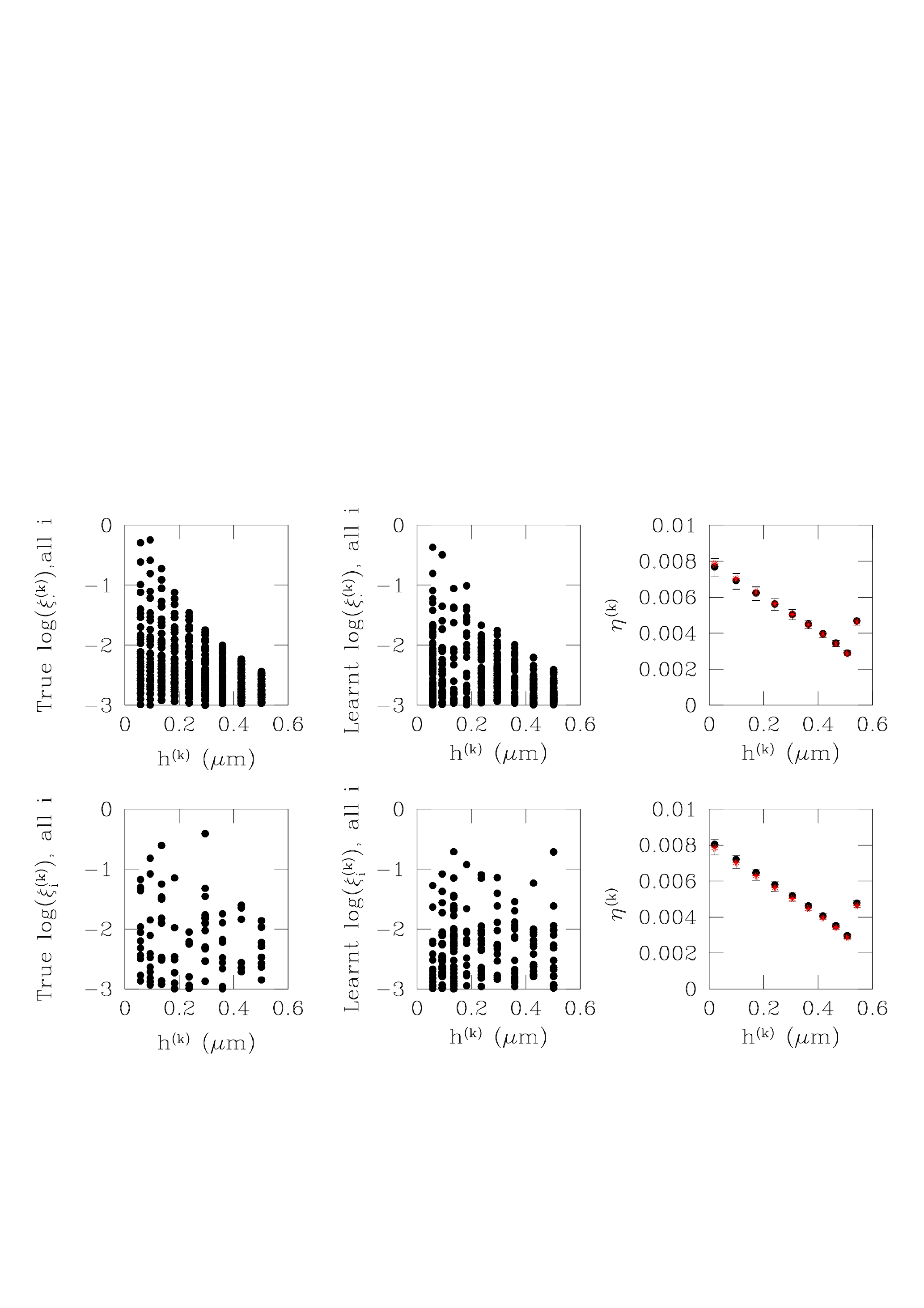}}
\end{center}
\vspace*{-1.5in}
\caption{{\it Top row:} results of inversion of simulated image data
  of the intrinsically sparse low-${\cal Z}$ Sample~II-NiAl when a
  distribution-free model for the correction function is used. The
  true material density structure is presented in the panels on the
  left, as a function of $z$, for all $X, Y$, in model units. The
  learnt material density, at the medial level of the posterior, is
  similarly presented in the middle panels while the learnt
  $\eta^{(k)}$ over 95$\%$ HPD credible region is shown in black as a
  function of $h^{(k)}$ in the right panels, in model units,
  superimposed on the true kernel parameters in red. {\it Middle row:}
  the intrinsically dense, true and learnt (at the posterior median)
  material density of simulated sample Sample~I-NiAl in the left and
  middle panels respectively with the learnt $\eta^{(\cdot)}$. A
  parametric model for the correction function is employed. {\it
    Bottom row:} results from implementing simulated images of the
  sparse Sample~II-NiAl when a parametric model for the correction
  function is employed.  }
\label{fig:sem}
\end{figure*}

\section{Application to the analysis of real SEM image data of Nickel and Silver nano-composite}
\label{sec:real}
\noindent
In this section we discuss the application of the methodology advanced
above towards the learning of the 3-D material density and the
microscopy correction function (or kernel) by inverting 11 images
taken with a real SEM, in a kind of radiation called Back Scattered
Electron (BSE). The imaged sample is a brick of Nickel (Ni) and Silver
(Ag) nanoparticles, taken with the SEM (the Leica, Stereoscan 430
brand), at 11 distinct values of the beam energy parameter $E$ such
that the values that $E$ takes are $10,11,\ldots,20$kV. The brick of
nanoparticles was prepared by the drop-cast method in the
laboratory. The resolution of BSE imaging with the used SEM is
$\approx$50 nm=0.05 $\mu$m, i.e. the smallest length over which
sub-structure is depicted in the image is about 50 nm. While this
resolution can be coarser than that for images taken in the radiation of
Secondary Electrons \ctp{reed}, an image taken in Secondary Electrons
will however not bear information coming from the bulk of the
material. Hence the motivation behind using BSE image data.

We sample two distinct areas from each of the 2-D images taken at
values $10,11,\ldots,20$kV of $E$, resulting in two different image
data sets. The first data $D_1$ comprises a square area of size
101pixels$\times$101pixels, with this square seated in a different
location on an image than the smaller square that contain the pixels
that define the image data set $D_2$. This second data $D_2$ comprises
a much smaller square of area 41pixels$\times$41pixels. Given the
imaging resolution length of 0.05$\mu$m, i.e. given that the width of
each square pixel is 0.05$\mu$m, in image data set $D_1$ the 101
pixels along either the $X$ or $Y$-axes are set to accommodate the
length interval [-2.5 $\mu$m, 2.5 $\mu$m] while in data $D_2$, the
pixels occupy the interval [-1 $\mu$m, 1 $\mu$m]. Then in data set
$D_1$, at a given value of $E$, image datum from a total of
100$^2$=10,000 pixels are recorded, i.e. $N_{{\textrm{data}}}$ for this data set
is 10,000. For $D_2$, $N_{{\textrm{data}}}$=41$^2$=1,764. Both $D_1$ and $D_2$
comprise 11 sets of BSE image data $\{{\tilde
  I}_i^{(k)}\}_{i=1}^{N_{{\textrm{data}}}}$, with each such set imaged with a
beam of energy $\epsilon_k$, $k=1,\ldots,11$ such that
$\epsilon_k=k+9$ kiloVolts. For this material, the mean of the atomic number, atomic mass and physical density in gm cm$^{-3}$ yields average atomic
number parameter ${\cal Z}$ of 37.5, and the numerical values of the
parameters $A$ and $d$ in Equation~\ref{eqn:kanaya} as 83.28 and
about 9.7 respectively; then the smallest interacion-volume attained
at $E=\epsilon_1$ is $R0^{(1)}=h^{(1)}\approx$0.44 $\mu$m and the largest at
$E=\epsilon_{10}$ is about 1.4 $\mu$m.

The learning of the material density of a nanostructure is very useful
for device engineers engaged in employing such structures in the
realisation of electronic devices \ctp{shashinano}; lack of
consistency among measured electronic characteristics of the formed
devices is tantamount to deviation from a standardised device
behaviour and such can be predicted if heterogeneity in the depth
distribution of nanoparticles is identified. Only upon the receipt of
quantified information about the latter, is the device engineer able
to motivate adequate steps to remedy the device realisation. In
addition, such information holds potential to shed light on the
physics of interactions between nanoparticles.

If the calibration of the intensity $I_i^{(k)}$ of the image datum is
available in physical units (of surface density of the BSE), then we
could express the measured image data - as manifest in the recorded
image - in relevant physical units. In that case, the learnt density
could be immediately expressed in physical units. However, such a
calibration is not available to us. In fact, in this method, the
learnt density is scaled in the following way. The measured value of
$\eta^{(1)}$ from microscopy theory \ctp{corr_epm} for this material
suggests using the normalisation factor for $\eta^{(\cdot)}$ is
$\approx {\hat\eta}^{(1)}/0.325$, so that the $\eta^{(\cdot)}$
parameters, thus normalised, are in the physical units of $\mu$m.
(The arithmetic mean of the atomistic parameters of Ni and Ag yield
the value of 0.325). This in turn would imply that the material
density parameters are each in the physical unit of $\mu$m$^{-3}$
since the product of the density and kernel parameters appears in the
projection which is measured in physical units of ``per unit area'',
i.e. in ``per length unit$^2$'' or ($\mu$m)$^{-2}$. Here
${\hat\eta}^{(1)}$ is the learnt value of the kernel in the first
$Z$-bin. Physically speaking, here we learn the number density in each
voxel.

\begin{figure*}[!t]
     \begin{center}
  {$\begin{array}{c c}
       \includegraphics[width=5cm]{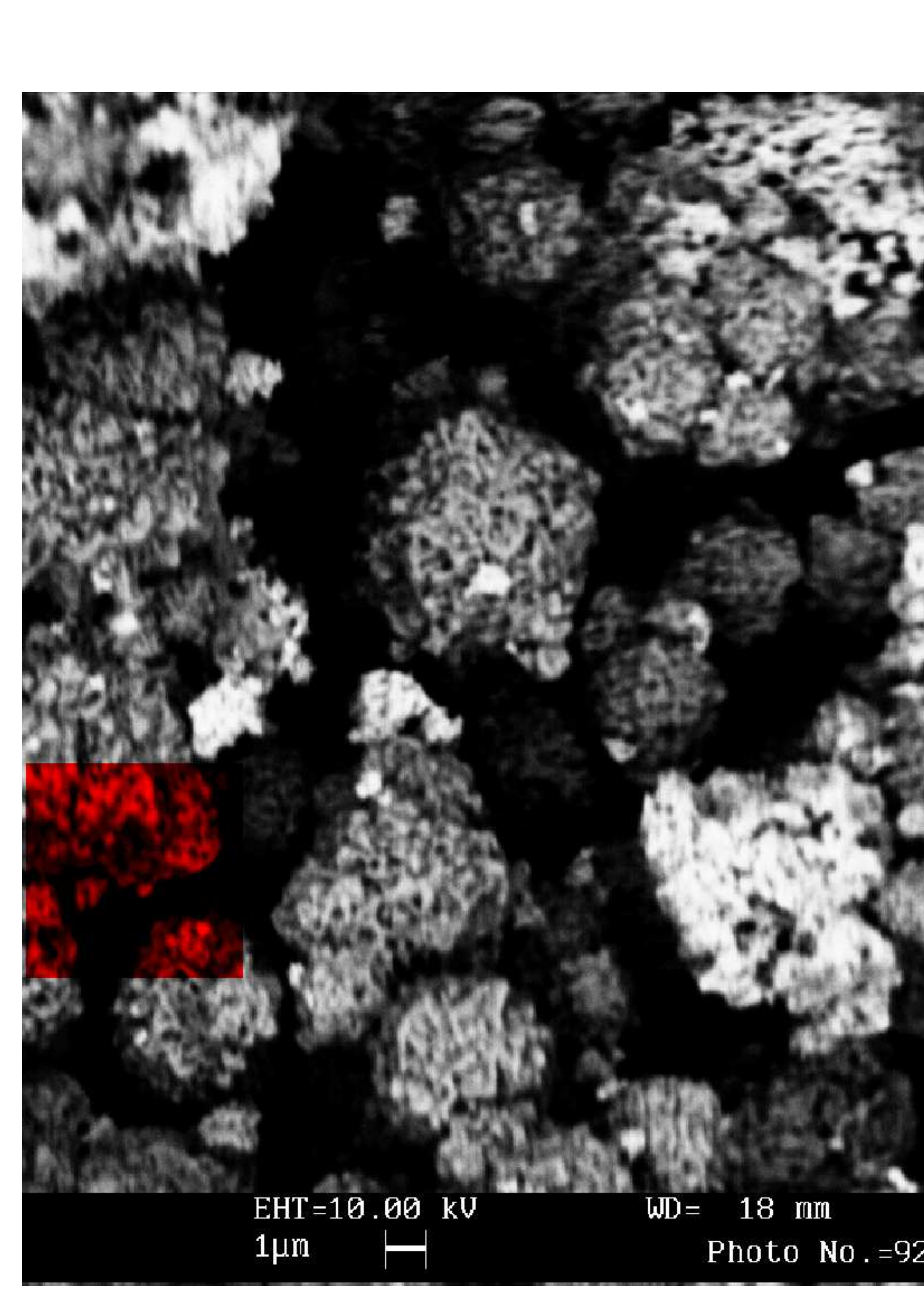} &
\hspace{.5cm}
       \includegraphics[width=5cm]{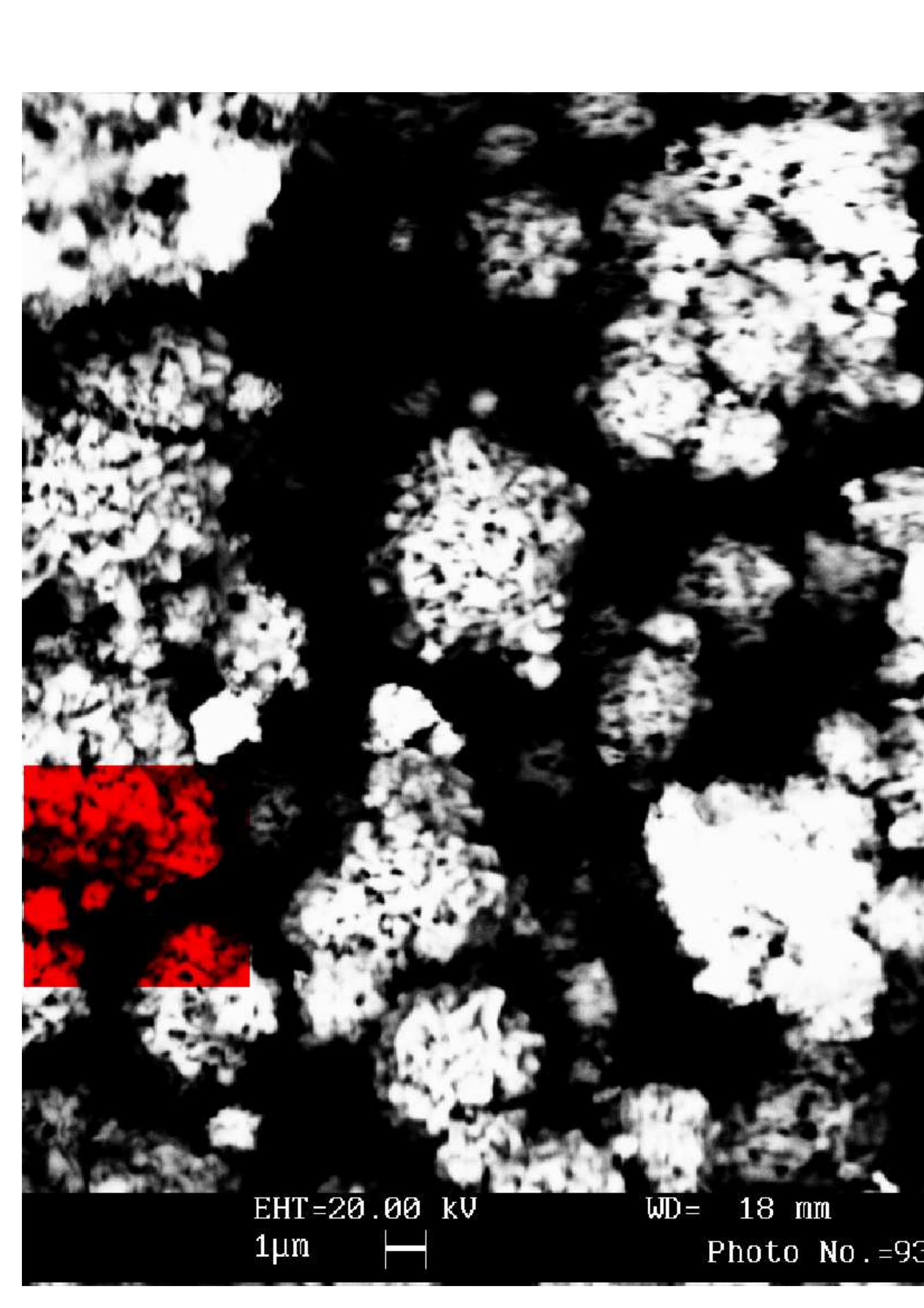} 
  \end{array}$}
     \end{center}
\vspace*{-1in}
     \begin{center}
  {$\begin{array}{c c}
       \includegraphics[width=5.5cm,height=7.5cm]{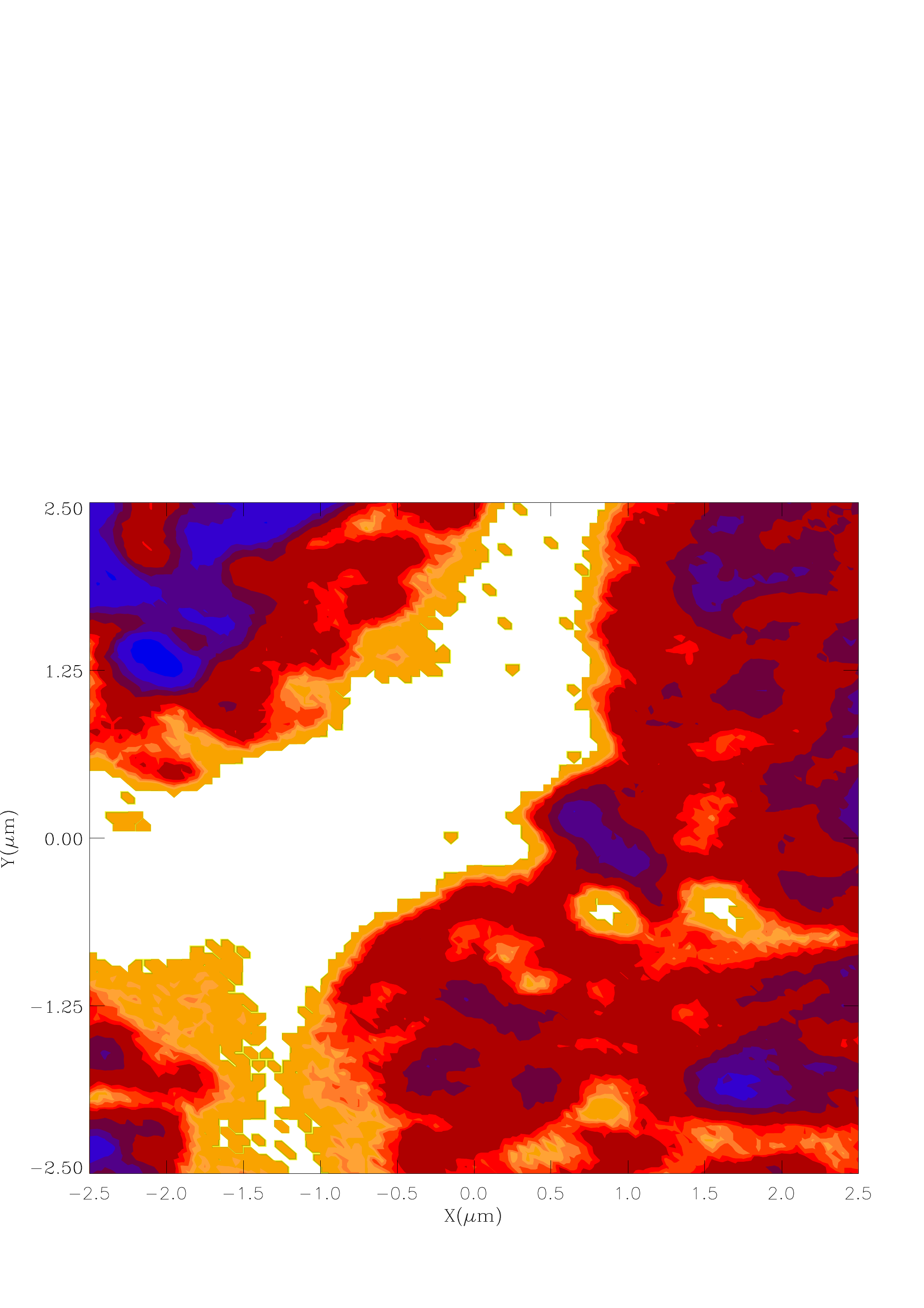} &
       \includegraphics[width=5.5cm,height=7.5cm]{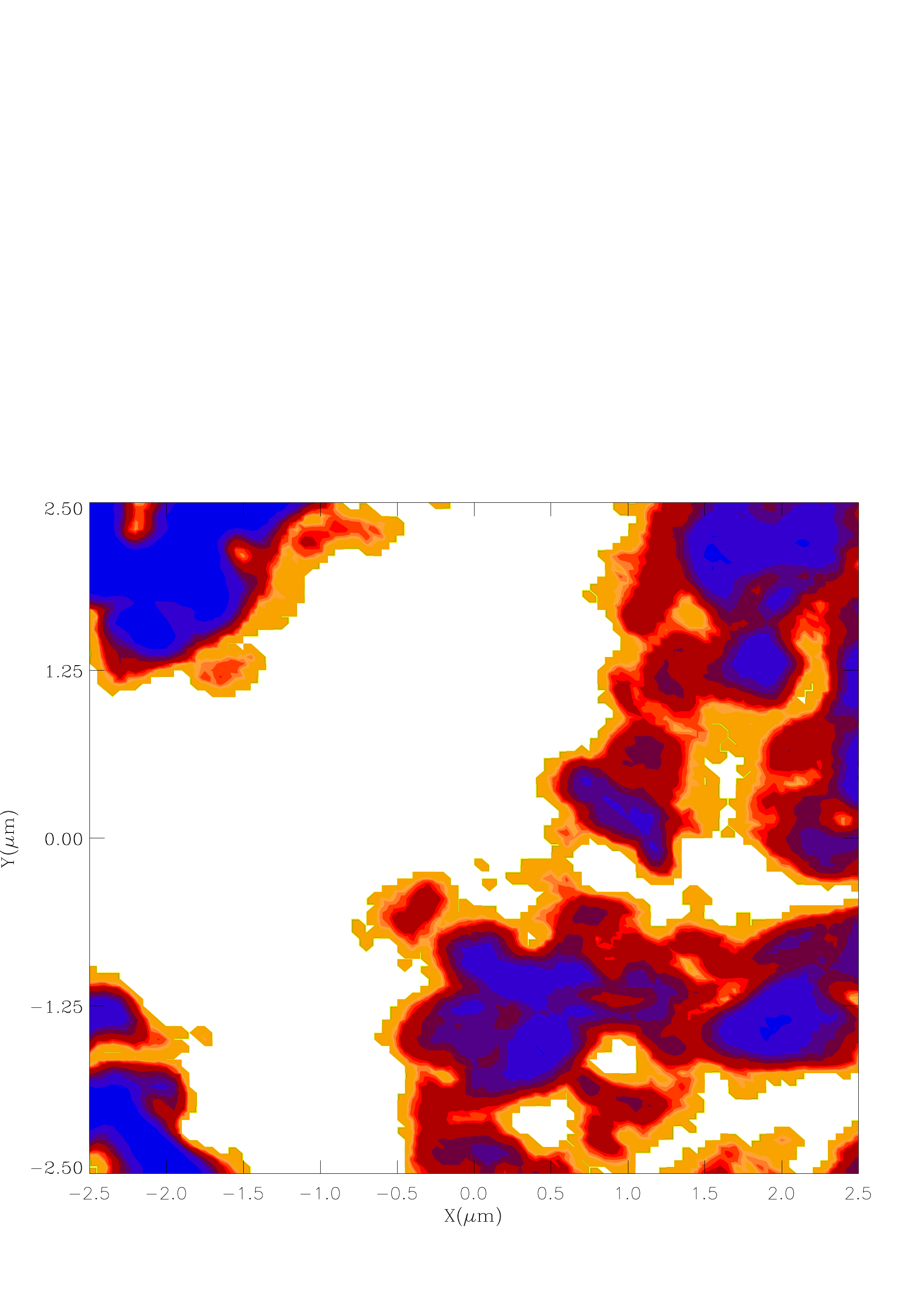} 
  \end{array}$}
     \end{center}
\caption{\small{The top panels display two of the 11 images of the
    prepared blend of Nickel and Silver nanoparticles, taken with an
    SEM, at beam energy values
    of 20 kV (right) and 10 kV (left). A 5 $\mu$m$\times$ 5 $\mu$m
    area was identified in each of the 11 images taken in the kind of
    radiation called Back Scattered Electrons (BSE), to form the data
    $D_1$. The colourised squares in red in each image represent the
    areas that contribute to $D_1$ at energies 20 kV and 10 kV. The
    distribution of the measured image data over these square areas in
    red (rotated clockwise by 90$^\circ$), for beam energies of 10 kV
    and 20 kV are shown in the left and right panels repectively, on the lower
    row.}}
\label{fig:images}
\end{figure*}

The configuration that the constituent nanoparticles are expected to
relax into, is due to several factors, including contribution from the
surface effects - the surfaces of nanoparticles are active - this
encourages interactions, resulting in a clustered
configuration. Additionally, gravitational forces that the
nanoparticle aggregates sediment under, are also active\footnotemark,
with the nanoparticle diameter responsible for determining the
relative importance of the different physical influences
\ctp{handbook}. Thus, in general, along the $Z$-axis, we expect a
clustered configuration, embedded within layers. This is what we see
in the representation of the learnt density in the $Y=0$ plane, when
the image data $D_1$ and $D_2$ are inverted; see
Figure~\ref{fig:real_den}. In Figure~\ref{fig:real_corr} we present
the learnt kernel parameters; the kernels learnt by using data from
two distinct parts of the image are expected to overlap given that in
our model, the kernel function is a function of $Z$ alone with no
dependence on $X$ and $Y$.  Such is recovered using data $D_1$ and
$D_2$, as shown in the left panel of this figure. The sequential
projections of the convolution of the learnt density and learnt
correction function, onto the image space, using the data set $D_1$ is
overplotted on the image data in Figure~\ref{fig:real_corr}. The
parametric model for the correction function is implemented here.
\footnotetext{Other relevant effects include viscosity and Brownian
  motion.}

\begin{figure*}
\vspace*{-2in}
     \begin{center}
  {$\begin{array}{c c}
       \includegraphics[width=6cm]{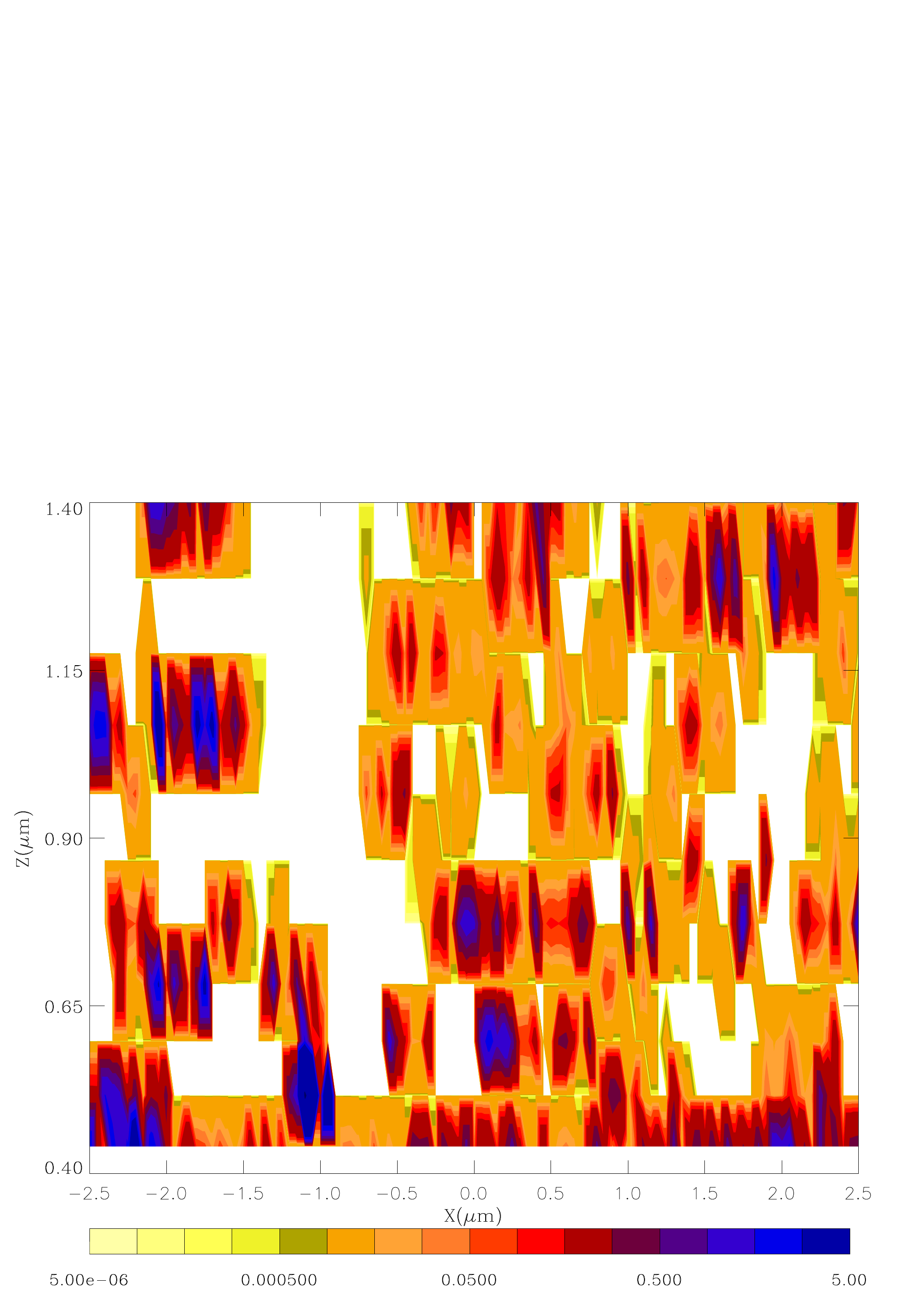}
       \includegraphics[width=6cm]{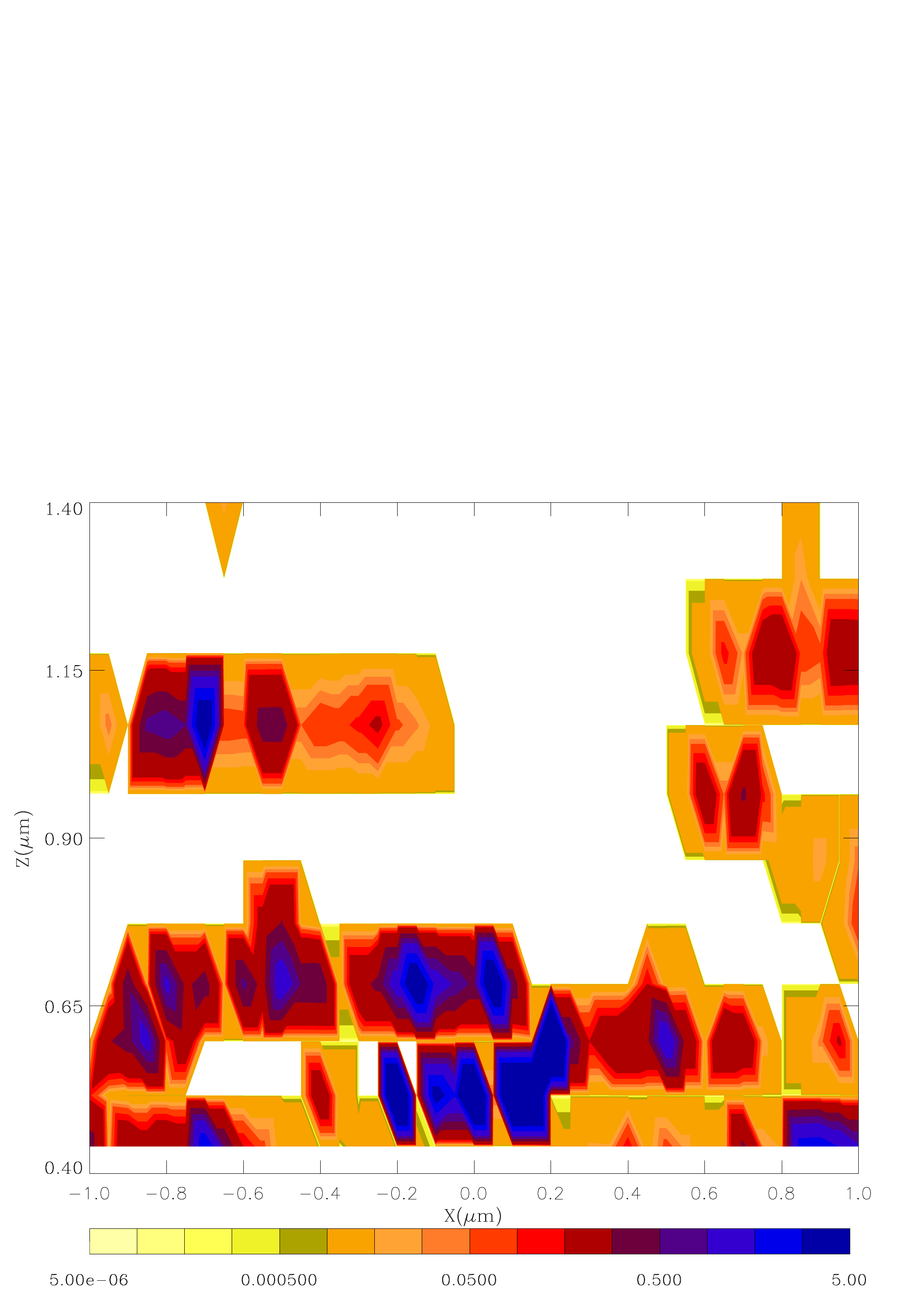}
  \end{array}$}
     \end{center}
\caption{\small{The left and right panels show the slice of the learnt
    three dimensional material density in the $Y=0$ plane from
    inversion of image data $D_1$ and $D_2$ respectively. In fact, to
    construct these figures we use the density parameter value that is
    the median of the posterior of the parameter given the image data. The
    data are obtained by imaging a brick of Nickel and Silver
    nanoparticles in a radiation called Back Scattered Electrons, at
    resolution of 0.05$\mu$m, at 11 different values of $E$, from 10 kV,
    $\ldots$, 20 kV. $D_1$ and $D_2$ comprise 101$\times$101 pixels
    and 41$\times$41 pixels respectively. The labels on the colour-bar
    are number density values, in physical units (of $\mu$m$^{-3}$).}}
\label{fig:real_den}
\end{figure*}

\subsection{MCMC convergence diagnostics}
\noindent
In this section we include various diagnostics of an MCMC chain that
was run until convergence, using the image data $D_1$. These include
trace of the likelihood (Figure~\ref{fig:inf2}), 
and
histograms of multiple learnt parameters - $\xi_{50}^{(1)}$,
$\eta^{(1)}$ - from 1000 steps, in two distinct parts of the chain,
namely, for step number $N\in$[1599001,1600000] and
$N\in$[799001,800000], respectively (Figure~\ref{fig:real_corr2}). The
histograms of the likelihood over these two separate parts of the
chain are also presented in this figure.

\begin{figure*}
\vspace*{-6in}
     \begin{center}
  {
  \hspace*{-10in}\includegraphics[width=37cm, height=43cm]{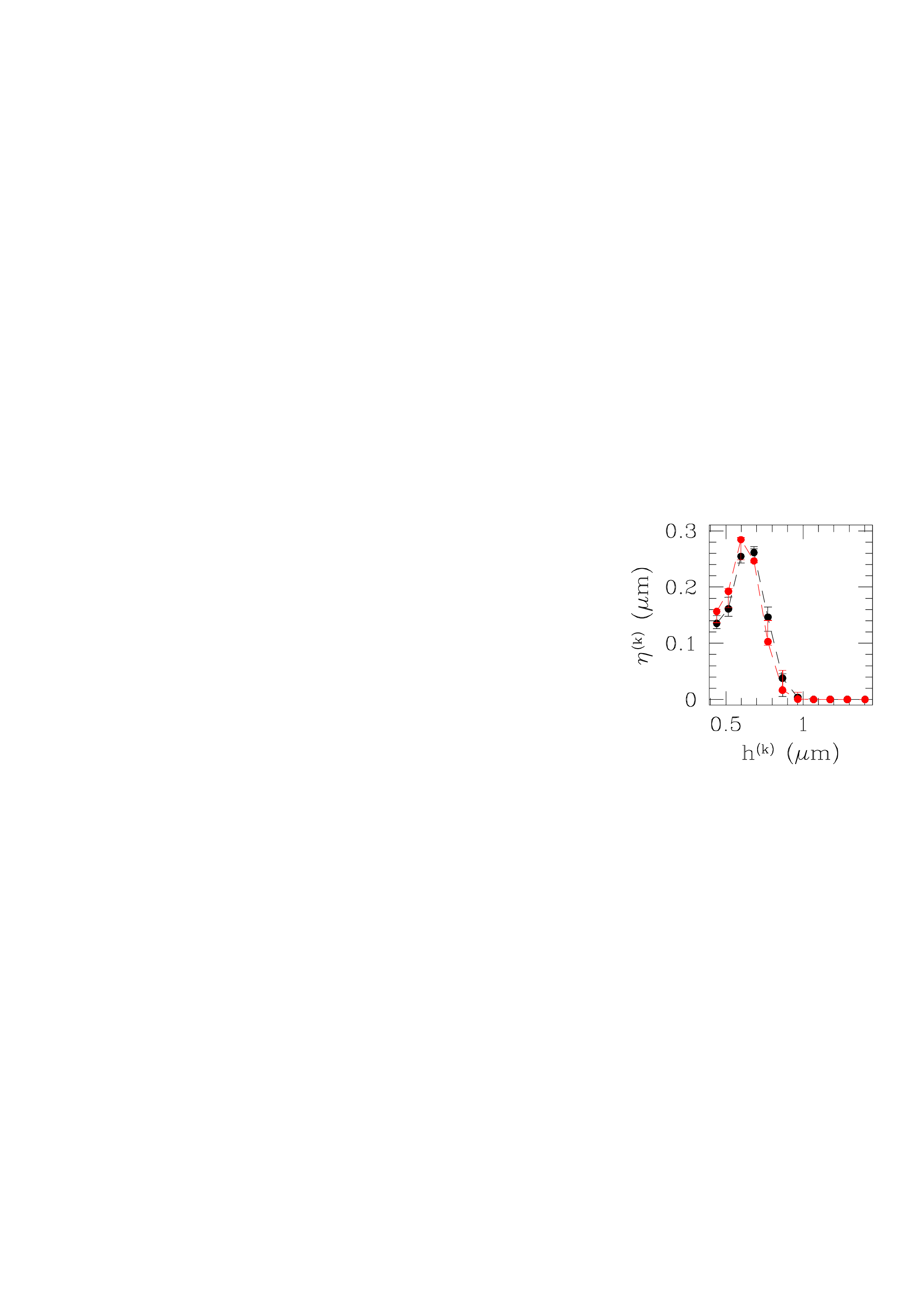} 
 }
     \end{center}
\vspace*{-8in}
     \begin{center}
  {
  \includegraphics[width=8cm, height=10.2cm]{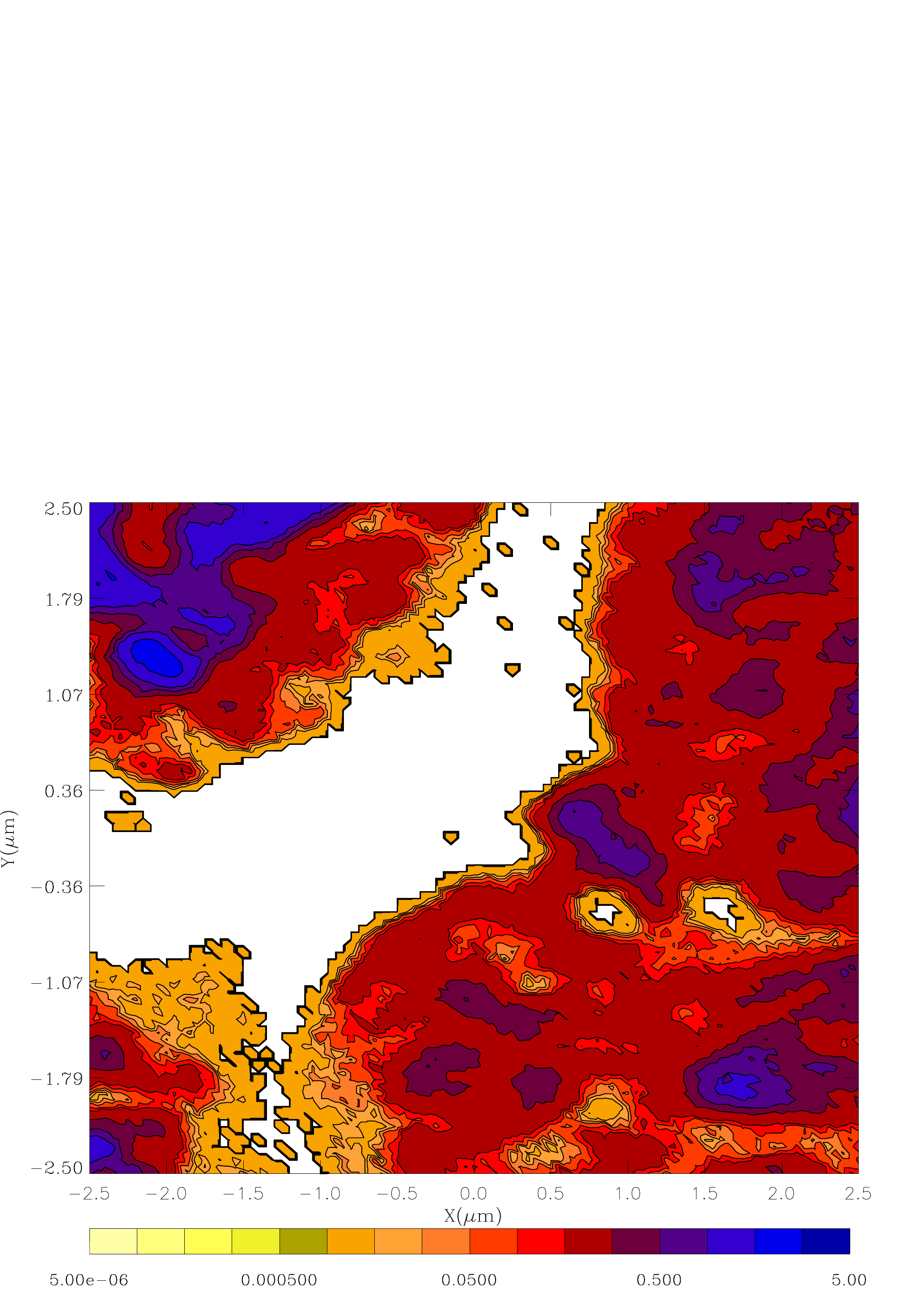}
 }
     \end{center}
\caption{\small{The top panel shows the spatially-averaged projection
    of convolution of learnt material density and correction function,
    onto the image space, learnt using image data $D_1$ in filled
    contours. The real image data $D_1$ is overlaid in black solid
    contours. The lower panel shows the correction functions learnt
    from data $D_1$ and $D_2$ in black and red respectively. The error
    bars in this plot represent the learnt 95$\%$ highest probability
    density region.}}
\label{fig:real_corr}
\end{figure*}

\begin{figure}[!t]
     \begin{center}
  {$\begin{array}{c}
       \includegraphics[width=8cm]{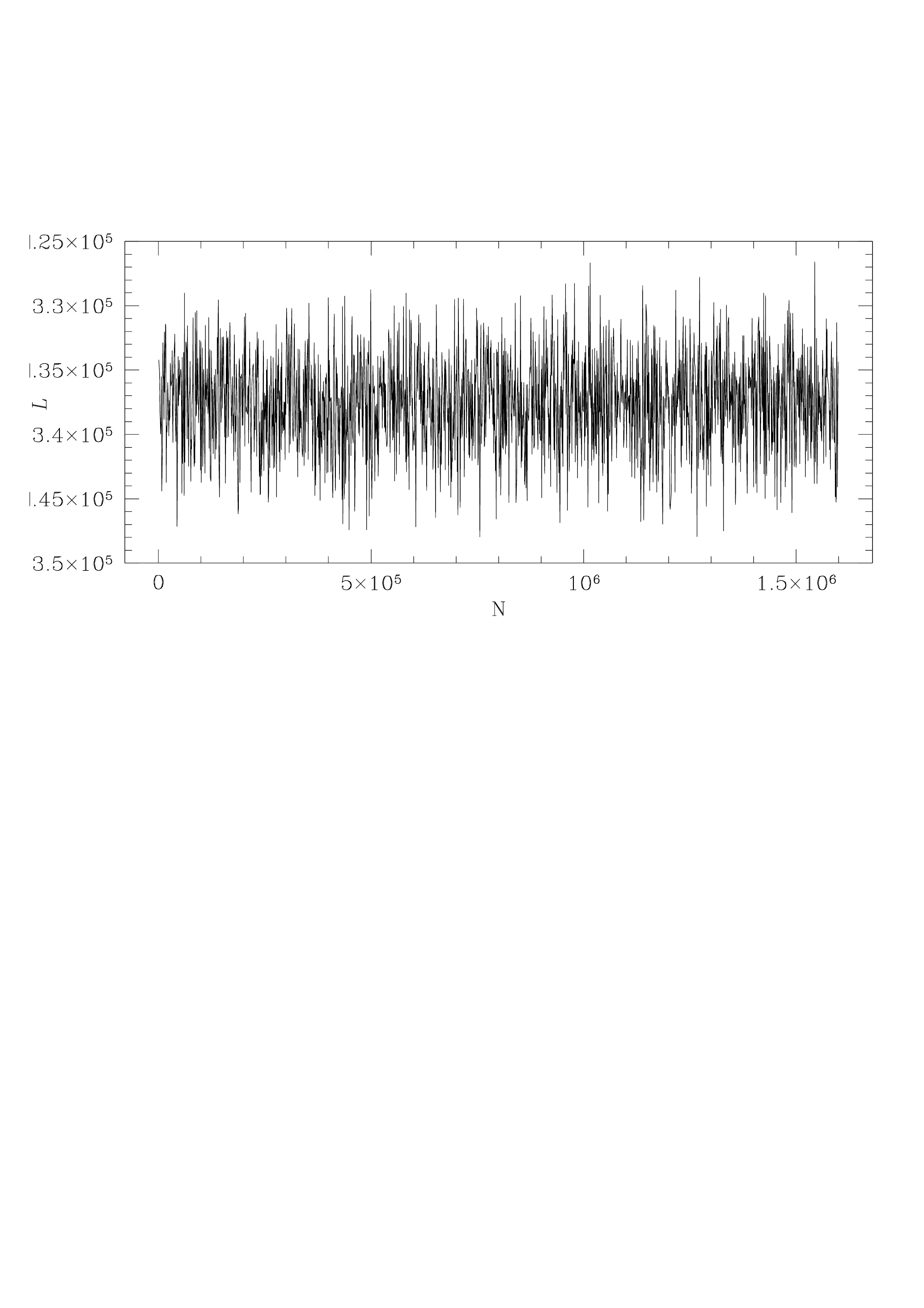} 
  \end{array}$}
     \end{center}
\vspace*{-2.4in}
\caption{\small{The trace of the joint posterior probability density
    of all uknown model parameters given data $D_1$, from a chain that
    is run with these data.}}
\label{fig:inf2}
\end{figure}


\begin{figure}
\vspace*{-1in}
     \begin{center}
  {
       \includegraphics[width=14cm]{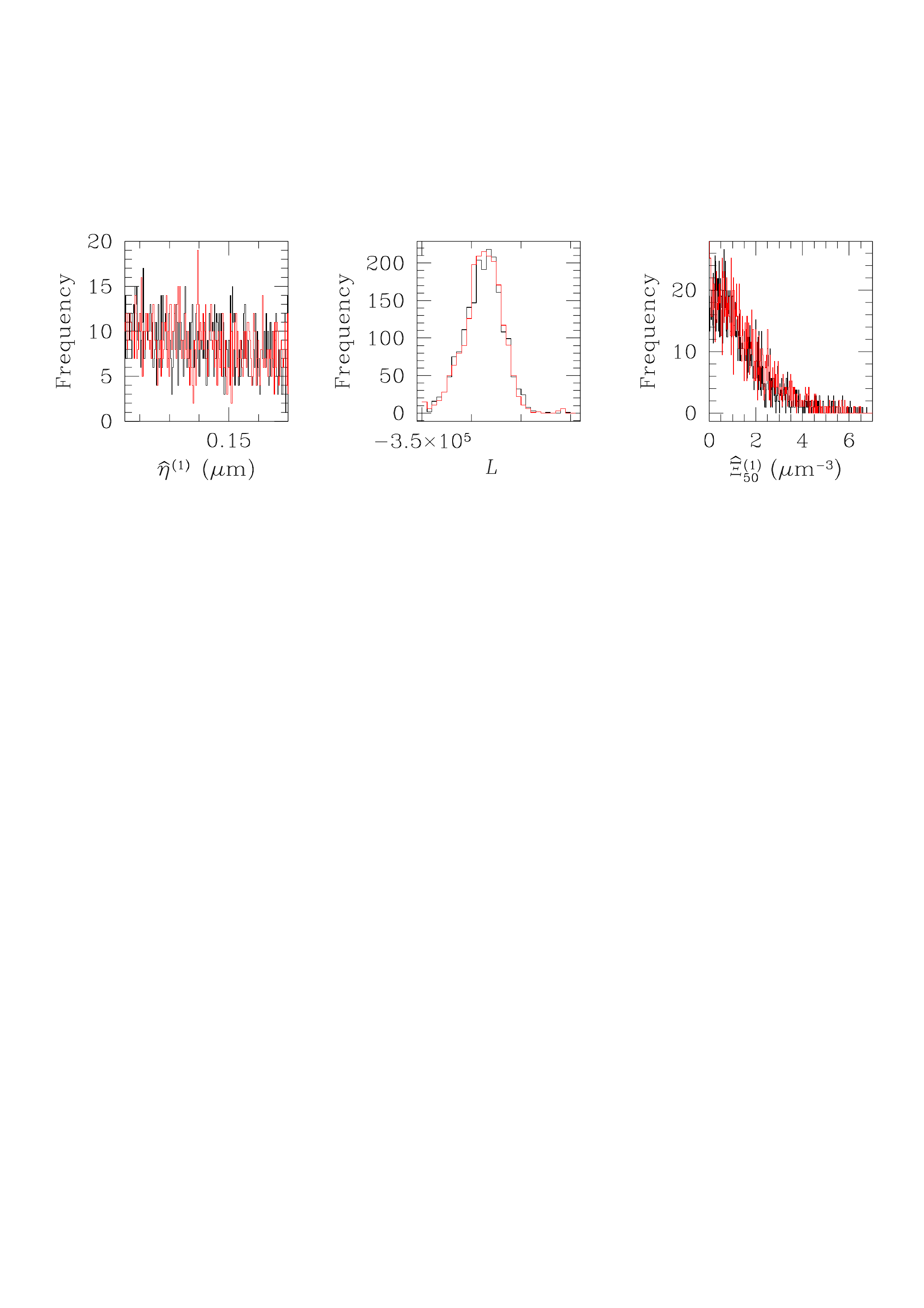}
  }
     \end{center}
\vspace*{-5.2in}
\caption{\small{The histograms of the learnt material density in the voxel
  marked by the viewing vector $\bv_i$, $i=50$ and the lowest value
  of $E$, ($E=\epsilon_1$), from two disjoint parts (middle and end) of the
  MCMC chain that is 1.6$\times$10$^6$ steps long, are depicted in red
  and black respectively, in the right panel. The histograms of the
  learnt correction function for $k=1$ , obtained for these two parts
  of the chain are displayed in the left panel. The histograms of the
  value of the likelihood in these two parts of the chain are plotted
  in red and black in the middle panel. The chain was run with real
  image data $D_1$ .} }
\label{fig:real_corr2}
\end{figure}

\section{Discussion}
\label{sec:discussions}
\noindent
In this work we have advanced a Bayesian methodology that performs
ditribution-free reconstruction of the material density of a material
sample and either a parametric or distribution-free reconstruction of
the microscopy correction function, given 2-D images of the system
taken in any kind of radiation that is generated inside the bulk of
the system as a consequence of the material being impinged upon by an
electron beam. This methodology is advanced as capable of learning the
unknowns even when the the inverse Radon transform is not stable; in
addition to noise in the data, such instability can also arise when
the underlying distribution of the image data is not continuous.  This
in turn arises from a material density function that is heterogeneous
and is marked by a dense or sparse modal structure, characterised by
abruptly declining modal strengths. Given such sharp density contrasts
that typify real-life material samples (see Figure~\ref{fig:real_den}
and Figure~\ref{fig:colour}), mixture models cannot work
satisfactorially in this situation. Also, the density field may not be
necessarily convex in real-life material samples. Crucially, this
inverse problem connects the measured 2-D radiation in any pixel, with
multiple sequential projections of the convolution of the unknown
density and the unknown kernel. The novelty of our solution to this
harder-than-usual inverse problem includes the imaging of the system
at distinct beam energies, adoption of a fully discretised model,
identifying priors on the sparsity of the density and borrowing existing
knowledge from the domain of application. In this application in which
images are taken with electron microscopes, the kernel is the
microscopy correction function, the shapes of which are motivated by
the microscopy literature. In particular, there is deterministic
information available in the literature about the correction at the
surface, allowing for identifiability of the global scales
of the unknown material density and correction function. 

From the point of view of 3-D structure modelling using images taken
with bulk microscopic imaging techniques\footnotemark, (such as
Scanning Electron Microscopy, Electron Probe Microscopy), our aim here
supercedes mere identification of the geometrical distribution of the
material i.e. the microstructure. In fact, we aim to estimate the very
material density at chosen points inside the bulk of the material
sample. Conventionally, Monte Carlo simulation studies of
microstructure are undertaken; convolution of such simulated
microstructure, with a chosen luminosity density function is then
advanced as a model for the density. We advance a methodology that is
a major improvement upon this. One key advantage of our approach is
that estimates of the 3-D material density are derived from
non-invasive and nondestructive bulk imaging techniques. This feature
sets our approach beyond standard methodology that typically relies on
experimental designs involving the etching away of layers of the
sample material at specific depths. Though the microstructure, at this
depth, can in principle be identified this way, a measure of
$\rho(x,y,z)$ is not achievable. That too, only constraints on the
microstructure at such specific depths are possible this way, and
interpolation between the layers - based on assumptions about the
linearity of the microstructure distribution - are questionable in
complex real-life material samples. Of course, such a procedure also
damages the sample in the process. Thus, the scope of the
non-destructive methodology that we advance is superior.  An added
flexibility of our model is that it allows for the learning of the
correction function from the image data.

The expansion of the information is made possible in this work by
suggesting multiple images of the same sample taken with radiation
characterised by different values of the parameter $E$ that controls
the sub-surface depth from which information is carried, to result in
the image. Given the differential penetration depths at different
values of $E$, the images taken in this way are realisations of the
3-D structure of the sample to different depths. While a number of
attempts at density modelling that use multiple viewing angles have
been reported in the literature, the imaging of the system at
different $E$ is less common. Logistical advantages of imaging with
SEM in this way exist over the conventionally undertaken multi-angle
imaging. One example of an imaging technique that may appear to share
similarity with our strategy of collecting information from contiguous
slices at successive values of the depth variable is volume imaging
with MRI, i.e. Magnetic Resonance Imaging
\ctp{hornak,prasad}. However, unlike in our imaging method, in this
technique, the image data at any such slice is obtained via multiple
angle imaging; the imaging parameter whose value is varied to procure
the image data is still the viewing angle, except this variation is
implemented at each individual slice. In our method, the parameter
that is varied is itself the sub-surface depth, achieved by varying
the electron beam energy $E$, since depth has a one-to-one
correspondence with $E$.


 

The method that we have discussed above is indeed developed to solve
an unconventionally difficult deprojection problem, but the method is
equally capable of estimating the unknown density in an integral
equation of the 1$^{st}$ kind - Fredholm or Volterra - and thereby be
applied towards density reconstruction in a wide
variety of contexts, when image synthesis is possible. The scope of
such applications is of course ample, including the identification and
quantification of the density of the metallic molecules that have
infused into a piece of polymer that is employed for charge storage
purposes, degrading the quality of the device as a result, or the
learning of the density of a heterogeneous nanostructure leading to
increased understanding of lack of robustness of device behaviour, or
identification of the distribution of multiple phases of an alloy in
the depth of a metallic sample meant to be used in industry and even
estimation of density of luminous matter in astronomical objects,
viewed with telescopes, at different wavelengths. In fact, in the case
of self-emitting systems that are studied in the emitted particles,
the problem is simpler, (as in several other applications), since the
interaction-volume is not relevant and the integral reduces to the
(easier in general) Volterra integral equation of the first kind. In
either case, domain-specific details need to be invoked to attain
dimensionality reduction.

\section*{Acknowledgments}
We are thankful to Dr. John Aston, Dr. Sourabh Bhattacharya and
Prof. Jim Smith for their helpful suggestions. DC acknowledges the
Warwick Centre for Analytical Sciences fellowship. FR acknowledges
no conflict of interest with any ongoing research work at NVD.

\renewcommand\baselinestretch{1.}
\normalsize

\end{document}